\theoremstyle{plain}
\newtheorem{theorem}{Theorem}[section]
\newtheorem{proposition}[theorem]{Proposition}
\newtheorem{lemma}[theorem]{Lemma}
\newtheorem{corollary}[theorem]{Corollary}
\theoremstyle{definition}
\newtheorem{definition}[theorem]{Definition}
\newtheorem{example}[theorem]{Example}
\theoremstyle{plain}
\theoremstyle{definition}
\newcommand{\To}{\ensuremath{\Rightarrow}}
\newcommand{\pos}[2]{\ensuremath{{#1}_{|#2}}}
\newcommand{\cut}[2]{\ensuremath{{#1}_{\upharpoonright#2}}}
\newcommand{\ucut}[2]{\ensuremath{{#1}^{\upharpoonright#2}}}
\newcommand{\ceil}[1]{\ensuremath{\lceil#1\rceil}}
\newcommand{\Nbb}{\ensuremath{\mathbb{N}}}
\newcommand{\Pb}{\ensuremath{{\mathbb P}}}
\newcommand{\Sb}{\ensuremath{{\mathbb S}}}
\newcommand{\Ab}{\ensuremath{{\mathbb A}}}
\newcommand{\Bb}{\ensuremath{{\mathbb B}}}
\newcommand{\Pow}[1]{\ensuremath{\mathcal{P}(#1)}}
\newcommand{\la}{\ensuremath{\langle}}
\newcommand{\ra}{\ensuremath{\rangle}}
\newcommand{\Tb}{\ensuremath{{\mathbb T}}}
\newcommand{\D}{\ensuremath{{\mathcal D}}}
\newcommand{\Tc}{\ensuremath{{\mathcal T}}}
\newcommand{\Rmnum}[1]{\expandafter\@slowromancap\romannumeral #1@}
\newcommand{\qle}{\sqsubseteq}
\newcommand{\qge}{\sqsupseteq}
\newcommand{\join}{\bigvee}
\newcommand{\meet}{\bigwedge}
\newcommand{\id}{\mathrm{id}}
\newcommand{\op}{\mathrm{op}}
\newcommand{\Fix}{\mathrm{Fix}}
\newcommand{\Max}{\mathrm{Max}}
\newcommand{\Min}{\mathrm{Min}}
\newcommand{\On}{\mathrm{On}}
\newcommand{\da}{{\downarrow}}
\newcommand{\ua}{{\uparrow}}
\newcommand{\Two}{\mathbf{2}}
\newcommand{\ulift}[1]{#1{\uparrow}}
\newcommand{\cons}{\mathtt{cons}}
\newcommand{\even}{\mathtt{even}}
\newcommand{\tail}{\mathtt{tl}}
\newcommand{\head}{\mathtt{hd}}
\newcommand{\add}{\mathtt{add}}
\newcommand{\zip}{\mathtt{zip}}
\newcommand{\mul}{\mathtt{mul}}
\newcommand{\mer}{\mathtt{merge}}
\newcommand{\tcut}{\mathtt{cut}}
\newcommand{\sfr}{\mathfrak{s}}
\newcommand{\dfr}{\mathfrak{d}}
\DeclareMathOperator{\Coloneqq}{{\,::=\,}}
\title{Coinduction: an elementary approach}
\author{\L{}ukasz Czajka \medskip \\
  \small
    Institute of Informatics, University of Warsaw\\
  \small
    Banacha 2, 02-097 Warszawa, Poland\\
  \small
    lukaszcz@mimuw.edu.pl
}
\date{May 22, 2019}
\setlist{itemsep=0pt,topsep=\parsep}
\begin{document}
\maketitle

\begin{abstract}
  The main aim of this paper is to promote a certain style of doing
  coinductive proofs, similar to inductive proofs as commonly done by
  mathematicians. For this purpose we provide a reasonably direct
  justification for coinductive proofs written in this style, i.e.,
  converting a coinductive proof into a non-coinductive argument is
  purely a matter of routine. In this way, we provide an elementary
  explanation of how to interpret coinduction in set theory.
\end{abstract}

\newcommand{\IND}{\mathrm{IND}}
\newcommand{\COIND}{\mathrm{COIND}}

\tableofcontents

\section{Introduction}

In its basic and most common form, coinduction is a method for
reasoning about the greatest fixpoints of monotone functions
on~$\Pow{A}$ for some set~$A$. Induction in turn may be seen as a way
of reasoning about the least fixpoints of monotone functions.

Let $F : \Pow{A} \to \Pow{A}$ be monotone. By the Knaster-Tarski
fixpoint theorem, the least fixpoint~$\mu F$ and the greatest
fixpoint~$\nu F$ of~$F$ exist and may be characterized as
\[
\mu F = \bigcap \{ X \in \Pow{A} \mid F(X) \subseteq X \}
\]
\[
\nu F = \bigcup \{ X \in \Pow{A} \mid X \subseteq F(X) \}.
\]
This yields the following proof principles
\[
\infer[(\IND)]{\mu F \subseteq X}{F(X) \subseteq X}\quad
\infer[(\COIND)]{X \subseteq \nu F}{X \subseteq F(X)}
\]
where $X \in \Pow{A}$. The rule~$(\COIND)$ is commonly used as the
principle underlying coinductive proofs. However, this rule is
arguably sometimes inconvenient to apply directly. Ordinarily, when
doing inductive proofs mathematicians do not directly employ the dual
rule~$(\IND)$, explicitly providing the set~$X$ and
calculating~$F(X)$. Nor do they think in terms of~$(\IND)$. Instead,
they show an inductive step, using the inductive hypothesis with
parameters smaller in an appropriate sense. There is a common
understanding when an inductive proof is correct. In ordinary
mathematical practice, nobody bothers to argue each time that an
inductive proof is indeed a valid application of a specific formal
induction principle. Induction is well-understood, and it is
sufficient for everyone that an inductive proof may be formalized ``in
principle''.

In contrast to induction, coinduction is not so well-established and
usually not trusted in the same way. One aim of this paper is to
promote and provide a reasonably direct justification for a certain
style of doing coinductive proofs: showing a coinductive step using a
coinductive hypothesis. As such, the paper has a flavour of a tutorial
with more space devoted to examples than to mathematical results.

From the point of view of someone familiar with research in
coinduction, the results of Section~\ref{sec_coind_tech} are probably
not very novel. They are known ``in principle'' to people studying
coinduction. However, the author believes that there is a need to
present coinductive techniques in a way accessible to a broad
audience, giving simple criteria to verify the correctness of
coinductive proofs and corecursive definitions without being forced to
reformulate them too much to fit a specific formal principle. Our
style of writing coinductive proofs is similar to how such proofs are
presented in
e.g.~\cite{EndrullisPolonsky2011,BezemNakataUustalu2012,NakataUustalu2010,LeroyGrall2009,KozenSilva2017},
but we justify them by direct reduction to transfinite induction. This
seems to provide a more approachable correctness criterion for someone
not familiar with infinite proofs in type
theory~\cite{Coquand1993,Gimenez1994}. Our method for justifying
(non-guarded) corecursive definitions usually boils down to solving
some recursive equations in natural numbers. The coalgebraic approach
to coinduction~\cite{JacobsRutten2011,Rutten2000} is perhaps more
abstract and conceptually satisfying, but not so straightforward to
apply directly. Even the rule~$(\COIND)$ is rather inconvenient in
certain more complex situations.

While the present paper does contain some (relatively simple and
essentially known) mathematical results and rigorous theory, much of
the paper has the status of a ``meta-explanation'' which establishes
certain conventions and describes how to interpret and verify informal
coinductive proofs and corecursive definitions using the theory. This
is analogous to an explanation of how, e.g., the more sophisticated
informal inductive constructions may be encoded in~ZFC set theory. We
do not provide a formal system in which coinductive proofs written in
our style may be formalized directly, but only describe how to convert
them so as to eliminate coinduction, by elaborating them to make
explicit references to the results of our theory. Without a precisely
defined formal system, this description is necessarily
informal. However, we believe the translation is straightforward
enough, so that after properly understanding the present paper it
should be clear that coinduction may be eliminated in the described
manner. It should also become clear how to verify such informally
presented coinductive proofs. Again, the word ``clear'' is used here
in the same sense that it is ``clear'' that common informal
presentations of inductive proofs may in principle be formalized in
ZFC set theory.

\subsection{Related work}

Coinduction and corecursion are by no means new topics. We do not
attempt here to provide an extensive overview of existing
literature. We only mention the pioneering work of~Aczel on
non-well-founded set theory~\cite{Aczel1988}, the final coalgebra
theorem of Aczel and Mendler~\cite{AczelMendler1989}, the subsequent
work of Barr~\cite{Barr1993}, and the work of Rutten~\cite{Rutten2000}
providing coalgebraic foundations for coinduction. A historical
overview of coinduction may be found in~\cite{Sangiorgi2011}. An
elementary introduction to coinduction and bisimulation
is~\cite{Sangiorgi2012}. For a coalgebraic treatment see
e.g.~\cite{JacobsRutten2011,Rutten2000}.

Our approach in Section~\ref{sec_corecursion} is largely inspired by
the work of Sijtsma~\cite{Sijtsma1989} on productivity of streams, and
the subsequent work on sized
types~\cite{AbelPientka2016,AbelPientka2013,Abel2012,Abel2010,Barthe2008,HughesParetoSabry1996}. In
fact, the central Corollary~\ref{cor_unique_solution} is a
generalization of Theorem~32 from~\cite{Sijtsma1989}. In contrast to
the work on sized types, we are not interested in this paper in
providing a formal system, but in explaining corecursion semantically,
in terms of ordinary set theory. Related is also the work on
productivity of streams and infinite data
structures~\cite{Isihara2008,EndrullisGrabmayerHendriks2008,Endrullis2010,ZantemaRaffelsieper2010,EndrullisHendriksKlop2013,Buchholz2005,TelfordTurner1997},
and some of the examples in Section~\ref{sec_corecursion_examples} are
taken from the cited papers. Productivity was first mentioned by
Dijkstra~\cite{Dijkstra1980}. The
articles~\cite{Coquand1993,Gimenez1994} investigate guarded
corecursive definitions in type theory. The
chapters~\cite{BertotCasteran2004Chapter13,Chlipala2013Chapter5} are a
practical introduction to coinduction in~Coq. The
papers~\cite{Czajka2018,Czajka2014,Czajka2015a} were to a large extent
a motivation for writing the present paper and they contain many
non-trivial coinductive proofs written in the style promoted here. The
article~\cite{KozenSilva2017} has a similar aim to the present paper,
but its approach is quite different. Our style of presenting
coinductive proofs is similar to how such proofs are presented in
e.g.~\cite{EndrullisPolonsky2011,BezemNakataUustalu2012,NakataUustalu2010,LeroyGrall2009,KozenSilva2017}.

\section{A crash-course in coinduction}\label{sec_crash}

In this section we give an elementary explanation of the most common
coinductive techniques. This is generalised and elaborated in more
detail in Section~\ref{sec_coind_tech}. Some of the examples,
definitions and theorems from the present section are leater repeated
and/or generalised in Section~\ref{sec_coind_tech}. This section
strives to strike a balance between generality and ease of
understanding. The explanation given here treats only guarded
corecursive definitions and only guarded proofs, but in practice this
suffices in many cases.

\subsection{Infinite terms and corecursion}

In this section we define many-sorted coterms. We also explain and
justify guarded corecursion using elementary notions.

\begin{definition}\label{def_coterms_a}
  A \emph{many-sorted algebraic signature}
  $\Sigma=\la\Sigma_s,\Sigma_c\ra$ consists of a collection of
  \emph{sort symbols}~$\Sigma_s=\{s_i\}_{i\in I}$ and a collection of
  \emph{constructors} $\Sigma_c=\{c_j\}_{j\in J}$. Each
  constructor~$c$ has an associated \emph{type}
  $\tau(c)=(s_1,\ldots,s_n;s)$ where $s_1,\ldots,s_n,s\in\Sigma_s$. If
  $\tau(c)=(;s)$ then~$c$ is a \emph{constant} of sort~$s$. In what
  follows we use $\Sigma,\Sigma'$, etc., for many-sorted algebraic
  signatures, $s,s'$, etc., for sort symbols, and $f,g,c,d$, etc., for
  constructors.

  The set~$\Tc^\infty(\Sigma)$, or just~$\Tc(\Sigma)$, of
  \emph{coterms over~$\Sigma$} is the set of all finite and infinite
  terms over~$\Sigma$, i.e., all finite and infinite labelled trees
  with labels of nodes specified by the constructors of~$\Sigma$ such
  that the types of labels of nodes agree. More precisely, a term~$t$
  over~$\Sigma$ is a partial function from~$\Nbb^*$ to~$\Sigma_c$
  satisfying:
  \begin{itemize}
  \item $t(\epsilon)\da$, and
  \item if $t(p) = c \in \Sigma_c$ with $\tau(c)=(s_1,\ldots,s_n;s)$
    then
    \begin{itemize}
    \item $t(pi)=d \in \Sigma_c$ with
      $\tau(d)=(s_1',\ldots,s_{m_i}';s_i)$ for $i < n$,
    \item $t(pi)\ua$ for $i \ge n$,
    \end{itemize}
  \item if $t(p)\ua$ then~$t(pi)\ua$ for every $i\in\Nbb$,
  \end{itemize}
  where $t(p)\ua$ means that~$t(p)$ is undefined, $t(p)\da$ means
  that~$t(p)$ is defined, and~$\epsilon \in \Nbb^*$ is the empty
  string. We use obvious notations for coterms, e.g., $f(g(t,s),c)$
  when $c,f,g \in \Sigma_c$ and $t,s \in \Tc(\Sigma)$, and the types
  agree. We say that a term~$t$ \emph{is of sort~$s$} if $t(\epsilon)$
  is a constructor of type $(s_1,\ldots,s_n;s)$ for some
  $s_1,\ldots,s_n\in\Sigma_s$. By~$\Tc_s(\Sigma)$ we denote the set of
  all terms of sort~$s$ from~$\Tc(\Sigma)$.
\end{definition}

\begin{example}
  Let $A$ be a set. Let $\Sigma$ consist of two sorts~$\sfr$
  and~$\dfr$, one constructor~$\cons$ of type $(\dfr,\sfr;\sfr)$ and a
  distinct constant $a \in A$ of sort~$\dfr$ for each element
  of~$A$. Then~$\Tc_\sfr(\Sigma)$ is the set of streams over~$A$. We
  also write $\Tc_\sfr(\Sigma) = A^\omega$ and $\Tc_\dfr(\Sigma) =
  A$. Instead of $\cons(a,t)$ we usually write $a : t$, and we assume
  that~$:$ associates to the right, e.g., $x : y : t$ is $x : (y :
  t)$. We also use the notation $x : t$ to denote the application of
  the constructor for~$\cons$ to~$x$ and~$t$. We define the functions
  $\head : A^\omega \to A$ and $\tail : A^\omega \to A^\omega$ by
  \[
  \begin{array}{rcl}
    \head(a : t) &=& a \\
    \tail(a : t) &=& t
  \end{array}
  \]
  Specifications of many-sorted signatures may be conveniently given
  by coinductively interpreted grammars. For instance, the
  set~$A^\omega$ of streams over a set~$A$ could be specified by
  writing
  \[
  A^\omega \Coloneqq \cons(A, A^\omega).
  \]
  A more interesting example is that of finite and infinite binary trees
  with nodes labelled either with~$a$ or~$b$, and leaves labelled with
  one of the elements of a set~$V$:
  \[
  T \Coloneqq V \parallel a(T, T) \parallel b(T, T).
  \]
  As such specifications are not intended to be formal entities but
  only convenient visual means for describing sets of coterms, we will
  not define them precisely. It is always clear what many-sorted
  signature is meant.
\end{example}

For the sake of brevity we often use $\Tc = \Tc(\Sigma)$ and $\Tc_s =
\Tc_s(\Sigma)$, i.e., we omit the signature~$\Sigma$ when clear from
the context or irrelevant.

\begin{definition}
  The class of \emph{constructor-guarded} functions is defined
  inductively as the class of all functions $h : \Tc_s^m \to \Tc_{s'}$
  (for arbitrary $m \in \Nbb$, $s,s' \in \Sigma_s$) such that there
  are a constructor~$c$ of type $(s_1,\ldots,s_{k};s')$ and functions
  $u_i : \Tc_s^m \to \Tc_{s_i}$ ($i=1,\ldots,k$) such that
  \[
  h(y_1,\ldots,y_m) =
  c(u_1(y_1,\ldots,y_m),\ldots,u_k(y_1,\ldots,y_m))
  \]
  for all $y_1,\ldots,y_m \in \Tc_s$, and for each $i=1,\ldots,k$ one
  of the following holds:
  \begin{itemize}
  \item $u_i$ is constructor-guarded, or
  \item $u_i$ is a constant function, or
  \item $u_i$ is a projection function, i.e., $s_i = s$ and there is
    $1\le j \le m$ with $u_i(y_1,\ldots,y_m) = y_j$ for all
    $y_1,\ldots,y_m \in \Tc_s$.
  \end{itemize}
  Let~$S$ be a set. A function $h : S \times \Tc_s^m \to \Tc_{s'}$ is
  constructor-guarded if for every $x \in S$ the function $h_x :
  \Tc_s^m \to \Tc_{s'}$ defined by $h_x(y_1,\ldots,y_m) =
  h(x,y_1,\ldots,y_m)$ is constructor-guarded. A function $f : S \to
  \Tc_s$ is defined by \emph{guarded corecursion} from $h : S \times
  \Tc_s^m \to \Tc_s$ and $g_i : S \to S$ ($i=1,\ldots,m$) if~$h$ is
  constructor-guarded and~$f$ satisfies
  \[
  f(x) = h(x, f(g_1(x)), \ldots, f(g_m(x)))
  \]
  for all $x \in S$.
\end{definition}

The following theorem is folklore in the coalgebra community. We
sketch an elementary proof. In fact, each set of many-sorted coterms
is a final coalgebra of an appropriate set-functor. Then
Theorem~\ref{thm_corecursion_a} follows from more general
principles. See e.g.~\cite{JacobsRutten2011,Rutten2000} for a more
general coalgebraic explanation of corecursion.

\begin{theorem}\label{thm_corecursion_a}
  For any constructor-guarded function $h : S \times \Tc_s^m \to
  \Tc_{s}$ and any $g_i : S \to S$ ($i=1,\ldots,m$), there exists a
  unique function $f : S \to \Tc_s$ defined by guarded corecursion
  from~$h$ and~$g_1,\ldots,g_m$.
\end{theorem}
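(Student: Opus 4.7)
The plan is to construct $f$ directly at the level of positions, reducing the corecursive equation to a well-founded recursion on the length of $p \in \Nbb^*$. First I unpack guardedness: for each fixed $x \in S$ the function $h_x(y_1,\ldots,y_m) := h(x,y_1,\ldots,y_m)$ is constructor-guarded, so by the inductive clause defining this notion it admits a decomposition $h_x(y_1,\ldots,y_m) = C_x[y_1,\ldots,y_m]$, where $C_x$ is a \emph{finite} tree of constructors whose leaves are labelled either by constant coterms $t \in \Tc$ (which may themselves be infinite) or by projection markers $\pi_j$ with $1 \le j \le m$, and every $\pi_j$-leaf lies at depth $\ge 1$. This decomposition is obtained by unrolling the nested constructor-guarded clauses: the outer constructor gives the root, and every child $u_i(y_1,\ldots,y_m)$ is either a constant, a projection, or itself constructor-guarded and hence further decomposable; since constructor-guardedness is defined by well-founded induction, the unrolling terminates.

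Next I define $f(x)(p)$ by strong recursion on $|p|$, with $x$ ranging freely. I trace the position $p$ through $C_x$: if $p$ addresses a node inside the finite constructor skeleton of $C_x$, I set $f(x)(p)$ equal to the constructor at that node; if $p = qr$ with $q$ addressing a constant leaf $t$ of $C_x$, I set $f(x)(p) := t(r)$; and if $p = qr$ with $q$ addressing a $\pi_j$-leaf of $C_x$, I set $f(x)(p) := f(g_j(x))(r)$. Because every $\pi_j$-leaf sits at depth $|q| \ge 1$, the last clause invokes $f$ on a strictly shorter position, so the recursion is well founded. The three well-formedness conditions from Definition~\ref{def_coterms_a} (defined root, type-agreement with the parent, downwards propagation of undefinedness) transfer to $f(x)$ from $C_x$, from the constant leaves, and from the previously built values $f(g_j(x))$.

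Verification of the fixed-point equation $f(x) = C_x[f(g_1(x)),\ldots,f(g_m(x))] = h(x,f(g_1(x)),\ldots,f(g_m(x)))$ is then a position-by-position computation using the same case split. Uniqueness follows from the same analysis: if $f' : S \to \Tc_s$ also satisfies the defining equation, one shows $f'(x)(p) = f(x)(p)$ by strong induction on $|p|$, reading off the skeleton of $C_x$ and invoking the induction hypothesis at every $\pi_j$-leaf of depth $|q| \ge 1$ to obtain $f'(g_j(x))(r) = f(g_j(x))(r)$ for $p = qr$.

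The only non-routine point is the structural lemma extracting the skeleton $C_x$ from an arbitrary constructor-guarded $h_x$, together with the fact that every projection sits at depth $\ge 1$; this is where the inductive definition of constructor-guardedness is used essentially, as it forces a constructor strictly above every use of a $y_j$. Once that extraction is in place, the construction of $f$, the verification of the corecursive equation, and the uniqueness argument are three parallel strong inductions on $|p|$.
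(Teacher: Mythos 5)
Your proof is correct, and it reaches the conclusion by a technically different route from the paper's, although both rest on the same underlying observation: guardedness forces every projection onto an argument $y_j$ to sit strictly below a constructor, so the value of $f(x)$ at depth $d$ is determined by values of $f$ at depths $<d$. The paper iterates the functional: starting from an arbitrary $f_0$ it sets $f_{n+1}(x) = h(x, f_n(g_1(x)),\ldots,f_n(g_m(x)))$, proves the stabilization lemma $f_{n+1}(x)(p) = f_n(x)(p)$ for $|p| < n$ by induction on $n$, and then takes the diagonal $f(x)(p) = f_{|p|+1}(x)(p)$; existence, the fixed-point equation, and uniqueness (as independence from $f_0$) are all read off from that one lemma. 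You instead make the guardedness explicit as a finite constructor skeleton $C_x$ with constant leaves and projection leaves at depth $\ge 1$, and define $f(x)(p)$ directly by course-of-values recursion on $|p|$, with no auxiliary sequence of approximants. Your route buys a self-contained, manifestly well-founded construction and a cleaner uniqueness argument; its cost is the skeleton-extraction lemma, which you correctly identify as the one non-routine ingredient --- the paper smuggles exactly the same content into the inductive step of its stabilization lemma, where it is only illustrated ``for instance'' on one concrete shape. One point you compress (as does the paper, which calls it ``routine'') is the verification that $f(x)$ is a well-formed coterm: in your setup this is a further strong induction on $|p|$, in which the local coherence condition at a position $p = qr$ beyond a $\pi_j$-leaf $q$ reduces to the same condition for $f(g_j(x))$ at the position $r$ of length $|p| - |q| \le |p| - 1$, so the induction does close.
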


\begin{proof}
  Let $f_0 : S \to \Tc_s$ be an arbitrary function. Define~$f_{n+1}$
  for $n \in \Nbb$ by $f_{n+1}(x) = h(x, f_n(g_1(x)), \ldots,
  f_n(g_m(x)))$. Using the fact that~$h$ is constructor-guarded, one
  shows by induction on~$n$ that:
  \begin{itemize}
  \item[$(\star)$] $f_{n+1}(x)(p) = f_n(x)(p)$ for $x \in S$ and $p
    \in \Nbb^*$ with $|p|<n$
  \end{itemize}
  where~$|p|$ denotes the length of~$p$. Indeed, the base is
  obvious. We show the inductive step. Let $x \in S$. Because~$h$ is
  constructor-guarded, we have for instance
  \[
  f_{n+2}(x) = h(x, f_{n+1}(g_1(x)), \ldots, f_{n+1}(g_m(x))) =
  c_1(c_2, c_3(w, f_{n+1}(g_1(x))))
  \]
  Let $p \in \Nbb^*$ with $|p| \le n$. The only interesting case is
  when $p=11p'$, i.e., when~$p$ points inside~$f_{n+1}(g_1(x))$. But
  then $|p'| < |p| \le n$, so by the inductive hypothesis
  $f_{n+1}(g_1(x))(p') = f_n(g_1(x))(p')$. Thus $f_{n+2}(x)(p) =
  f_{n+1}(g_1(x))(p') = f_n(g_1(x))(p') = f_{n+1}(x)(p)$.

  Now we define $f : S \to \Tc_s$ by
  \[
  f(x)(p) = f_{|p|+1}(x)(p)
  \]
  for $x \in S$, $p \in \Nbb^*$. Using~$(\star)$ it is routine to
  check that~$f(x)$ is a well-defined coterm for each $x \in S$. To
  show that~$f : S \to \Tc_s$ is defined by guarded corecursion
  from~$h$ and~$g_1,\ldots,g_m$, using~$(\star)$ one shows by
  induction on the length of~$p \in \Nbb^*$ that for any $x \in S$:
  \[
  f(x)(p) = h(x, f(g_1(x)),\ldots,f(g_m(x)))(p).
  \]
  To prove that~$f$ is unique it suffices to show that it does not
  depend on~$f_0$. For this purpose, using~$(\star)$ one shows by
  induction on the length of~$p \in \Nbb^*$ that~$f(x)(p)$ does not
  depend on~$f_0$ for any $x \in S$.
\end{proof}

We shall often use the above theorem implicitly, just mentioning that
some equations define a function by guarded corecursion.

\begin{example}\label{ex_corec}
  Consider the equation
  \[
  \even(x : y : t) = x : \even(t)
  \]
  It may be rewritten as
  \[
  \even(t) = \head(t) : \even(\tail(\tail(t)))
  \]
  So $\even : A^\omega \to A^\omega$ is defined by guarded corecursion
  from $h : A^\omega \times A^\omega \to A^\omega$ given by
  \[
  h(t,t') = \head(t) : t'
  \]
  and $g : A^\omega \to A^\omega$ given by
  \[
  g(t) = \tail(\tail(t))
  \]
  By Theorem~\ref{thm_corecursion_a} there is a unique function $\even :
  A^\omega \to A^\omega$ satisfying the original equation.

  Another example of a function defined by guarded corecursion is
  $\zip : A^\omega \times A^\omega \to A^\omega$:
  \[
  \zip(x : t, s) = x : \zip(s, t)
  \]
  The following function $\mer : \Nbb^\omega \times \Nbb^\omega \to
  \Nbb^\omega$ is also defined by guarded corecursion:
  \[
  \mer(x : t_1, y : t_2) =
  \left\{
    \begin{array}{cl}
      x : \mer(t_1, y : t_2) & \text{ if } x \le y \\
      y : \mer(x : t_1, t_2) & \text{ otherwise }
    \end{array}
  \right.
 \]
\end{example}

\subsection{Coinduction}

In this section\footnote{This section is largely based
  on~\cite[Sections~2,3]{Czajka2018}
  and~\cite[Section~2]{Czajka2015a}.} we give a brief explanation of a
certain style of writing coinductive proofs. Our presentation of
coinductive proofs is similar to
e.g.~\cite{EndrullisPolonsky2011,BezemNakataUustalu2012,NakataUustalu2010,LeroyGrall2009,KozenSilva2017}.

There are many ways in which coinductive proofs written in our style
can be justified. With enough patience one could, in principle,
reformulate all proofs to directly employ the usual coinduction
principle in set theory based on the Knaster-Tarski fixpoint
theorem~\cite{Sangiorgi2012}. Whenever proofs and corecursive
definitions are guarded, one could formalize them in a proof assistant
based on type theory with a syntactic guardedness check, e.g., in
Coq~\cite{Coquand1993,Gimenez1994}. Non-guarded proofs can be
formalized in recent versions of Agda with sized
types~\cite{AbelPientka2013,Abel2012,Abel2010,Barthe2008}. One may
also use the coinduction principle
from~\cite{KozenSilva2017}. Finally, one can justify coinductive
proofs by indicating how to interpret them in ordinary set theory,
which is what we do in this section.

The purpose of this section is to explain how to justify coinductive
proofs by reducing coinduction to transfinite induction. The present
section does not provide a formal coinduction proof principle as such,
but only indicates how one could elaborate the proofs so as to
eliminate the use of coinduction. Theorem~\ref{thm_coinduction}
provides a formal coinduction principle, though some reformulation is
still needed to use it directly. The status of the present section is
that of a ``meta-explanation'', analogous to an explanation of how,
e.g., the informal presentations of inductive constructions found in
the literature may be encoded in ZFC set theory.

\begin{example}\label{ex_1}
  Let~$T$ be the set of all finite and infinite terms defined
  coinductively by
  \[
  T \Coloneqq V \parallel A(T) \parallel B(T, T)
  \]
  where~$V$ is a countable set of variables, and~$A$, $B$ are
  constructors. By $x,y,\ldots$ we denote variables, and by
  $t,s,\ldots$ we denote elements of~$T$. Define a binary
  relation~$\to$ on~$T$ coinductively by the following rules.
  \[
  \infer=[(1)]{x \to x}{} \quad
  \infer=[(2)]{A(t) \to A(t')}{t \to t'} \quad
  \infer=[(3)]{B(s,t) \to B(s',t')}{s \to s' & t \to t'} \quad
  \infer=[(4)]{A(t) \to B(t',t')}{t\to t'}
  \]

  Formally, the relation~${\to}$ is the greatest fixpoint of a
  monotone function
  \[
  F : \Pow{T \times T} \to \Pow{T \times T}
  \]
  defined by
  \[
  F(R) = \left\{ \la t_1, t_2 \ra \mid \exists_{x \in V}(t_1 \equiv
    t_2 \equiv x) \lor \exists_{t,t'\in T}(t_1 \equiv A(t) \land t_2
    \equiv A(t') \land R(t,t')) \lor \ldots \right\}.
  \]

  Alternatively, using the Knaster-Tarski fixpoint theorem, the
  relation~$\to$ may be characterised as the greatest binary relation
  on~$T$ (i.e. the greatest subset of $T\times T$ w.r.t.~set
  inclusion) such that ${\to} \subseteq F({\to})$, i.e., such that for
  every $t_1,t_2 \in T$ with $t_1 \to t_2$ one of the following holds:
  \begin{enumerate}
  \item $t_1 \equiv t_2 \equiv x$ for some variable $x \in V$,
  \item $t_1 \equiv A(t)$, $t_2 \equiv A(t')$ with $t \to t'$,
  \item $t_1 \equiv B(s,t)$, $t_2 \equiv B(s',t')$ with $s \to s'$ and
    $t \to t'$,
  \item $t_1 \equiv A(t)$, $t_2 \equiv B(t',t')$ with $t \to t'$.
  \end{enumerate}

  Yet another way to think about~$\to$ is that $t_1 \to t_2$ holds if
  and only if there exists a \emph{potentially infinite} derivation
  tree of $t_1 \to t_2$ built using the rules~$(1)-(4)$.

  The rules~$(1)-(4)$ could also be interpreted inductively to yield
  the least fixpoint of~$F$. This is the conventional interpretation,
  and it is indicated with a single line in each rule separating
  premises from the conclusion. A coinductive interpretation is
  indicated with double lines.

  The greatest fixpoint~$\to$ of~$F$ may be obtained by transfinitely
  iterating~$F$ starting with~$T \times T$. More precisely, define an
  ordinal-indexed sequence~$(\to^\alpha)_\alpha$ by:
  \begin{itemize}
  \item $\to^0 = T \times T$,
  \item $\to^{\alpha+1} = F(\to^\alpha)$,
  \item $\to^\lambda = \bigcap_{\alpha<\lambda} \to^\alpha$ for a limit
    ordinal~$\lambda$.
  \end{itemize}
  Then there exists an ordinal~$\zeta$ such that ${\to} =
  {\to^\zeta}$. The least such ordinal is called the \emph{closure
    ordinal}. Note also that ${\to^\alpha} \subseteq {\to^\beta}$ for
  $\alpha \ge \beta$ (we often use this fact implicitly). See
  Section~\ref{sec_prelim} below. The relation~$\to^\alpha$ is called
  the \emph{$\alpha$-approximant} of~$\to$, or the \emph{approximant
    of~$\to$ at stage~$\alpha$}. If $t \to^\alpha s$ then we say that
  $t \to s$ (holds) at (stage)~$\alpha$. Note that the
  $\alpha$-approximants depend on a particular definition of~$\to$
  (i.e.~on the function~$F$), not solely on the relation~$\to$
  itself. We use~$R^\alpha$ for the $\alpha$-approximant of the
  relation~$R$.

  It is instructive to note that the coinductive rules for~$\to$ may
  also be interpreted as giving rules for the $\alpha+1$-approximants,
  for any ordinal~$\alpha$.
  \[
  \infer[(1)]{x \to^{\alpha+1} x}{}\quad
  \infer[(2)]{A(t) \to^{\alpha+1} A(t')}{
    t \to^\alpha t'}\quad
  \infer[(3)]{B(s,t) \to^{\alpha+1} B(s',t')}{
    s \to^\alpha s' & t \to^\alpha t'}\quad
  \infer[(4)]{A(t) \to^{\alpha+1} B(t',t')}{
    t\to^\alpha t'}
  \]

  Usually, the closure ordinal for the definition of a coinductive
  relation is~$\omega$. In general, however, it is not difficult to come
  up with a coinductive definition whose closure ordinal is greater
  than~$\omega$. For instance, consider the relation $R \subseteq \Nbb
  \cup \{\infty\}$ defined coinductively by the following two rules.
  \[
  \infer={R(n+1)}{R(n) & n \in \Nbb} \quad\quad
  \infer={R(\infty)}{\exists n \in \Nbb . R(n)}
  \]
  We have $R = \emptyset$, $R^n = \{m \in \Nbb \mid m \ge n\} \cup
  \{\infty\}$ for $n \in \Nbb$, $R^\omega = \{\infty\}$, and only
  $R^{\omega+1}=\emptyset$. Thus the closure ordinal of this
  definition is $\omega+1$.
\end{example}

Usually, we are interested in proving by coinduction statements of the
form $\psi(R_1,\ldots,R_m)$ where
\[
\psi(X_1,\ldots,X_m) \equiv \forall x_1 \ldots x_n . \varphi(\vec{x})
\to X_1(g_1(\vec{x}),\ldots,g_k(\vec{x})) \land \ldots \land
X_m(g_1(\vec{x}),\ldots,g_k(\vec{x})).
\]
and $R_1,\ldots,R_m$ are coinductive relations on~$T$, i.e, relations
defined as the greatest fixpoints of some monotone functions on the
powerset of an appropriate cartesian product of~$T$, and
$\psi(R_1,\ldots,R_m)$ is~$\psi(X_1,\ldots,X_m)$ with~$R_i$
substituted for~$X_i$. Statements with an existential quantifier may
be reduced to statements of this form by skolemizing, as explained in
Example~\ref{ex_skolem} below.

Here $X_1,\ldots,X_m$ are meta-variables for which relations on~$T$
may be substituted. In the statement~$\varphi(\vec{x})$ only
$x_1,\ldots,x_n$ occur free. The meta-variables $X_1,\ldots,X_m$
\emph{are not allowed to occur} in~$\varphi(\vec{x})$. In general, we
abbreviate $x_1,\ldots,x_n$ with~$\vec{x}$. The
symbols~$g_1,\ldots,g_k$ denote some functions of~$\vec{x}$.

To prove~$\psi(R_1,\ldots,R_m)$ it suffices to show by transfinite
induction that $\psi(R_1^\alpha,\ldots,R_m^\alpha)$ holds for each
ordinal~$\alpha \le \zeta$, where~$R_i^\alpha$ is the
$\alpha$-approximant of~$R_i$. It is an easy exercise to check that
because of the special form of~$\psi$ (in particular because~$\varphi$
does not contain~$X_1,\ldots,X_m$) and the fact that each~$R_i^0$ is
the full relation, the base case~$\alpha=0$ and the case of~$\alpha$ a
limit ordinal hold. They hold for \emph{any}~$\psi$ of the above form,
\emph{irrespective} of $\varphi,R_1,\ldots,R_m$. Note
that~$\varphi(\vec{x})$ is the same in
all~$\psi(R_1^\alpha,\ldots,R_m^\alpha)$ for any~$\alpha$, i.e., it
does not refer to the $\alpha$-approximants or the
ordinal~$\alpha$. Hence it remains to show the inductive step
for~$\alpha$ a successor ordinal. It turns out that a coinductive
proof of~$\psi$ may be interpreted as a proof of this inductive step
for a successor ordinal, with the ordinals left implicit and the
phrase ``coinductive hypothesis'' used instead of ``inductive
hypothesis''.

\begin{example}
  On terms from~$T$ (see Example~\ref{ex_1}) we define the operation
  of substitution by guarded corecursion.
  \[
  \begin{array}{rclcrcl}
    y[t/x] &=& y \quad\text{ if } x \ne y &\quad&
    (A(s))[t/x] &=& A(s[t/x]) \\
    x[t/x] &=& t &\quad&
    (B(s_1,s_2))[t/x] &=& B(s_1[t/x],s_2[t/x])
  \end{array}
  \]
  We show by coinduction: if $s \to s'$ and $t \to t'$ then $s[t/x]
  \to s'[t'/x]$, where~$\to$ is the relation from
  Example~\ref{ex_skolem}. Formally, the statement we show by
  transfinite induction on~$\alpha \le \zeta$ is: for $s,s',t,t' \in
  T$, if $s \to s'$ and $t \to t'$ then $s[t/x] \to^\alpha
  s'[t'/x]$. For illustrative purposes, we indicate the
  $\alpha$-approximants with appropriate ordinal superscripts, but it
  is customary to omit these superscripts.

  Let us proceed with the proof. The proof is by coinduction with case
  analysis on $s \to s'$. If $s \equiv s' \equiv y$ with $y \ne x$,
  then $s[t/x] \equiv y \equiv s'[t'/x]$. If $s \equiv s' \equiv x$
  then $s[t/x] \equiv t \to^{\alpha+1} t' \equiv s'[t'/x]$ (note that
  ${\to} \equiv {\to^\zeta} \subseteq {\to^{\alpha+1}}$). If $s \equiv
  A(s_1)$, $s' \equiv A(s_1')$ and $s_1 \to s_1'$, then $s_1[t/x]
  \to^\alpha s_1'[t'/x]$ by the coinductive hypothesis. Thus $s[t/x]
  \equiv A(s_1[t/x]) \to^{\alpha+1} A(s_1'[t'/x]) \equiv s'[t'/x]$ by
  rule~$(2)$. If $s \equiv B(s_1,s_2)$, $s' \equiv B(s_1',s_2')$ then
  the proof is analogous. If $s \equiv A(s_1)$, $s' \equiv
  B(s_1',s_1')$ and $s_1 \to s_1'$, then the proof is also
  similar. Indeed, by the coinductive hypothesis we have $s_1[t/x]
  \to^\alpha s_1'[t'/x]$, so $s[t/x] \equiv A(s_1[t/x]) \to^{\alpha+1}
  B(s_1'[t'/x],s_1'[t'/x]) \equiv s'[t'/x]$ by rule~$(4)$.
\end{example}

With the following example we explain how proofs of existential
statements should be interpreted.

\begin{example}\label{ex_skolem}
  Let~$T$ and~$\to$ be as in Example~\ref{ex_1}. We want to show: for
  all $s,t,t' \in T$, if $s \to t$ and $s \to t'$ then there exists
  $s' \in T$ with $t \to s'$ and $t' \to s'$. The idea is to skolemize
  this statement. So we need to find a Skolem function $f : T^3 \to T$
  which will allow us to prove the Skolem normal form:
  \begin{itemize}
  \item[$(\star)$] if $s \to t$ and $s \to t'$ then $t \to f(s,t,t')$
    and $t' \to f(s,t,t')$.
  \end{itemize}
  The rules for~$\to$ suggest a definition of~$f$:
  \[
  \begin{array}{rcl}
    f(x, x, x) &=& x \\
    f(A(s), A(t), A(t')) &=& A(f(s,t,t')) \\
    f(A(s),A(t),B(t',t')) &=& B(f(s,t,t'),f(s,t,t')) \\
    f(A(s),B(t,t),A(t')) &=& B(f(s,t,t'),f(s,t,t')) \\
    f(A(s),B(t,t),B(t',t')) &=& B(f(s,t,t'),f(s,t,t')) \\
    f(B(s_1,s_2), B(t_1,t_2), B(t_1',t_2')) &=&
    B(f(s_1,t_1,t_1'),f(s_2,t_2,t_2')) \\
    f(s, t, t') &=& \text{some arbitrary term if none of the above matches}
  \end{array}
  \]
  This is a definition by guarded corecursion, so there exists a
  unique function $f : T^3 \to T$ satisfying the above equations. The
  last case in the above definition of~$f$ corresponds to the case in
  Definition~\ref{def_guarded_corecursion} where all~$u_i$ are
  constant functions. Note that any fixed term has a fixed constructor
  (in the sense of Definition~\ref{def_guarded_corecursion}) at the
  root. In the sense of Definition~\ref{def_guarded_corecursion} also
  the elements of~$V$ are nullary constructors.

  We now proceed with a coinductive proof of~$(\star)$. Assume $s \to
  t$ and $s \to t'$. If $s \equiv t \equiv t' \equiv x$ then
  $f(s,t,t') \equiv x$, and $x \to x$ by rule~$(1)$. If $s \equiv A(s_1)$,
  $t \equiv A(t_1)$ and $t' \equiv A(t_1')$ with $s_1 \to t_1$ and
  $s_1 \to t_1'$, then by the coinductive hypothesis $t_1 \to
  f(s_1,t_1,t_1')$ and $t_1' \to f(s_1,t_1,t_1')$. We have $f(s,t,t')
  \equiv A(f(s_1,t_1,t_1'))$. Hence $t \equiv A(t_1) \to f(s,t,t')$
  and $t \equiv A(t_1') \to f(s,t,t')$, by rule~$(2)$. If $s \equiv
  B(s_1,s_2)$, $t \equiv B(t_1,t_2)$ and $t' \equiv B(t_1',t_2')$,
  with $s_1 \to t_1$, $s_1 \to t_1'$, $s_2 \to t_2$ and $s_2 \to
  t_2'$, then by the coinductive hypothesis we have $t_1 \to
  f(s_1,t_1,t_1')$, $t_1' \to f(s_1,t_1,t_1')$, $t_2 \to
  f(s_2,t_2,t_2')$ and $t_2' \to f(s_2,t_2,t_2')$. Hence $t \equiv
  B(t_1,t_2) \to B(f(s_1,t_1,t_1'),f(s_2,t_2,t_2')) \equiv f(s,t,t')$
  by rule~$(3)$. Analogously, $t' \to f(s,t,t')$ by rule~$(3)$. Other
  cases are similar.

  Usually, it is inconvenient to invent the Skolem function
  beforehand, because the definition of the Skolem function and the
  coinductive proof of the Skolem normal form are typically
  interdependent. Therefore, we adopt an informal style of doing a
  proof by coinduction of a statement
  \[
  \begin{array}{rcl}
    \psi(R_1,\ldots,R_m) &=& \forall_{x_1, \ldots, x_n \in T}
    \,.\, \varphi(\vec{x}) \to \\ &&\quad \exists_{y \in T}
    . R_1(g_1(\vec{x}),\ldots,g_k(\vec{x}), y) \land \ldots \land R_m(g_1(\vec{x}),\ldots,g_k(\vec{x}),y)
  \end{array}
  \]
  with an existential quantifier. We intertwine the corecursive
  definition of the Skolem function~$f$ with a coinductive proof of
  the Skolem normal form
  \[
  \begin{array}{l}
    \forall_{x_1, \ldots, x_n \in T} \,.\, \varphi(\vec{x}) \to
    \\ \quad\quad
    R_1(g_1(\vec{x}),\ldots,g_k(\vec{x}),f(\vec{x}))
    \land \ldots \land R_m(g_1(\vec{x}),\ldots,g_k(\vec{x}),f(\vec{x}))
  \end{array}
  \]
  We proceed as if the coinductive hypothesis
  was~$\psi(R_1^\alpha,\ldots,R_m^\alpha)$ (it really is the Skolem
  normal form). Each element obtained from the existential quantifier
  in the coinductive hypothesis is interpreted as a corecursive
  invocation of the Skolem function. When later we exhibit an element
  to show the existential subformula
  of~$\psi(R_1^{\alpha+1},\ldots,R_m^{\alpha+1})$, we interpret this
  as the definition of the Skolem function in the case specified by
  the assumptions currently active in the proof. Note that this
  exhibited element may (or may not) depend on some elements obtained
  from the existential quantifier in the coinductive hypothesis, i.e.,
  the definition of the Skolem function may involve corecursive
  invocations.

  To illustrate our style of doing coinductive proofs of statements
  with an existential quantifier, we redo the proof done above. For
  illustrative purposes, we indicate the arguments of the Skolem
  function, i.e., we write~$s'_{s,t,t'}$ in place
  of~$f(s,t,t')$. These subscripts $s,t,t'$ are normally omitted.

  We show by coinduction that if $s \to t$ and $s \to t'$ then there
  exists $s' \in T$ with $t \to s'$ and $t' \to s'$. Assume $s \to t$
  and $s \to t'$. If $s \equiv t \equiv t' \equiv x$ then take
  $s'_{x,x,x} \equiv x$. If $s \equiv A(s_1)$, $t \equiv A(t_1)$ and
  $t' \equiv A(t_1')$ with $s_1 \to t_1$ and $s_1 \to t_1'$, then by
  the coinductive hypothesis we obtain~$s'_{s_1,t_1,t_1'}$ with $t_1
  \to s'_{s_1,t_1,t_1'}$ and $t_1' \to s'_{s_1,t_1,t_1'}$. More
  precisely: by corecursively applying the Skolem function to
  $s_1,t_1,t_1'$ we obtain~$s'_{s_1,t_1,t_1'}$, and by the coinductive
  hypothesis we have $t_1 \to s'_{s_1,t_1,t_1'}$ and $t_1' \to
  s'_{s_1,t_1,t_1'}$. Hence $t \equiv A(t_1) \to A(s'_{s_1,t_1,t_1'})$
  and $t \equiv A(t_1') \to A(s'_{s_1,t_1,t_1'})$, by rule~$(2)$. Thus
  we may take $s'_{s,t,t'} \equiv A(s'_{s_1,t_1,t_1'})$. If $s \equiv
  B(s_1,s_2)$, $t \equiv B(t_1,t_2)$ and $t' \equiv B(t_1',t_2')$,
  with $s_1 \to t_1$, $s_1 \to t_1'$, $s_2 \to t_2$ and $s_2 \to
  t_2'$, then by the coinductive hypothesis we
  obtain~$s'_{s_1,t_1,t_1'}$ and~$s'_{s_2,t_2,t_2'}$ with $t_1 \to
  s'_{s_1,t_1,t_1'}$, $t_1' \to s'_{s_1,t_1,t_1'}$, $t_2 \to
  s'_{s_2,t_2,t_2'}$ and $t_2' \to s'_{s_2,t_2,t_2'}$. Hence $t \equiv
  B(t_1,t_2) \to B(s'_{s_1,t_1,t_1'},s'_{s_2,t_2,t_2'})$ by
  rule~$(3)$. Analogously, $t' \to
  B(s'_{s_1,t_1,t_1'},s'_{s_2,t_2,t_2'})$ by rule~$(3)$. Thus we may
  take $s'_{s,t,t'} \equiv
  B(s'_{s_1,t_1,t_1'},s'_{s_2,t_2,t_2'})$. Other cases are similar.

  It is clear that the above proof, when interpreted in the way
  outlined before, implicitly defines the Skolem function~$f$. It
  should be kept in mind that in every case the definition of the
  Skolem function needs to be guarded. We do not explicitly mention
  this each time, but verifying this is part of verifying the proof.
\end{example}

When doing proofs by coinduction the following criteria need to be
kept in mind in order to be able to justify the proofs according to
the above explanations.
\begin{itemize}
\item When we conclude from the coinductive hypothesis that some
  relation~$R(t_1,\ldots,t_n)$ holds, this really means that only its
  approximant~$R^\alpha(t_1,\ldots,t_n)$ holds. Usually, we need to
  infer that the next approximant~$R^{\alpha+1}(s_1,\ldots,s_n)$ holds
  (for some other elements~$s_1,\ldots,s_n$) by
  using~$R^\alpha(t_1,\ldots,t_n)$ as a premise of an appropriate
  rule. But we cannot, e.g., inspect (do case reasoning
  on)~$R^\alpha(t_1,\ldots,t_n)$, use it in any lemmas, or otherwise
  treat it as~$R(t_1,\ldots,t_n)$.
\item An element~$e$ obtained from an existential quantifier in the
  coinductive hypothesis is not really the element itself, but a
  corecursive invocation of the implicit Skolem function. Usually, we
  need to put it inside some constructor~$c$, e.g.~producing~$c(e)$,
  and then exhibit~$c(e)$ in the proof of an existential
  statement. Applying at least one constructor to~$e$ is necessary to
  ensure guardedness of the implicit Skolem function. But we cannot,
  e.g., inspect~$e$, apply some previously defined functions to it, or
  otherwise treat it as if it was really given to us.
\item In the proofs of existential statements, the implicit Skolem
  function cannot depend on the ordinal~$\alpha$. However, this is the
  case as long as we do not violate the first point, because if the
  ordinals are never mentioned and we do not inspect the approximants
  obtained from the coinductive hypothesis, then there is no way in
  which we could possibly introduce a dependency on~$\alpha$.
\end{itemize}

Equality on coterms may be characterised coinductively.

\begin{definition}\label{def_bisimilarity_a}
  Let~$\Sigma$ be a many-sorted algebraic signature, as in
  Definition~\ref{def_coterms_a}. Let $\Tc = \Tc(\Sigma)$. Define
  on~$\Tc$ a binary relation~${=}$ of \emph{bisimilarity} by the
  coinductive rules
  \[
  \infer={f(t_1,\ldots,t_n) = f(s_1,\ldots,s_n)}{t_1 = s_1 & \ldots &
    t_n = s_n}
  \]
  for each constructor $f \in \Sigma_c$.
\end{definition}

It is intuitively obvious that on coterms bisimilary is the same as
identity. The following easy (and well-known) proposition makes this
precise.

\begin{proposition}\label{prop_bisimilarity_a}
  For $t,s \in \Tc$ we have: $t = s$ iff $t \equiv s$.
\end{proposition}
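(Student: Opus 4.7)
The statement is an \emph{iff}, and the two directions have quite different characters: the backward direction is coinductive, the forward direction is inductive.

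For $(\Leftarrow)$, assume $t \equiv s$. My plan is to argue by coinduction using the single rule defining bisimilarity. Since $t$ and $s$ are identical as partial functions from $\Nbb^*$ to $\Sigma_c$, we have $t(\epsilon) = s(\epsilon) = f$ for some constructor $f$ of some type $(s_1,\ldots,s_n;s')$, and the $n$ immediate subterms $t_1,\ldots,t_n$ of $t$ agree pointwise with the immediate subterms $s_1,\ldots,s_n$ of $s$, i.e.\ $t_i \equiv s_i$ for every $i$. By the coinductive hypothesis, $t_i = s_i$ for each $i$. Applying the defining rule of bisimilarity then yields $f(t_1,\ldots,t_n) = f(s_1,\ldots,s_n)$, that is $t = s$. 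Guardedness is immediate since each corecursive use of the hypothesis is placed under the constructor $f$.

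For $(\Rightarrow)$, assume $t = s$. My plan is to show $t(p) = s(p)$ for every $p \in \Nbb^*$ by ordinary induction on $|p|$; this gives $t \equiv s$ by definition of equality of partial functions. Here the assumption $t = s$ is the bisimilarity relation itself (not an approximant from a coinductive hypothesis), so I may freely unfold it: by the defining rule there exist a constructor $f$ and subterms $t_1,\ldots,t_n,s_1,\ldots,s_n$ with $t \equiv f(t_1,\ldots,t_n)$, $s \equiv f(s_1,\ldots,s_n)$ (as syntactic expressions built by applying $f$ at the root), and $t_i = s_i$ for all $i$. The base case $p = \epsilon$ is then immediate: $t(\epsilon) = f = s(\epsilon)$. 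For the inductive step, write $p = ip'$. If $i < n$, both $t(ip') = t_{i+1}(p')$ and $s(ip') = s_{i+1}(p')$, and these agree by the induction hypothesis applied to the pair $t_{i+1} = s_{i+1}$ at the strictly shorter position $p'$. If $i \geq n$, both $t(ip')$ and $s(ip')$ are undefined by the coterm conditions.

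The only subtlety worth flagging — and it is conceptual rather than technical — is respecting the distinction emphasized in the paper: in the backward direction the bisimilarity in the conclusion is produced by a coinductive step and we only ever construct new bisimilarities under a constructor, whereas in the forward direction the bisimilarity in the hypothesis is the actual greatest fixpoint, which we may inspect via its defining rule. Once this distinction is kept in mind, the whole argument is routine.
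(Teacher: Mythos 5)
Your proof is correct and follows essentially the same route as the paper: the backward direction by a guarded coinduction placing the coinductive hypothesis under the root constructor (the paper phrases it as reflexivity $t=t$, which is the same thing since $\equiv$ is literal identity), and the forward direction by induction on the length of the position, unfolding the bisimilarity one constructor at a time and treating the undefined case separately. The only cosmetic differences are that you peel positions from the front ($p=ip'$) where the paper peels from the back ($p=p'j$), and your induction hypothesis is implicitly quantified over all bisimilar pairs rather than tracking bisimilarity of the subterms of the fixed original pair.
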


\begin{proof}
  Recall that each term is formally a partial function from~$\Nbb^*$
  to~$\Sigma_c$. We write $t(p) \approx s(p)$ if either both
  $t(p),s(p)$ are defined and equal, or both are undefined.

  Assume $t = s$. It suffices to show by induction of the length of $p
  \in \Nbb^*$ that $\pos{t}{p} = \pos{s}{p}$ or $t(p)\ua,s(p)\ua$,
  where by~$\pos{t}{p}$ we denote the subterm of~$t$ at
  position~$p$. For $p = \epsilon$ this is obvious. Assume $p =
  p'j$. By the inductive hypothesis, $\pos{t}{p'} = \pos{s}{p'}$ or
  $t(p')\ua, s(p')\ua$. If $\pos{t}{p'} = \pos{s}{p'}$ then
  $\pos{t}{p'} \equiv f(t_0,\ldots,t_n)$ and $\pos{s}{p'} \equiv
  f(s_0,\ldots,s_n)$ for some $f \in \Sigma_c$ with $t_i = s_i$ for
  $i=0,\ldots,n$. If $0 \le j \le n$ then $\pos{t}{p} \equiv t_j = s_j
  = \pos{s}{p}$. Otherwise, if $j > n$ or if $t(p')\ua,s(p')\ua$, then
  $t(p)\ua,s(p)\ua$ by the definition of coterms.

  For the other direction, we show by coinduction that for any $t \in
  \Tc$ we have $t = t$. If $t \in \Tc$ then $t \equiv
  f(t_1,\ldots,t_n)$ for some $f \in \Sigma_c$. By the coinductive
  hypothesis we obtain $t_i = t_i$ for $i=1,\ldots,n$. Hence $t = t$
  by the rule for~$f$.
\end{proof}

For coterms $t,s\in \Tc$, we shall theorefore use the notations $t =
s$ and $t \equiv s$ interchangeably, employing
Proposition~\ref{prop_bisimilarity_a} implicitly.

\begin{example}
  Recall the coinductive definitions of~$\zip$ and~$\even$ from
  Example~\ref{ex_corec}.
  \[
  \begin{array}{rcl}
    \even(x : y : t) &=& x : \even(t) \\
    \zip(x : t, s) &=& x : \zip(s, t)
  \end{array}
  \]
  By coinduction we show
  \[
  \zip(\even(t),\even(\tail(t))) = t
  \]
  for any stream $t \in A^\omega$. Let $t \in A^\omega$. Then $t = x :
  y : s$ for some $x, y \in A$ and $s \in A^\omega$. We have
  \[
  \begin{array}{rcl}
    \zip(\even(t),\even(\tail(t))) &=& \zip(\even(x : y : s), \even(y
    : s)) \\
    &=& \zip(x : \even(s), \even(y : s)) \\
    &=& x : \zip(\even(y : s), \even(s)) \\
    &=& x : y : s \quad\text{ (by~CH) }\\
    &=& t
  \end{array}
  \]
  In the equality marked with~(by~CH) we use the coinductive
  hypothesis, and implicitly a bisimilarity rule from
  Definition~\ref{def_bisimilarity_a}.
\end{example}

The above explanation of coinduction is generalised and elaborated in
much more detail in Section~\ref{sec_coind_tech}. The
papers~\cite{Czajka2018,Czajka2014,Czajka2015a} contain many
non-trivial coinductive proofs written in the style promoted
here. Also~\cite{KozenSilva2017} may be helpful as it gives many
examples of coinductive proofs written in a style similar to the one
used here. The book~\cite{Sangiorgi2012} is an elementary introduction
to coinduction and bisimulation (but the proofs there are written in a
different style than here, not referring to the coinductive hypothesis
but explicitly constructing a backward-closed set). The
chapters~\cite{BertotCasteran2004Chapter13,Chlipala2013Chapter5}
explain coinduction in~Coq from a practical viewpoint. A reader
interested in foundational matters should also
consult~\cite{JacobsRutten2011,Rutten2000} which deal with the
coalgebraic approach to coinduction.

\section{Preliminaries}\label{sec_prelim}

In this section we provide the necessary background on order
theory. We also introduce some new or non-standard definitions and
easy lemmas which will be needed in subsequent developments. For more
background on order theory see e.g.~\cite{DaveyPriestley2002}.

\begin{definition}\label{def_order}
  A \emph{partial order} is a pair $\Pb = \la P, \le \ra$ where~$P$ is
  a set and~$\le$ is an antisymmetric, reflexive and transitive binary
  relation on~$P$. We often confuse~$\Pb$ with~$P$ or~$\le$. The
  \emph{dual} of a partial order $\Pb = \la P, \le \ra$ is a partial
  order $\Pb^\op = \la P, \ge \ra$ where $x \ge y$ iff $y \le
  x$. If~$A$ is a set, and~$\Pb_a = \la P_a, \le_a \ra$ is a partial
  order for each $a \in A$, then the \emph{product}~$\prod_{a \in A}
  \Pb_a = \la \prod_{a\in A}P_a, \le\ra$ is a partial order with~$\le$
  defined by: $p \le q$ iff $p(a) \le_a q(a)$ for each $a \in A$. If
  $A = \{a_1,\ldots,a_n\}$ is finite, then we write $\prod_{a \in
    A}P_a=P_{a_1}\times\ldots\times P_{a_n}$. If $P_a=P$ for each
  $a\in A$ then we write $\prod_{a\in A}P=P^A=A\to P$.

  An element $x \in P$ is \emph{maximal} (\emph{minimal}) if there is
  no $y \in P$ with $y > x$ ($y < x$). The set of all maximal
  (minimal) elements of~$P$ is denoted by~$\Max(P)$ ($\Min(P)$). A
  function $f : P \to Q$ is \emph{max-preserving}
  (\emph{min-preserving}) if $f(\Max(P))\subseteq \Max(Q)$
  ($f(\Min(P)) \subseteq\Min(Q)$). The \emph{least element}
  (\emph{greatest element}) of a set $X \subseteq P$ is an element $x
  \in X$ such that $x \le y$ ($x \ge y$) for all $y \in X$. A
  \emph{well-order} is a partial order in which every nonempty subset
  has the least element.

  An \emph{up-set} (\emph{down-set}) is a subset $U \subseteq P$ such
  that if $x \in U$ and $y \ge x$ ($y \le x$) then $y \in U$. A
  \emph{chain} is a subset $C \subseteq P$ satisfying: for all $x, y
  \in C$, $x \le y$ or $y \le x$. A \emph{directed set} in a parital
  order~$P$ is a nonempty subset $D \subseteq P$ such that for all $x,
  y \in D$ there exists~$z$ such that $z \ge x, y$. A
  \emph{bottom}~$\bot$ (\emph{top}~$\top$) of~$P$, is an element
  of~$P$ satisfying $\bot \le x$ ($x \le \top$) for any $x \in P$. We
  sometimes write~$\bot_P$ and~$\top_P$ when ambiguity may arise. An
  \emph{upper bound} (\emph{lower bound}) of a subset $D \subseteq P$
  is an element $x \in P$ such that $x \ge y$ ($x \le y$) for all $y
  \in D$, which we denote $D \le x$ ($x \le D$). A \emph{supremum} or
  \emph{least upper bound} or \emph{join} (\emph{infimum} or
  \emph{greatest lower bound} or \emph{meet}) of a subset $D \subseteq
  P$ is an element $\join D \in P$ ($\meet D \in P$) such that $D \le
  \join D$ ($\meet D \le D$) and for any $s \in P$ with $D \le s$ ($s
  \le D$) we have $s \le \join D$ ($s \ge \meet D$). We sometimes
  denote the supremum of~$D$ by~$\sup D$ and the infimum by~$\inf D$.

  A partial order is \emph{chain-complete} if every chain has a
  supremum. A \emph{complete partial order} (CPO) is a partial order
  with bottom in which every directed set has a supremum. A partial
  order is a \emph{complete lattice} if every set has a supremum.

  A function $f : P \to Q$ between partial orders is \emph{monotone}
  if it preserves the ordering, i.e., $x \le y$ implies $f(x) \le
  f(y)$. A function $f : P \to Q$ between CPOs is \emph{continuous} if
  for every directed set $D \subseteq P$, $f(D)$ is directed and
  $f(\join D) = \join f(D)$. A \emph{fixpoint} of an function~$f : P
  \to P$ on a partial order~$P$ is an element $x \in P$ such that
  $f(x) = x$. The set of all fixpoints of~$f$ is denoted
  by~$\Fix(f)$. The \emph{least fixpoint~$\mu f$} (\emph{greatest
    fixpoint~$\nu f$}) of a function~$f$ is a fixpoint of~$f$
  such that $\mu f \le x$ ($\nu f \ge x$) for any fixpoint~$x$ of~$f$.

  An \emph{initial (final) sequence} of a function~$f$ on~$P$ is an
  ordinal-indexed sequence~$(f^\alpha)_\alpha$ of elements of~$P$
  satisfying:
  \begin{itemize}
  \item $f^0 = \bot$ ($f^0 = \top$),
  \item $f^{\alpha+1} = f(f^\alpha)$,
  \item $f^\lambda = \join_{\alpha<\lambda} f^\alpha$ ($f^\lambda =
    \meet_{\alpha<\lambda} f^\alpha$) for a limit ordinal~$\lambda$.
  \end{itemize}
  A \emph{limit} of an initial (final) sequence of~$f$ is an element
  $x \in P$ for which there exists an ordinal~$\zeta$ such that
  $f^\alpha = x$ for $\alpha \ge \zeta$. The least such~$\zeta$ is
  called the \emph{closure ordinal} of the sequence.

  For an ordinal~$\alpha$, we denote by~$\On(\alpha)$ the set of all
  ordinals~$\le \alpha$.
\end{definition}

The following lemma is folklore.

\begin{lemma}\label{lem_sequence}
  Let $(f^\alpha)_\alpha$ be the initial (final) sequence of a
  monotone function~$f$. Then $f^\alpha \le f^\beta$ ($f^\alpha \ge
  f^\beta$) for $\alpha \le \beta$.
\end{lemma}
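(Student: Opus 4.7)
The plan is to prove this by transfinite induction on $\beta$, with duality taking care of the final-sequence case once the initial-sequence case is established. A naive induction on $\beta$ proving only $f^\alpha \le f^\beta$ for $\alpha \le \beta$ will run into trouble, since the successor step reduces to showing $f^\gamma \le f^{\gamma+1}$, which has no direct handle in the limit subcase. I would therefore strengthen the induction hypothesis to the conjunction of two clauses: (a) $f^\alpha \le f^\beta$ for all $\alpha \le \beta$, and (b) $f^\beta \le f^{\beta+1}$. Clause (b) is exactly the ingredient that, together with monotonicity of $f$, drives the successor step for (a).

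For $\beta = 0$ both clauses are immediate, since $f^0 = \bot$. For $\beta = \gamma+1$ a successor, clause (a) follows by chaining the two clauses of the IH for $\gamma$: any $\alpha \le \gamma$ gives $f^\alpha \le f^\gamma \le f^{\gamma+1} = f^\beta$, and $\alpha = \beta$ is trivial. Clause (b) at $\beta = \gamma+1$ then drops out by applying monotonicity of $f$ to the IH inequality $f^\gamma \le f^{\gamma+1}$, obtaining $f^\beta = f(f^\gamma) \le f(f^{\gamma+1}) = f^{\beta+1}$.

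The limit step is where the work actually happens. For $\beta$ a limit, clause (a) is free from the defining join $f^\beta = \bigvee_{\beta' < \beta} f^{\beta'}$, because for $\alpha < \beta$ the element $f^\alpha$ sits inside the family being joined. Clause (b), however, requires $f^\beta \le f(f^\beta)$, and I expect this to be the main obstacle. The key observation is that clause (a) for the present $\beta$ — which we have just proved — already gives $f^{\beta'} \le f^\beta$ for every $\beta' < \beta$, so by monotonicity of $f$ we get $f^{\beta'+1} \le f^{\beta+1}$. Combined with clause (b) of the IH, $f^{\beta'} \le f^{\beta'+1}$, we conclude $f^{\beta'} \le f^{\beta+1}$ for every $\beta' < \beta$, and taking the join yields $f^\beta \le f^{\beta+1}$.

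Once the initial-sequence statement is in hand, the final-sequence statement follows by passing to the dual order $\Pb^\op$: the final sequence of $f : P \to P$ is literally the initial sequence of $f$ viewed as a monotone endofunction on $\Pb^\op$, so the reversed inequality $f^\alpha \ge f^\beta$ is just the same lemma in the opposite order. No separate argument is needed.
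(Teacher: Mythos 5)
Your proof is correct and follows essentially the same route as the paper's: transfinite induction on $\beta$, with the final-sequence case obtained by duality. The only difference is organizational --- the paper proves clause (a) alone and obtains the crucial inequality $f^\gamma \le f(f^\gamma)$ inside the successor step by a nested case analysis on whether $\gamma$ is zero, a successor, or a limit (so the ``naive'' induction does in fact go through), whereas you fold that work into the strengthened clause (b) and discharge it at the limit stage; the underlying join/upper-bound computation is identical.
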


\begin{proof}
  Suppose $(f^\alpha)_\alpha$ is the initial sequence of~$f$. The
  proof for the final sequence is dual. We show by induction
  on~$\beta$ that $f^\alpha \le f^\beta$ for all $\alpha \le
  \beta$. The base case $\beta = 0$ is obvious.

  If $\beta = \gamma + 1$ then $f^\beta = f(f^{\gamma})$ and by the
  inductive hypothesis $f^\gamma \ge f^\alpha$ for $\alpha \le
  \gamma$. Hence, it suffices to show $f^\beta \ge f^\gamma$. If
  $\gamma = 0$ then obviously $f^\beta \ge f^\gamma = \bot$. If
  $\gamma = \delta + 1$ then $f^\gamma \ge f^\delta$, and thus
  $f^\beta = f(f^\gamma) \ge f(f^\delta) = f^\gamma$ by the
  monotonicity of~$f$. If~$\gamma$ is a limit ordinal then
  \[
  f^\beta = f(f^\gamma) = f(\join_{\alpha<\gamma}f^\alpha) \ge
  \join_{\alpha<\gamma}f(f^\alpha) = \join_{\alpha<\gamma}f^{\alpha+1}
  = \join_{\alpha<\gamma}f^\alpha = f^\gamma
  \]
  where the inequality follows from the monotonicity of~$f$ and the
  definition of supremum.

  Thus assume~$\beta$ is a limit ordinal. But then by definition
  $f^\beta = \join_{\alpha < \beta} f^\alpha \ge f^\alpha$ for $\alpha
  \le \beta$.
\end{proof}

In the following lemma we collect simple well-known properites of
lattices and CPOs.

\begin{lemma}~
  \begin{itemize}
  \item In a complete lattice each subset has an infimum.
  \item Any complete lattice has the bottom and top elements.
  \item The dual of a complete lattice is also a complete
    lattice.
  \item For any set~$A$, the power set~$\Pow{A}$ is a complete
    lattice.
  \item If~$P_a$ is a CPO for each $a \in A$, then~$\prod_{a \in
    A}P_a$ is a CPO with~$\bot_{\prod_{a \in A}P_a}$ defined by
    $\bot_{\prod_{a \in A}P_a}(a) = \bot_{P_a}$.
  \item Every continuous function is monotone.
  \item Every CPO is chain-complete.
  \end{itemize}
\end{lemma}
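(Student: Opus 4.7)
The plan is to verify each of the seven items in the order given, since each either stands alone or uses an earlier item of the list. For item~1, I would use the classical trick: given a subset $X$ of a complete lattice $P$, let $L$ be the set of all lower bounds of $X$; then I would argue that $\join L$ serves as $\meet X$, because every element of $X$ is an upper bound of $L$, so $\join L \le x$ for each $x \in X$, making $\join L$ a lower bound of $X$, and it is the greatest such by the universal property of the supremum. For item~2, I would observe that $\bot = \join \emptyset$ works (every element of $P$ is vacuously an upper bound of $\emptyset$, so the supremum is the least element of $P$) and dually $\top = \meet \emptyset$ using item~1. For item~3, I would note that suprema in $P$ are infima in $P^\op$, so every subset of $P^\op$ has an infimum, and then invoke item~1 applied to $P^\op$ to conclude that $P^\op$ is also a complete lattice. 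Item~4 is immediate: for any $\mathcal{C} \subseteq \Pow{A}$, the union $\bigcup \mathcal{C}$ is clearly the supremum of $\mathcal{C}$.

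For item~5, given a directed set $D \subseteq \prod_{a\in A} P_a$, I would define a candidate supremum $s$ pointwise by $s(a) = \join \{ p(a) \mid p \in D \}$, and bottom similarly by $\bot(a) = \bot_{P_a}$. The key verification is that $\{ p(a) \mid p \in D \}$ is directed in $P_a$: given $p_1(a), p_2(a)$, directedness of $D$ yields $q \in D$ above both $p_1, p_2$, so $q(a)$ is above both $p_i(a)$ componentwise. Then $s$ exists in each coordinate, and checking the universal property of supremum reduces to checking it pointwise.

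For item~6, given $x \le y$, the two-element set $\{x,y\}$ is directed with $\join \{x,y\} = y$; continuity forces $f(\{x,y\}) = \{f(x), f(y)\}$ to be directed and $\join \{f(x), f(y)\} = f(y)$, which means $f(y)$ is an upper bound of $\{f(x), f(y)\}$, giving $f(x) \le f(y)$. For item~7, a nonempty chain is directed (any two comparable elements admit the larger as a common upper bound), so it has a supremum by the CPO property, while the empty chain has supremum $\bot$. I do not anticipate any serious obstacle; the only place that requires genuine care is item~5, where one must keep the distinction between the product order and componentwise suprema explicit and verify directedness of the projections, but once this is set up the universal property drops out componentwise.
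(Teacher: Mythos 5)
Your proof is correct. The paper states this lemma without any proof at all, treating the items as well-known facts from order theory, so there is no argument of the author's to compare against; your verifications are the standard textbook ones (the lower-bounds trick for infima, $\join\emptyset$ for the bottom, pointwise suprema for products, the two-element directed set for monotonicity of continuous maps), and each is carried out correctly, including the small but necessary observation that the empty chain has supremum $\bot$ in item~7.
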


It is also true that every chain-complete partial order is
a~CPO~\cite[Theorem~8.11]{DaveyPriestley2002}.

An initial (final) sequence of a function on a partial order need not
exist. Even if it exists, its limit need not exist. However, the
situation is more definite for monotone functions on~CPOs or complete
lattices.

\begin{theorem}\label{thm_fixpoint_cpo}
  Every monotone function~$f$ on a CPO has the least fixpoint~$\mu
  f$. Moreover, $\mu f$ is the limit of the initial sequence of~$f$.
\end{theorem}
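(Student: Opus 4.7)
The plan is to construct the initial sequence $(f^\alpha)_\alpha$ by transfinite recursion, verify that it is well-defined in the CPO, show that it eventually stabilizes at a fixpoint, and finally show that this fixpoint is below any other fixpoint.

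The construction is routine at $0$ (set $f^0 = \bot$) and at successor stages (set $f^{\alpha+1} = f(f^\alpha)$). The only delicate point is at a limit ordinal $\lambda$, where I need $\bigvee_{\alpha < \lambda} f^\alpha$ to exist; a CPO only guarantees suprema of \emph{directed} sets, not arbitrary ones. I would therefore perform a simultaneous transfinite induction on $\lambda$ establishing both that (i)~$f^\alpha$ is defined for all $\alpha < \lambda$, and (ii)~the map $\alpha \mapsto f^\alpha$ is monotone on $\{\alpha < \lambda\}$. Given (ii), $\{f^\alpha : \alpha < \lambda\}$ is a chain, hence directed, so its supremum exists in the CPO and can be used to define $f^\lambda$. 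Extending monotonicity past $\lambda$ then proceeds exactly as in Lemma~\ref{lem_sequence}: the successor step uses monotonicity of $f$, and the limit step uses the defining property of the supremum.

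Next, the sequence must eventually stabilize for cardinality reasons. If we had $f^\alpha < f^{\alpha+1}$ for every ordinal $\alpha$, then $\alpha \mapsto f^\alpha$ would be an injection of the proper class of ordinals into the set $P$, which is impossible. So there is a least ordinal $\zeta$ with $f^{\zeta+1} = f^\zeta$; by construction this $f^\zeta$ is a fixpoint of $f$, and monotonicity forces $f^\alpha = f^\zeta$ for all $\alpha \ge \zeta$, so $f^\zeta$ is the limit of the sequence in the sense of Definition~\ref{def_order}.

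To see $\mu f := f^\zeta$ is the \emph{least} fixpoint, let $x$ be any fixpoint of $f$ and show $f^\alpha \le x$ by transfinite induction on $\alpha$. Base: $f^0 = \bot \le x$. Successor: $f^{\alpha+1} = f(f^\alpha) \le f(x) = x$ by the inductive hypothesis and monotonicity of $f$. Limit: $f^\lambda = \bigvee_{\alpha < \lambda} f^\alpha \le x$ because $x$ is an upper bound of the chain. Specialising to $\alpha = \zeta$ yields $\mu f \le x$. The main obstacle is the limit step in the construction: because we are in a CPO rather than a complete lattice, one cannot just take arbitrary suprema, so monotonicity of the partial sequence has to be maintained as an invariant of the transfinite recursion itself, not proved after the fact.
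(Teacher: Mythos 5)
Your argument is correct. Note, however, that the paper does not actually prove this theorem: it simply defers to the literature (Davey--Priestley, Theorem~10.5 and Exercise~8.19), so there is no in-paper proof to compare against. What you have written is essentially the standard ordinal-iteration proof that the cited reference develops, and you have correctly put your finger on the one genuinely delicate point: in a CPO only \emph{directed} sets are guaranteed suprema, so the existence of $f^\lambda$ at limit stages cannot be taken for granted and must be secured by carrying monotonicity of the partial sequence as an invariant of the transfinite recursion, rather than proving it afterwards as Lemma~\ref{lem_sequence} does (that lemma tacitly presupposes the sequence exists). Your stabilization step (no strictly increasing ordinal-indexed injection into a set) and the leastness induction ($f^\alpha \le x$ for every fixpoint $x$) are both standard and complete. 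One small point worth making explicit if you write this up in full: in the limit case of the monotonicity invariant you need $f(\join_{\alpha<\lambda} f^\alpha) \ge \join_{\alpha<\lambda} f(f^\alpha)$, which uses that the image of a chain under a monotone map is again a chain and hence has a supremum in the CPO; this is exactly the computation carried out in the proof of Lemma~\ref{lem_sequence}, so your appeal to it is legitimate.
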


\begin{proof}
  See e.g.~\cite[Theorem~10.5 and Exercise~8.19]{DaveyPriestley2002}.
\end{proof}

\begin{theorem}\label{thm_fixpoint_complete_lattice}
  Every monotone function~$f$ on a complete lattice~$L$ has the least
  and greatest fixpoints. Moreover, $\mu f$ is the limit of the
  initial, and~$\nu f$ of the final, sequence of~$f$.
\end{theorem}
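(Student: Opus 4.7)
The plan is to deduce both statements from Theorem~\ref{thm_fixpoint_cpo} by appealing to duality rather than reconstructing the transfinite-iteration argument. The preceding lemma already tells us that every complete lattice has bottom and top, that the dual of a complete lattice is a complete lattice, and that every CPO is chain-complete; combined with the fact that every chain-complete partial order is a CPO, we get that any complete lattice~$L$ is in particular a CPO, and so is its dual~$L^{\op}$.

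First I would handle the least fixpoint. Since~$L$ is a CPO and~$f : L \to L$ is monotone, Theorem~\ref{thm_fixpoint_cpo} directly gives the existence of~$\mu f$ together with the statement that~$\mu f$ is the limit of the initial sequence $(f^\alpha)_\alpha$ of~$f$. There is nothing more to do for this half.

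For the greatest fixpoint I would pass to the dual lattice. As noted above~$L^{\op}$ is a CPO, and the same underlying function~$f$ is monotone as a map~$L^{\op} \to L^{\op}$ (monotonicity is preserved when both the source and target orders are reversed). Applying Theorem~\ref{thm_fixpoint_cpo} to~$f$ viewed on~$L^{\op}$ yields an element which is the least fixpoint of~$f$ in~$L^{\op}$, and which is the limit of the initial sequence of~$f$ computed in~$L^{\op}$. Now~$\Fix(f)$ does not depend on the order, so a least element of~$\Fix(f)$ with respect to~$\ge$ is precisely a greatest element of~$\Fix(f)$ with respect to~$\le$, giving~$\nu f$. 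It remains to identify the initial sequence of~$f$ in~$L^{\op}$ with the final sequence of~$f$ in~$L$: the clauses match term-by-term because~$\bot_{L^{\op}} = \top_L$, the successor clause uses the same function~$f$, and suprema in~$L^{\op}$ are infima in~$L$, matching the three clauses in the definition of a final sequence in Definition~\ref{def_order}. Hence the limit in~$L^{\op}$ coincides with the limit in~$L$ of the final sequence, and this common element is~$\nu f$.

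I do not expect a genuine obstacle here; the whole content is a careful bookkeeping of duality, and the single sentence that needs a moment's care is the identification of the initial sequence in~$L^{\op}$ with the final sequence in~$L$. Everything else is a direct invocation of Theorem~\ref{thm_fixpoint_cpo} and the lattice/CPO facts listed just above.
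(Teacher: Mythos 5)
Your proof is correct and follows exactly the paper's own route: the $\mu f$ half is a direct application of Theorem~\ref{thm_fixpoint_cpo} to the complete lattice viewed as a CPO, and the $\nu f$ half applies the same theorem to the dual lattice. The extra bookkeeping you supply (dual of a complete lattice is a complete lattice, initial sequence in $L^{\op}$ equals final sequence in $L$) is just the detail the paper leaves implicit.
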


\begin{proof}
  The part about~$\mu f$ follows from the previous theorem, because
  every complete lattice is a CPO. The part about~$\nu f$ also follows
  from the previous theorem, by applying it to the dual of~$L$.
\end{proof}

The following theorem implies that every CPO has a maximal element.

\begin{theorem}[Kuratowski-Zorn Lemma]\label{thm_zorn}
  If~$P$ is a partial order in which every non-empty chain has an
  upper bound, then for every $x \in P$ there exists a maximal $y \ge
  x$.
\end{theorem}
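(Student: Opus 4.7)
The plan is to argue by contradiction via transfinite recursion, in the standard style using the Axiom of Choice. Suppose that $x \in P$ admits no maximal element above it in the up-set $P_x = \{y \in P : y \ge x\}$. Then every $y \in P_x$ has a strict upper bound in $P_x$, and every non-empty chain $C \subseteq P_x$ has an upper bound in $P_x$ (by the hypothesis of the theorem, together with the fact that any upper bound of $C$ lies above $x$). Using the Axiom of Choice, I fix two choice functions: $\phi : P_x \to P_x$ with $\phi(y) > y$ for every $y \in P_x$, and $\psi$ assigning to every non-empty chain $C \subseteq P_x$ an upper bound $\psi(C) \in P_x$.

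Next I define by transfinite recursion on ordinals $\alpha$ a sequence $(y_\alpha)_\alpha$ in $P_x$ by $y_0 = x$, $y_{\alpha+1} = \phi(y_\alpha)$, and $y_\lambda = \psi(\{y_\alpha : \alpha < \lambda\})$ for a limit ordinal $\lambda$. Simultaneously I prove by transfinite induction on $\beta$ that $y_\alpha < y_\beta$ whenever $\alpha < \beta$. The base case $\beta = 0$ is vacuous. For a successor $\beta = \gamma + 1$, if $\alpha \le \gamma$ then $y_\alpha \le y_\gamma < \phi(y_\gamma) = y_\beta$ by the inductive hypothesis (or reflexivity when $\alpha = \gamma$). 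For a limit $\beta = \lambda$ and $\alpha < \lambda$, note that $\alpha + 1 < \lambda$, so $y_\alpha < y_{\alpha+1} \le y_\lambda = y_\beta$, the last inequality holding since $y_\lambda$ is an upper bound of $\{y_\gamma : \gamma < \lambda\}$.

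The strict monotonicity of $\alpha \mapsto y_\alpha$ yields an injection from the proper class of all ordinals into the set $P$, which is impossible. Hence a maximal $y \ge x$ must exist, completing the contradiction.

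The main delicate point is that the transfinite recursion and the induction proving strict monotonicity are genuinely interdependent: the step defining $y_\lambda$ requires $\{y_\alpha : \alpha < \lambda\}$ to already be a non-empty chain in $P_x$ so that $\psi$ may legitimately be applied, and this in turn depends on the strict monotonicity established for smaller ordinals. Carrying the definition and the monotonicity induction out in parallel, as just described, resolves this circularity cleanly; once this is done the remainder of the argument is routine.
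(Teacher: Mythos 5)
Your argument is correct. Note that the paper does not actually prove this statement; it only cites Davey--Priestley, so there is no in-paper proof to compare against. What you give is the standard derivation of the Kuratowski--Zorn lemma from the Axiom of Choice by transfinite recursion: restrict to the up-set $P_x$, observe that the contradiction hypothesis makes every element of $P_x$ non-maximal and that upper bounds of non-empty chains in $P_x$ stay in $P_x$, build a strictly increasing ordinal-indexed sequence using the two choice functions, and derive a contradiction from injecting the class of ordinals into the set $P$ (Hartogs' theorem, or replacement). You also correctly flag and resolve the one genuinely delicate point, namely that the recursion clause at limit stages is only legitimate once strict monotonicity is known for smaller ordinals, so the definition and the monotonicity induction must be carried out simultaneously. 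The only cosmetic remark is that the final impossibility deserves a one-line justification (e.g.\ the least ordinal not injectable into $P$ exists and would be injected), but this is routine and does not affect correctness.
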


\begin{proof}
  See e.g.~\cite[Chapter~10]{DaveyPriestley2002}.
\end{proof}

\begin{lemma}\label{lem_monotone_fixpoint}
  Let $\Ab,\Bb$ be CPOs, and let $F : \Bb^\Ab \to \Bb^\Ab$ be
  monotone. If~$F(f)$ is monotone for each monotone $f \in \Bb^\Ab$,
  then the least fixpoint of~$F$ is monotone.
\end{lemma}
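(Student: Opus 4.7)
The plan is to show, by transfinite induction on $\alpha$, that every element $F^\alpha$ of the initial sequence of $F$ is monotone; then, since by Theorem~\ref{thm_fixpoint_cpo} the least fixpoint $\mu F$ is the limit of this sequence, i.e., $\mu F = F^\zeta$ for some ordinal $\zeta$, it follows that $\mu F$ is monotone. Note that $\Bb^\Ab$ is a CPO by the lemma characterising products of CPOs, so the initial sequence and its limit exist.

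For the base case, $F^0 = \bot_{\Bb^\Ab}$ is the constant function $a \mapsto \bot_\Bb$, which is vacuously monotone. For the successor case, the inductive hypothesis gives that $F^\alpha$ is monotone, so by our assumption on $F$ the function $F^{\alpha+1} = F(F^\alpha)$ is monotone as well. For a limit ordinal $\lambda$, we have $F^\lambda = \join_{\alpha<\lambda} F^\alpha$, where the join is taken in the product CPO $\Bb^\Ab$. Recall that by Lemma~\ref{lem_sequence} the family $\{F^\alpha\}_{\alpha<\lambda}$ is a chain, and hence directed; moreover, suprema of directed sets in a product CPO are computed pointwise, so
\[
F^\lambda(a) = \join_{\alpha<\lambda} F^\alpha(a)
\]
for every $a \in A$. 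Now if $a_1 \le a_2$ in $\Ab$, then by the inductive hypothesis $F^\alpha(a_1) \le F^\alpha(a_2)$ for each $\alpha<\lambda$, so by the definition of join we obtain $F^\lambda(a_1) \le F^\lambda(a_2)$. Hence $F^\lambda$ is monotone.

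There is essentially no single hard obstacle, as each case follows either from the hypothesis on $F$ or from elementary properties of suprema. The only subtlety worth flagging is the pointwise computation of directed suprema in $\Bb^\Ab$, which is what makes the limit step go through; this in turn relies on $\{F^\alpha\}_{\alpha<\lambda}$ being directed, guaranteed by Lemma~\ref{lem_sequence} applied to the monotone endofunction $F$ on the CPO $\Bb^\Ab$.
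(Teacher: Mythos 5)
Your proof is correct and follows essentially the same route as the paper's: transfinite induction showing each element of the initial sequence is monotone, with the limit case handled via the pointwise computation of the join of the chain $\{F^\alpha\}_{\alpha<\lambda}$ guaranteed by Lemma~\ref{lem_sequence}. The paper's limit-case argument is just a slightly more explicit unfolding of the same pointwise-supremum reasoning you invoke.
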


\begin{proof}
  Since~$F$ is monotone, its least fixpoint~$\mu F$ is the limit of
  the initial sequence~$(f^\alpha)_\alpha$ of~$F$. It suffices to show
  by induction on~$\alpha$ that each~$f^\alpha$ is monotone. If
  $\alpha = 0$ then this is obvious, because $f^0(x) = \bot$ for each
  $x \in \Ab$.  For $\alpha=\beta+1$, $f^\alpha = F(f^\beta)$ is
  monotone, because~$f^\beta$ is monotone by the inductive
  hypothesis. Thus let~$\alpha$ be a successor ordinal. Then
  $f^\alpha=\join_{\beta<\alpha}f^\beta$. By Lemma~\ref{lem_sequence},
  $\{f^\beta\mid\beta<\alpha\}$ is a chain in~$\Bb^\Ab$. Thus
  $\{f^\beta(x)\mid\beta<\alpha\}$ is a chain in~$\Bb$ for any
  $x\in\Ab$. Let $x,y\in\Ab$ and $x\le y$. Then $f^\beta(x)\le
  f^\beta(y)$ for $\beta<\alpha$, because~$f^\beta$ is monotone by the
  inductive hypothesis. Hence $f^\beta(x) \le
  \join_{\beta<\alpha}f^\beta(y)$. This holds for any $\beta<\alpha$,
  so
  $\join_{\beta<\alpha}f^\beta(x)\le\join_{\beta<\alpha}f^\beta(y)$. Thus
  $f^\alpha(x) \le f^\alpha(y)$. Therefore~$f^\alpha$ is monotone.
\end{proof}

\section{Coinductive techniques}\label{sec_coind_tech}

In this section we give an elementary presentation of coinductive
techniques.

In Section~\ref{sec_corecursion} we develop a theory to justify
possibly non-guarded corecursive definitions. The approach is to
extend the codomains to sized CPOs (see
Definition~\ref{def_sized_cpo}). In principle, this approach is fairly
general, because any final coalgebra in the category of sets may be
converted into a sized CPO (see the appendix). It is important to note
that the theory is formulated in such a way as to make it unnecessary
in most cases to deal directly with any CPO structure. Usually, to
prove that a function is well-defined by corecursion, it suffices to
show that a certain \emph{prefix production function} $\eta : \Nbb^k
\to \Nbb$ satisfies $\eta(n_1,\ldots,n_k) > \min_{i=1,\ldots,k}n_i$.

In Section~\ref{sec_corecursion_examples} we apply the theory to some
concrete examples. The examples involve many-sorted coterms. We also
develop a style of justifying corecursive definitions. This style is
close enough to our theory to be considered rigorous -- only some
straightforward checks are left implicit.

In Section~\ref{sec_coinduction} we develop a style of doing
coinductive proofs. Some complex examples are presented, with
explanations of how to rigorously justify their correctness.

In Section~\ref{sec_nested_coinduction} we give some examples of
definitions and proofs mixing coinduction with induction, or nesting
coinduction.

As already mentioned, the theory and the results of this section are
not really new. The aim of this section is to give an explanation of
coinduction understandable to a broad audience, and to introduce a
certain style of doing coinductive proofs. For this purpose, we give a
new presentation of ``essentially known'' facts, which may serve as a
reasonably direct justification for coinductive proofs.

\subsection{Corecursion}\label{sec_corecursion}

We are mostly interested in corecursion as a definition method for
functions with a set of possibly infinite objects as codomain. The
following example illustrates the kind of arguments which we want to
make precise.

\begin{example}
  A stream over a set~$A$ is an infinite sequence from~$A^\omega$. For
  $s \in A^\omega$ and $n \in \Nbb$, by~$s_n$ we denote the $n$-th
  element of~$s$. If $a \in A$ and $s \in A^\omega$, then by $a : s$
  we denote the stream~$s$ with~$a$ prepended, i.e., $(a : s)_0 = a$
  and $(a : s)_{n + 1} = s_n$. Consider the equation
  \[
  \even(x : y : t) = x : \even(t)
  \]
  Intuitively, this equation uniquely determines a function~$\even$ on
  streams such that $(\even(s))_n = s_{2n}$. In this simple case,
  using inductive reasoning one could show that the function~$\even$
  defined by $(\even(s))_n = s_{2n}$ is indeed the unique solution of
  the given equation. The problem is how to prove existence and
  uniqueness without finding an explicit definition of the function,
  which is often inconvenient or difficult.

  Informally, one way would be to argue as follows. We show by
  induction that for every $n \in \Nbb$ and any stream~$s$, $\even(s)$
  approximates a stream up to depth~$n$, i.e., at least the first~$n$
  elements of~$\even(s)$ are well-defined. Then it will follow that
  every element of~$\even(s)$ is well-defined, so~$\even(s)$ is a
  stream. For $n = 0$ it is obvious that~$\even(s)$ approximates a
  stream up to depth~$0$. Assume that for every stream~$s$, $\even(s)$
  approximates a stream up to depth~$n$. Let~$s$ be a stream. Since $s
  = x : y : s'$ for some stream~$s'$, we have $\even(s) = x :
  \even(s')$. By the inductive hypothesis, $\even(s')$ approximates a
  stream up to depth~$n$, so~$\even(s)$ approximates a stream up to
  depth~$n+1$.

  Of course, this argument is not rigorous, because we did not
  formally define what it means to approximate a stream up to depth
  $n\in\Nbb$ -- only an informal explanation was given. More formally,
  the proof could be formulated as follows.

  Let $P = A^* \cup A^\omega$ be ordered by~$\qle$ where: $s \qle s'$
  iff~$s$ is a prefix of~$s'$. One easily checks that $\la P, \qle\ra$
  is a CPO. For $s \in P$, by $|s| \in \Nbb \cup \{\infty\}$ we denote
  the length of~$s$. The function $F : P^{A^\omega} \to P^{A^\omega}$
  defined for $f \in P^{A^\omega}$, $s \in A^\omega$ by
  \[
  F(f)(s) = x : f(s') \text{ where } s = x : y : s'
  \]
  is monotone. Therefore, by Theorem~\ref{thm_fixpoint_cpo} it has the
  least fixpoint~$\even$. By induction we show that for every $n \in
  \Nbb$, $|\even(s)| \ge n$ for any $s \in A^\omega$. This is obvious
  for $n = 0$. Assume $|\even(s)| \ge n$ for every $s \in
  A^\omega$. Let $s \in A^\omega$. We have $\even(s) = F(\even)(s) = x
  : \even(s')$ where $s = x : y : s'$. From this and the inductive
  hypothesis we obtain $|\even(s)| \ge n + 1$. Therefore
  $|\even(s)|=\infty$ for every $s \in A^\omega$. Hence $\even \in
  A^\omega \to A^\omega$, i.e., it is maximal
  in~$P^{A^\omega}$. Since~$\even$ is maximal and it is the least
  fixpoint of~$F$, it must be the unique fixpoint of~$F$. Because
  every solution (in $A^\omega \to A^\omega$) of
  \[
  \even(x : y : s) = x : \even(s)
  \]
  is a fixpoint of~$F$, we conclude that this equation has a unique
  solution in $A^\omega \to A^\omega$ (namely, the fixpoint~$\even$
  of~$F$). \hfill$\Box$
\end{example}

In what follows we develop a theory which generalizes the above kind
of reasoning. To formulate the theory, we introduce a CPO structure on
each set of infinite objects we are interested in. The original
objects are maximal elements of the~CPO, with other elements of
the~CPO being their ``approximations''.

More specifically, let~$A$ and~$B$ be sets. We are interested in the
existence of a unique fixpoint $f : A \to B$ of a function $F:B^A\to
B^A$. The strategy for finding~$f$ is to find a CPO~$\Bb$ and a
monotone function \mbox{$F^+ : \Bb^A \to \Bb^A$} such that $\Max(\Bb)
= B$, $F^+(g)(x) = F(g)(x)$ for $x \in A$ and $g \in B^A$ (i.e.~$F^+$
agrees with~$F$ on maximal elements of~$\Bb^A$), and the least
fixpoint~$f$ of~$F^+$ is in~$B^A$ (i.e.~it is maximal in
$\Bb^A$). Then~$f$ is the unique fixpoint of~$F^+$, so it is also the
unique fixpoint of~$F$, because any fixpoint of~$F$ is a fixpoint
of~$F^+$. To show that the least fixpoint of~$F^+$ is maximal, we need
a notion of the size of an element of a CPO. This leads to the
following definition.

\begin{definition}\label{def_sized_cpo}
  A \emph{sized CPO} is a tuple $\la \Ab, \zeta, s, \tcut \ra$
  where~$\Ab$ is a~CPO, $\zeta$ is a \emph{size ordinal}, $s : \Ab \to
  \On(\zeta)$ is a \emph{size function}, and $\tcut : \On(\zeta)
  \times \Ab \to \Ab$ is a \emph{cut function}, such that the
  following conditions are satisfied for $x \in \Ab$ and $\alpha \le
  \zeta$:
  \begin{enumerate}
  \item $s$ is surjective and continuous,
  \item $s(x) = \zeta$ iff $x \in \Ab$ is maximal,
  \item $\tcut$ is monotone in both arguments,
  \item $s(\tcut(\alpha,x)) = \alpha$ if $s(x) > \alpha$,
  \item $\tcut(\alpha,x) = x$ if $s(x) \le \alpha$.
  \end{enumerate}
  Usually, we confuse a sized CPO with its underlying CPO. Thus e.g.~by
  a function between sized CPOs we just mean a function between their
  underlying CPOs. We say that a CPO~$\Ab$ is a sized CPO if there
  exists a size ordinal~$\zeta$, a size function $s : \Ab \to
  \On(\zeta)$ and a cut function $\tcut : \On(\zeta) \times \Ab \to
  \Ab$ such that $\la \Ab, \zeta, s, \tcut \ra$ is a sized CPO. Given
  a sized CPO~$\Ab$ we use~$\zeta_\Ab$ for its associated size
  ordinal, $s_\Ab$ for the associated size function, and~$\tcut_\Ab$
  for the associated cut function. We often drop the subscripts when
  clear from the context.

  Let~$S$ be a nonempty set. The \emph{flat sized CPO}~$S_\bot$ on~$S$
  is defined as $\la \la S \cup \{\bot\}, \le \ra, 1, s, \tcut\ra$
  where the following holds for $x,y \in S_\bot$:
  \begin{itemize}
  \item $x \le y$ iff $x = \bot$ or $x = y$,
  \item $s(x) = 1$ if $x \ne \bot$, $s(\bot) = 0$,
  \item $\tcut(0,x) = \bot$, $\tcut(1,x) = x$.
  \end{itemize}
  It is not difficult to check that~$S_\bot$ is indeed a sized CPO.

  Let $\Ab,\Bb$ be CPOs and $A,B$ their sets of maximal elements.  For
  $f^* : \Ab \to \Bb$, the \emph{restriction} $\cut{f^*}{A} : A \to
  \Bb$ of~$f^*$ is defined by $\cut{f^*}{A}(x)=f^*(x)$ for $x \in
  A$. Then~$f^*$ is an \emph{extension} of~$\cut{f^*}{A}$. A function
  between CPOs is \emph{regular} if it is monotone and
  max-preserving. Let~$S$ be an arbitrary set. A function $f : S
  \times \Ab \to \Bb$ is regular if\footnote{By $\lambda y . f(x, y)$
    we denote a function $f' : \Ab \to \Bb$ defined by
    $f'(y)=f(x,y)$. We will sometimes use the lambda notation in what
    follows.} $\lambda y . f(x,y)$ is regular for each $x \in S$.
\end{definition}

Intuitively, in a sized CPO~$\Ab$ the cut function~$\tcut(\alpha,x)$
``cuts'' an element~$x$ of size $>\alpha$ to its approximation of
size~$\alpha$, i.e., $\tcut(\alpha,x) \le x$ for every $x \in
\Ab$. Indeed, let $x \in \Ab$. If $s(x) \le \alpha$ then
$\tcut(\alpha,x) = x \le x$. So assume $s(x) > \alpha$. Then
$\tcut(\alpha,x) \le \tcut(s(x),x) = x$.

In the rest of this section we assume that $S,Q,\ldots$ are arbitrary
sets, and $\Ab,\Bb,\ldots$ are sized CPOs, and $A,B,\ldots$ are their
corresponding sets of maximal elements, unless otherwise stated.

\begin{lemma}\label{lem_unique_fixpoint}
  Suppose $F : \Ab^S \to \Ab^S$ is a monotone function satisfying
  \[
  \min_{x\in S}s(F(g)(x)) > \min_{x \in S} s(g(x))
  \]
  for each non-maximal $g \in \Ab^S$. Then~$F$ has a unique fixpoint.
  Moreover, this fixpoint is maximal (i.e.~a member of~$A^S$).
\end{lemma}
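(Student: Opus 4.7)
The plan is to combine the least fixpoint theorem on CPOs with the size-strict-increase hypothesis: produce $\mu F$ via Theorem~\ref{thm_fixpoint_cpo}, argue that $\mu F$ must already be maximal because otherwise applying $F$ would strictly increase its ``min size'', and then deduce uniqueness from the fact that any fixpoint lies above the least one.

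First I would note that $\Ab^S$ is the product CPO $\prod_{x\in S}\Ab$, and observe the elementary fact that an element $g\in\Ab^S$ is maximal in $\Ab^S$ if and only if $g(x)$ is maximal in $\Ab$ for every $x\in S$, that is, if and only if $g\in A^S$. (One direction uses that if some $g(x_0)$ admits a strict upper bound $y>g(x_0)$ in $\Ab$, one can modify $g$ at $x_0$ to $y$ to produce a strict upper bound in $\Ab^S$; the other direction is immediate from the product ordering.) By Theorem~\ref{thm_fixpoint_cpo} the monotone endofunction $F$ has a least fixpoint $\mu F\in\Ab^S$.

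Next I would show $\mu F$ is maximal. Suppose for contradiction that $\mu F$ is not maximal in $\Ab^S$. Then the hypothesis applies to $g:=\mu F$, giving
\[
\min_{x\in S} s\bigl(F(\mu F)(x)\bigr) \;>\; \min_{x\in S} s\bigl(\mu F(x)\bigr).
\]
These minima exist because the size function takes values in the ordinals $\On(\zeta)$, which are well-ordered, so any nonempty subset has a least element. But $F(\mu F)=\mu F$, so the displayed inequality reduces to a strict inequality of an ordinal with itself, which is absurd. Hence $\mu F$ is maximal, i.e.\ $\mu F\in A^S$.

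Finally, for uniqueness, let $g\in\Ab^S$ be any fixpoint of $F$. Since $\mu F$ is the least fixpoint, $\mu F\le g$ in $\Ab^S$. But $\mu F$ is maximal by the previous step, so $g=\mu F$. This proves both that $F$ has a unique fixpoint and that it lies in $A^S$. I do not expect a serious obstacle here; the only small care is the verification that ``maximal in the product'' is the same as ``maximal componentwise'', together with confirming that the minima in the hypothesis are genuinely attained (which follows from well-foundedness of the ordinals).
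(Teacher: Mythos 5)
Your proposal is correct and follows essentially the same route as the paper: obtain the least fixpoint via Theorem~\ref{thm_fixpoint_cpo}, derive a contradiction from the strict size inequality if it were non-maximal, and conclude uniqueness from maximality of the least fixpoint. The extra checks you flag (maximality in the product is componentwise maximality, and the minima are attained by well-ordering of $\On(\zeta)$) are left implicit in the paper but are correct and harmless to include.
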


\begin{proof}
  Because~$F$ is monotone, by Theorem~\ref{thm_fixpoint_cpo} it has
  the least fixpoint~$f$. It suffices to show that $f \in A^S$. Assume
  otherwise. Then~$f$ is not maximal, so
  \[
  \min_{x \in S}s(f(x)) < \min_{x \in S}s(F(f)(x)) = \min_{x \in
    S}s(f(x)).
  \]
  Contradiction.
\end{proof}

\begin{lemma}\label{lem_corecursion_regular}
  Let~$\Ab$ be a CPO and~$\Bb$ a sized CPO. Let
  $h:\Ab\times\Bb^m\to\Bb$ and $g_i : \Ab \to \Ab$ ($i=1,\ldots,m$) be
  regular. Suppose
  \[
  (\star)\quad\quad s(h(x,y_1,\ldots,y_m)) > \min_{i=1,\ldots,m}s(y_i)
  \]
  for all $x \in A$ and all $y_1,\ldots,y_m \in \Bb$ with some~$y_k$
  non-maximal. Then there exists the least fixpoint~$f^*$ of a
  function $F^* : \Bb^\Ab\to\Bb^\Ab$ defined by
  \[
  F^*(f)(x) = h(x, f(g_1(x)),\ldots,f(g_m(x)))
  \]
  for $f \in \Bb^\Ab$ and $x \in \Ab$. Moreover, $f^*$ is regular and
  $\cut{f^*}{A} \in B^A$ is the unique function in~$B^A$ satisfying
  \[
  \cut{f^*}{A}(x) = h(x, \cut{f^*}{A}(g_1(x)), \ldots,
  \cut{f^*}{A}(g_m(x)))
  \]
  for $x \in A$.
\end{lemma}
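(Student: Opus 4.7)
The plan is to first obtain $f^*$ as the least fixpoint of the monotone endofunction $F^*$ via Theorem~\ref{thm_fixpoint_cpo}, and then argue that $f^*$ is regular: monotonicity by Lemma~\ref{lem_monotone_fixpoint}, and max-preservation by restricting to $A$ and invoking Lemma~\ref{lem_unique_fixpoint}. To see that $F^*$ is monotone on $\Bb^\Ab$, observe that if $f \le f'$ pointwise then $f(g_i(x)) \le f'(g_i(x))$ for every $x$, so monotonicity of $h$ gives $F^*(f)(x) \le F^*(f')(x)$. Since $\Bb^\Ab$ is a CPO (as a product of CPOs), Theorem~\ref{thm_fixpoint_cpo} yields $f^* = \mu F^*$. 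For monotonicity of $f^*$ I would apply Lemma~\ref{lem_monotone_fixpoint} after checking that $F^*$ sends monotone functions to monotone functions: if $f$ is monotone and $x \le x'$ in $\Ab$, then $g_i(x) \le g_i(x')$ by regularity of $g_i$, then $f(g_i(x)) \le f(g_i(x'))$, and then joint monotonicity of $h$ on the product gives $F^*(f)(x) \le F^*(f)(x')$.

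The key step is max-preservation. Because each $g_i$ is max-preserving, the formula
\[
F^+(f)(x) = h(x, f(g_1(x)), \ldots, f(g_m(x)))
\]
defines a monotone endofunction $F^+$ on $\Bb^A$ (monotonicity by the same computation as for $F^*$). I claim $F^+$ satisfies the hypothesis of Lemma~\ref{lem_unique_fixpoint}. Fix a non-maximal $f \in \Bb^A$ and set $n = \min_{x \in A} s(f(x)) < \zeta$. For each $x \in A$, write $y_i = f(g_i(x))$. If some $y_k$ is non-maximal, then $(\star)$ gives $s(F^+(f)(x)) > \min_i s(y_i) \ge n$; if every $y_i$ is maximal, then $(x, y_1, \ldots, y_m)$ is a maximal element of $\Ab \times \Bb^m$, and max-preservation of $h$ forces $h(x, y_1, \ldots, y_m) \in B$, so $s(F^+(f)(x)) = \zeta > n$. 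In either case $s(F^+(f)(x)) \ge n+1$ for every $x \in A$, and therefore $\min_{x \in A} s(F^+(f)(x)) > n$. Lemma~\ref{lem_unique_fixpoint} then yields a unique and maximal fixpoint of $F^+$ in $\Bb^A$.

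Since $f^*$ is a fixpoint of $F^*$ and $g_i(A) \subseteq A$, the restriction $\cut{f^*}{A}$ is itself a fixpoint of $F^+$; by the uniqueness just established it must coincide with the maximal fixpoint of $F^+$, which gives $\cut{f^*}{A} \in B^A$ and completes the proof that $f^*$ is regular. Uniqueness of $\cut{f^*}{A}$ in $B^A$ is then immediate: any solution of the stated equation in $B^A$ is a fixpoint of $F^+$, and $F^+$ has a unique fixpoint. The main obstacle I expect is the case split in the max-preservation argument — in particular, handling the case where all the $y_i$ are maximal via the max-preservation part of the regularity of $h$ itself, rather than via $(\star)$, which only applies when some argument is non-maximal.
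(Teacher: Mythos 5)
Your proposal is correct and follows essentially the same route as the paper: monotonicity of $F^*$ and Theorem~\ref{thm_fixpoint_cpo} for existence of $f^*$, Lemma~\ref{lem_monotone_fixpoint} for monotonicity, and Lemma~\ref{lem_unique_fixpoint} applied to the restricted endofunction on $\Bb^A$ for max-preservation and uniqueness. The only (harmless) difference is organizational: you verify the size inequality pointwise at each $x \in A$, splitting on whether some $f(g_i(x))$ is non-maximal, whereas the paper splits on the set $A'$ of such $x$ globally — both yield $\min_{x\in A}s(F(f)(x)) > \min_{x\in A}s(f(x))$ for the same two reasons ($(\star)$ in one case, max-preservation of $h$ in the other).
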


\begin{proof}
  Since~$h$ is monotone, so is~$F^*$. Indeed, assume $f \le f'$ where
  $f,f' \in \Bb^\Ab$. To show $F^*(f) \le F^*(f')$ it suffices to
  prove $F^*(f)(x) \le F^*(f')(x)$ for $x \in \Ab$. But this follows
  from $f \le f'$ and the monotonicity of~$h$. Therefore, since~$F^*$
  is monotone, by Theorem~\ref{thm_fixpoint_cpo} it has the least
  fixpoint~$f^*$.

  Let $F : \Bb^A \to \Bb^A$ be defined by $F(f)(x) = h(x,
  f(g_1(x)),\ldots,f(g_m(x)))$. Note that indeed $F(f) \in \Bb^A$ for
  $f \in \Bb^A$, because each~$g_i$ is max-preserving.

  We show that for non-maximal $f \in \Bb^A$ we have $\min_{x\in
    A}s(F(f)(x)) > \min_{x\in A}s(f(x))$. Let $f \in \Bb^A$ be
  non-maximal. Let $A'\subseteq A$ be the set of all $x \in A$ such
  that~$f(g_i(x))$ is not maximal for some~$i$.

  First assume $A'=\emptyset$, i.e., $f(g_i(x))$ is maximal for all
  $i=1,\ldots,m$ and all $x\in A$. Then $F(f)(x) =
  h(x,f(g_1(x)),\ldots,f(g_m(x)))$ is maximal for $x\in A$,
  because~$h$ is max-preserving. Hence
  \[
  \min_{x\in A}s(F(f)(x)) = \zeta > \min_{x\in A}(s(f(x)))
  \]
  because~$F(f)(x)$ is maximal for all $x\in A$, but there is $x\in A$
  for which~$f(x)$ is not maximal.

  Thus assume $A'\ne\emptyset$. Since, for $x\in A$,
  $s(h(x,f(g_1(x)),\ldots,f(g_m(x))))=\zeta$ if~$f(g_i(x))$ is
  maximal for all $i\in I$, and $A\ne\emptyset$, we have
  \[
  \min_{x \in A} s(h(x,f(g_1(x)),\ldots,f(g_m(x))))=\min_{x \in A'}
  s(h(x,f(g_1(x)),\ldots,f(g_m(x))))
  \]
  Hence
  \[
  \begin{array}{rcl}
    \min_{x\in A}s(F(f)(x)) &=& \min_{x \in A} s(h(x,f(g_1(x)),\ldots,f(g_m(x)))) \\
    &=& \min_{x \in A'} s(h(x,f(g_1(x)),\ldots,f(g_m(x)))) \\
    &>& \min_{x \in A'} \min_{i=1,\ldots,m} s(f(g_i(x))) \\
    &\ge& \min_{x \in A} s(f(x))
  \end{array}
  \]
  where the strict inequality follows from~$(\star)$.

  Therefore, for non-maximal $f \in \Bb^A$ we have $\min_{x\in
    A}s(F(f)(x)) > \min_{x\in A}s(f(x))$. Thus by
  Lemma~\ref{lem_unique_fixpoint} the function~$F$ has a unique
  fixpoint~$u$. Recall that~$f^*$ is the least fixpoint of~$F^*$. Note
  that~$\cut{f^*}{A}$ is a fixpoint of~$F$. Indeed, for $x \in A$ we
  have
  \[
  f^*(x) = F^*(f^*)(x) = h(x, f^*(g_1(x)),\ldots,f^*(g_m(x))) =
  F(f^*)(x).
  \]
  Therefore, $\cut{f^*}{A} = u$, so it is the unique function in~$B^A$
  satisfying
  \[
  \cut{f^*}{A}(x) = h(x, \cut{f^*}{A}(g_1(x)), \ldots,
  \cut{f^*}{A}(g_m(x)))
  \]
  for $x \in A$.

  It remains to check that~$f^*$ is regular. Since $\cut{f^*}{A} \in
  B^A$, the function~$f^*$ is max-preserving. Because~$h$ and
  all~$g_i$ are monotone, for monotone~$f$ the function~$F^*(f)$ is
  monotone. By Lemma~\ref{lem_monotone_fixpoint} we thus conclude
  that~$f^*$ is monotone. Hence~$f^*$ is regular.
\end{proof}

\begin{corollary}\label{cor_general_corecursion}
  Let $h:S\times\Bb^m\to\Bb$ be regular. Let $g_i : S \to S$
  ($i=1,\ldots,m$). Suppose
  \[
  (\star)\quad\quad s(h(x,y_1,\ldots,y_m)) > \min_{i=1,\ldots,m}s(y_i)
  \]
  for all $x\in S$ and all $y_1,\ldots,y_m\in\Bb$ with some~$y_k$
  non-maximal. Then there exists a unique function $f : S \to B$
  satisfying
  \[
  f(x) = h(x, f(g_1(x)),\ldots,f(g_m(x)))
  \]
  for $x \in S$.
\end{corollary}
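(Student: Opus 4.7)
The plan is to reduce Corollary~\ref{cor_general_corecursion} to Lemma~\ref{lem_corecursion_regular} by equipping the bare set~$S$ with the structure of a sized~CPO, namely the flat sized CPO~$S_\bot$ of Definition~\ref{def_sized_cpo}. The set of maximal elements of~$S_\bot$ is exactly~$S$, so taking $\Ab = S_\bot$ gives $A = S$, matching the codomain requirement of the corollary.

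Next, I would extend the data of the corollary to the CPO setting. Each $g_i : S \to S$ extends to $g_i^* : \Ab \to \Ab$ by $g_i^*(\bot) = \bot$ and $g_i^*(x) = g_i(x)$ for $x \in S$; this is trivially monotone (since~$\Ab$ is flat) and max-preserving, hence regular. For~$h$, assuming $S \neq \emptyset$ (otherwise the empty function $\emptyset \to B$ is vacuously the unique solution), pick any $x_0 \in S$ and define $h^* : \Ab \times \Bb^m \to \Bb$ by $h^*(\bot,\vec{y}) = h(x_0,\vec{y})$ and $h^*(x,\vec{y}) = h(x,\vec{y})$ for $x \in S$. For each fixed first argument, $\lambda \vec{y}.\,h^*(x,\vec{y})$ is either $\lambda\vec{y}.\,h(x_0,\vec{y})$ or $\lambda\vec{y}.\,h(x,\vec{y})$, both regular by hypothesis, so~$h^*$ is regular in the sense needed by Lemma~\ref{lem_corecursion_regular}.

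Crucially, the hypothesis $(\star)$ of Lemma~\ref{lem_corecursion_regular} is required only at $x \in A = S$, and on~$S$ the functions~$h^*$ and~$h$ coincide; so the assumption of the corollary delivers $(\star)$ for~$h^*$ verbatim. Applying the lemma yields a regular least fixpoint $f^* : \Ab \to \Bb$ of the endofunction $f \mapsto \lambda x.\,h^*(x, f(g_1^*(x)), \ldots, f(g_m^*(x)))$, whose restriction $f := \cut{f^*}{S}$ lies in~$B^S$ and is the unique function in~$B^S$ satisfying
\[
f(x) = h^*(x, f(g_1^*(x)), \ldots, f(g_m^*(x))) = h(x, f(g_1(x)), \ldots, f(g_m(x)))
\]
for every $x \in S$, which is precisely the conclusion sought.

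There is essentially no hard step: Lemma~\ref{lem_corecursion_regular} does all the real work, and the only thing to watch out for is the minor bookkeeping of extending~$h$ to a regular function on all of~$\Ab = S_\bot$. Choosing a fixed witness $x_0 \in S$ to handle $h^*(\bot, -)$, and separately dismissing $S = \emptyset$, ensures the extension is regular without altering the behaviour on~$S$, after which the corollary follows immediately.
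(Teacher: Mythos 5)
Your proposal is correct and follows essentially the same route as the paper: pass to the flat sized CPO $S_\bot$, extend $h$ and the $g_i$ to regular functions, observe that $(\star)$ of Lemma~\ref{lem_corecursion_regular} is only demanded at maximal points of the first argument (where the extension agrees with $h$), and invoke Lemma~\ref{lem_corecursion_regular}. The one place you diverge is the value of $h^*$ at $\bot$: the paper sets $h^*(\bot,\bar{y})=\bot_\Bb$, whereas you set $h^*(\bot,\bar{y})=h(x_0,\bar{y})$ for a chosen $x_0\in S$. Your choice is regular in the sectionwise sense of Definition~\ref{def_sized_cpo}, which is all that the hypothesis of Lemma~\ref{lem_corecursion_regular} literally asks for, but it need not be monotone jointly in both arguments (the inequality $h(x_0,\bar{y})\le h(x',\bar{y})$ can fail for $x'\in S$), and the proof of Lemma~\ref{lem_corecursion_regular} does use joint monotonicity of $h$ at the point where it shows that the least fixpoint $f^*$ is monotone. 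Since the corollary only needs existence and uniqueness of the restriction to maximal elements --- which is established before that step --- your argument still delivers the conclusion, but the constant-$\bot$ extension is the safer choice and costs nothing. Your separate treatment of $S=\emptyset$ is a point the paper glosses over (the flat sized CPO is only defined for nonempty $S$, and the proof of Lemma~\ref{lem_corecursion_regular} uses $A\ne\emptyset$), so that is a welcome addition.
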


\begin{proof}
  Let $\Sb = S_\bot$ be the flat CPO on~$S$. There exists a regular
  extension $g_i^* : \Sb \to \Sb$ of each~$g_i$, defined by
  \[
  g_i^*(x) =
  \left\{
    \begin{array}{cl}
      x & \text{ if } x \in S \\
      \bot & \text{ otherwise }
    \end{array}
  \right.
  \]
  for $x \in \Sb$. Analogously, there exists a regular $h^* : \Sb
  \times \Bb^m \to \Bb$ defined by
  \[
  h^*(x,\bar{y}) =
  \left\{
    \begin{array}{cl}
      h(x,\bar{y}) & \text{ if } x \in S \\
      \bot & \text{ otherwise }
    \end{array}
  \right.
  \]
  for $x \in \Sb$ and $\bar{y} \in \Bb^m$. Moreover, $h^*$
  satisfies~$(\star)$ in
  Lemma~\ref{lem_corecursion_regular}. Therefore, we may apply
  Lemma~\ref{lem_corecursion_regular} to obtain the required
  function~$f$.
\end{proof}

At this point it is worthwhile to emphasize one aspect of our
approach. Ultimately, we really only care about the maximal elements
in a CPO, and only about functions between sets of maximal
elements. That we introduce a structure of a CPO is only to be able to
rigorously justify certain methods for defining corecursive
functions. But once these methods have been shown correct, to apply
them we usually do not need to directly deal with the CPO structure at
all. The following makes this more apparent.

\begin{definition}\label{def_corecursion}
  A function $f : S \to Q$ is \emph{defined by substitution} from $h :
  Q_1\times\ldots\times Q_m \to Q$ and $g_i:S\to Q_i$ ($i=1,\ldots,m$)
  if $f(x) = h(g_1(x),\ldots,g_m(x))$ for $x\in S$. A function $f : S
  \to Q$ is \emph{defined by cases} from functions $g_i:S\to Q$ and
  condition functions $h_i:S\to\{0,1\}$ for $i=1,\ldots,m$, if for $x
  \in S$:
  \begin{itemize}
  \item $f(x) = g_i(x)$ if $h_i(x) = 1$,
  \item $f(x) = g_0(x)$ if $h_i(x) = 0$ for all $i=1,\ldots,m$,
  \item there is no $x \in S$ with $h_i(x) = h_j(x) = 1$ for $i \ne
    j$.
  \end{itemize}
  A function $f : S \to Q$ is \emph{defined by corecursion} from $h :
  S \times Q^m \to Q$ and $g_i : S \to S$ ($i=1,\ldots,m$) if it is
  the unique function in~$Q^S$ satisfying
  \[
  f(x) = h(x,f(g_1(x)),\ldots,f(g_m(x)))
  \]
  for all $x \in S$. We say that~$h$ is a \emph{prefix function}
  for~$f$, and each~$g_i$ is an \emph{argument function} for~$f$.
  Note that given~$h$ and~$g_i$, there might not exist any function
  defined by corecursion from~$h$ and~$g_i$.

  A \emph{production function}
  $\eta_f:\On(\zeta_{\Ab_1})\times\ldots\times\On(\zeta_{\Ab_n})\to
  \On(\zeta_\Bb)$ for \mbox{$f:A_1\times\ldots\times A_n\to B$} is any
  function satisfying
  \[
  \eta_f(s(x_1),\ldots,s(x_n)) = s(f^*(x_1,\ldots,x_n))
  \]
  for $x_i \in \Ab_i$ ($i=1,\ldots,n$), where
  $f^*\in\Ab_1\times\ldots\times\Ab_n\to\Bb$ is a regular extension
  of~$f$. We then also say that~$\eta_f$ is a production function
  for~$f^*$, or that~$f^*$ is \emph{associated with}~$\eta_f$.  If a
  production function~$\eta_f$ for $f : A_1\times\ldots\times A_n \to
  B$ is clear from the context, then we use~$f^*$ to denote the
  regular extension of~$f$ associated with~$\eta_f$.

  Any production function~$\eta_h$ for a prefix function~$h$ for~$f$
  is called a \emph{global prefix production function for~$f$}. If $x
  \in S$ and $h : S \times B^m \to B$ is a prefix function for $f : S
  \to B$, then any production function~$\eta_h^x : \On(\zeta_\Bb)^m
  \to \On(\zeta_\Bb)$ for the \emph{$x$-local prefix function}
  $\lambda \bar{y} . h(x,\bar{y})$ is called an \emph{$x$-local prefix
    production function for~$f$}. We use the term \emph{prefix
    production function} for either a local or a global prefix
  production function, depending on the context.
\end{definition}

\begin{lemma}\label{lem_prodfun_regular}
  Any production function $\eta_f : \On(\zeta_{\Ab_1}) \times \ldots
  \times \On(\zeta_{\Ab_n}) \to \On(\zeta_\Bb)$ for a function $f :
  A_1 \times \ldots \times A_n \to B$ is regular.
\end{lemma}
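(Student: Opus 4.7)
The plan is to verify both conditions in the definition of regular, namely max-preservation and monotonicity, by transferring these properties from $f^*$ (which is regular by the definition of production function) through the relationship $\eta_f(s(x_1),\ldots,s(x_n)) = s(f^*(x_1,\ldots,x_n))$. The key trick is to use surjectivity of $s$ together with the cut function $\tcut$ to realize arbitrary ordinal inputs as actual elements of the sized CPOs.

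For max-preservation, the maximum of $\On(\zeta_{\Ab_i})$ is $\zeta_{\Ab_i}$ and the maximum of $\On(\zeta_\Bb)$ is $\zeta_\Bb$. Suppose $\alpha_i = \zeta_{\Ab_i}$ for each~$i$. By surjectivity of~$s_{\Ab_i}$, pick $x_i \in \Ab_i$ with $s(x_i) = \zeta_{\Ab_i}$; by property~(2) of sized CPOs, each $x_i$ is maximal, i.e.~$x_i \in A_i$. Since $f^*$ is max-preserving, $f^*(x_1,\ldots,x_n)$ is maximal in~$\Bb$, so $s(f^*(x_1,\ldots,x_n)) = \zeta_\Bb$. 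Hence $\eta_f(\zeta_{\Ab_1},\ldots,\zeta_{\Ab_n}) = \zeta_\Bb$.

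For monotonicity, given $\alpha_i \le \beta_i$ for each $i$, I would construct elements $x_i \le y_i$ in $\Ab_i$ with $s(x_i)=\alpha_i$ and $s(y_i)=\beta_i$, so that monotonicity of $f^*$ and of $s$ (the latter being continuous and hence monotone) yield
\[
\eta_f(\alpha_1,\ldots,\alpha_n) = s(f^*(x_1,\ldots,x_n)) \le s(f^*(y_1,\ldots,y_n)) = \eta_f(\beta_1,\ldots,\beta_n).
\]
To build such pairs, use surjectivity of~$s$ to choose $y_i$ with $s(y_i)=\beta_i$, then set $x_i := \tcut(\alpha_i, y_i)$. By the remark immediately following Definition~\ref{def_sized_cpo}, $x_i \le y_i$. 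For $s(x_i) = \alpha_i$: if $\alpha_i < \beta_i$ then $s(y_i) > \alpha_i$ and property~(4) gives $s(x_i) = \alpha_i$; if $\alpha_i = \beta_i$ then $s(y_i) \le \alpha_i$ and property~(5) gives $x_i = y_i$, so $s(x_i) = \beta_i = \alpha_i$.

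The only mild obstacle is just being careful to split into the two cases $\alpha_i < \beta_i$ and $\alpha_i = \beta_i$ when computing $s(\tcut(\alpha_i, y_i))$, since property~(4) only applies when $s(y_i) > \alpha_i$; otherwise everything follows mechanically from the sized CPO axioms and the regularity of $f^*$.
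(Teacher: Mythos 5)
Your proof is correct and follows essentially the same route as the paper's: both establish monotonicity by choosing $y_i$ with $s(y_i)=\beta_i$ via surjectivity, setting $x_i=\tcut(\alpha_i,y_i)$, and transferring monotonicity of $f^*$ and $s$ through the defining identity of $\eta_f$, and both get max-preservation from $f^*$ being max-preserving. Your explicit case split on $\alpha_i<\beta_i$ versus $\alpha_i=\beta_i$ when computing $s(\tcut(\alpha_i,y_i))$ is a minor point of extra care that the paper leaves implicit ("by the definition of $\tcut$"), but the argument is the same.
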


\begin{proof}
  Let $f^* : \Ab_1\times\ldots\times\Ab_n\to\Bb$ be the regular
  extension of~$f$ associated with~$\eta_f$. Let $\alpha_i \le \beta_i
  \le \zeta_{\Ab_i}$ for $i=1,\ldots,n$. Because the size functions
  for each~$\Ab_i$ are surjective, for every $i=1,\ldots,n$ there is
  $y_i \in \Ab_i$ such that $s(y_i) = \beta_i$. Let $x_i =
  \tcut(\alpha_i,y_i)$. Because of the monotonicity of the cut
  function we have $x_i \le \tcut(\beta_i,y_i) = y_i$. Also $s(x_i) =
  \alpha_i$ by the definition of~$\tcut$. Hence
  \[
  \begin{array}{rcl}
    \eta_f(\alpha_1,\ldots,\alpha_n) &=& \eta_f(s(x_1),\ldots,s(x_n)) \\
    &=& s(f^*(x_1,\ldots,x_n)) \\
    &\le& s(f^*(y_1,\ldots,y_n)) \\
    &=& \eta_f(s(y_1),\ldots,s(y_n)) \\
    &=& \eta_f(\beta_1,\ldots,\beta_n)
  \end{array}
  \]
  where the inequality follows from the fact that~$f^*$ and~$s$ are
  monotone. Therefore~$\eta_f$ is monotone.

  To show that $\eta_f$ is max-preserving, we need to prove
  $\eta_f(\zeta_{\Ab_1},\ldots,\zeta_{\Ab_n})=\zeta_{\Bb}$. Let
  $x_i\in A_i$ for $i=1,\ldots,n$. Then $f^*(x_1,\ldots,x_n)$ is
  maximal, because~$f^*$ is max-preserving. Thus \(
  \eta_f(\zeta_{\Ab_1},\ldots,\zeta_{\Ab_n}) =
  \eta_f(s(x_1),\ldots,s(x_n)) = s(f^*(x_1,\ldots,x_n)) =
  \zeta_{\Bb}. \)
\end{proof}

The following corollary implies that to determine whether there exists
a function defined by corecursion it suffices to bound the values of
local prefix production functions. Thus no analysis of the underlying
CPO structure is needed, as long as we are able to calculate the
production functions.

\begin{corollary}\label{cor_unique_solution}
  Let $h : S \times B^m \to B$ and $g_i : S \to S$
  ($i=1,\ldots,m$). Suppose for each $x \in S$, a function~$\eta_h^x$
  is an $x$-local prefix production function, i.e., a production
  function for $\lambda \bar{y} . h(x,\bar{y})$. Assume
  \[
  (\star)\quad\quad\eta_h^x(\alpha_1,\ldots,\alpha_m) > \min_{i=1,\ldots,m}\alpha_i
  \]
  for each $x \in S$ and all $\alpha_1,\ldots,\alpha_m\le\zeta_\Bb$
  such that $\alpha_k < \zeta_\Bb$ for some $1 \le k \le m$. Then
  there exists a function defined by corecursion from~$h$ and~$g_i$
  ($i=1,\ldots,m$), i.e., a unique function $f : S \to B$ satisfying
  \[
  f(x) = h(x, f(g_1(x)), \ldots, f(g_m(x)))
  \]
  for all $x \in S$.
\end{corollary}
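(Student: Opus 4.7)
The plan is to reduce this to Corollary~\ref{cor_general_corecursion} by pulling the regular extensions out of the local prefix production functions and assembling them into a single regular prefix function $h^\ast : S \times \Bb^m \to \Bb$.

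First, I would unpack the hypothesis. By the definition of an $x$-local prefix production function (Definition~\ref{def_corecursion}), for each $x \in S$ the existence of $\eta_h^x$ provides a regular extension $h_x^\ast : \Bb^m \to \Bb$ of $\lambda \bar y . h(x,\bar y) : B^m \to B$ such that $s(h_x^\ast(y_1,\ldots,y_m)) = \eta_h^x(s(y_1),\ldots,s(y_m))$ for all $y_i \in \Bb$. I would then define $h^\ast : S \times \Bb^m \to \Bb$ by $h^\ast(x,\bar y) = h_x^\ast(\bar y)$. By construction $h^\ast$ is regular (in the sense required by the corollary: for each fixed $x$, $\lambda \bar y . h^\ast(x,\bar y) = h_x^\ast$ is regular), and it extends $h$ in the sense that $h^\ast(x,\bar y) = h(x,\bar y)$ whenever $\bar y \in B^m$.

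Next I would verify that $h^\ast$ meets the hypothesis $(\star)$ of Corollary~\ref{cor_general_corecursion}. Fix $x \in S$ and $y_1,\ldots,y_m \in \Bb$ with some $y_k$ non-maximal. By property (2) of sized CPOs, $s(y_k) < \zeta_\Bb$, so setting $\alpha_i = s(y_i)$ the present hypothesis $(\star)$ gives
\[
s(h^\ast(x,y_1,\ldots,y_m)) = \eta_h^x(s(y_1),\ldots,s(y_m)) > \min_{i=1,\ldots,m} s(y_i),
\]
which is exactly what Corollary~\ref{cor_general_corecursion} requires. Applying that corollary yields a unique $f : S \to B$ with $f(x) = h^\ast(x, f(g_1(x)), \ldots, f(g_m(x)))$ for all $x \in S$. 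Since each $f(g_i(x)) \in B$ is maximal and $h^\ast$ agrees with $h$ on maximal arguments, the equation reduces to $f(x) = h(x, f(g_1(x)), \ldots, f(g_m(x)))$, so $f$ is a solution of the original recursion.

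For uniqueness, any $f' : S \to B$ satisfying the original equation automatically satisfies the $h^\ast$-equation on $B$-valued functions (again because $h^\ast$ and $h$ agree on maximal arguments), so uniqueness transfers from Corollary~\ref{cor_general_corecursion}. The only mildly delicate step is the bookkeeping in the first paragraph — checking that the local extensions $h_x^\ast$ can be glued into a jointly regular $h^\ast$ — but this is immediate from the definition of regular for functions of the form $S \times \Bb^m \to \Bb$, which only demands regularity in the $\Bb^m$ arguments. No direct manipulation of the CPO structure beyond what Lemma~\ref{lem_prodfun_regular} already encapsulates is needed.
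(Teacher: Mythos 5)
Your proof is correct and is essentially the paper's own argument: the paper's proof consists of the single line ``Follows from Corollary~\ref{cor_general_corecursion},'' and what you have written is precisely the bookkeeping that this citation leaves implicit (unpacking the definition of a local prefix production function to get the regular extensions $h_x^*$, gluing them into a regular $h^*$, translating $(\star)$ via property~(2) of sized CPOs, and transferring existence and uniqueness back along the agreement of $h^*$ with $h$ on maximal elements). Nothing is missing.
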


\begin{proof}
  Follows from Corollary~\ref{cor_general_corecursion}.
\end{proof}

Note that for any function $f : A_1\times\ldots\times A_n \to B$ there
exists a production function. Simply take the function~$\eta_f$
defined by:
\[
\eta_f(\alpha_1,\ldots,\alpha_n) =
\left\{
\begin{array}{cl}
  \zeta_\Bb & \text{ if } \alpha_i = \zeta_{\Ab_i} \text{ for } i=1,\ldots,n \\
  0 & \text{ otherwise }
\end{array}
\right.
\]
Then the regular function~$f^* : \Ab_1 \times \ldots \times \Ab_n \to
\Bb$ associated with~$\eta_f$ is defined by
\[
f^*(x_1,\ldots,x_n) =
\left\{
\begin{array}{cl}
  f(x_1,\ldots,x_n) & \text{ if } x_i \in A_i \text{ for } i=1,\ldots,n \\
  \bot & \text{ otherwise }
\end{array}
\right.
\]
The point is to be able to find ``sensible'' production functions, and
then use them to verify~$(\star)$ in
Corollary~\ref{cor_unique_solution}. Below we show how to compute
production functions for functions defined by substitution, cases or
corecursion.

Let $\Two = \{0,1\}_\bot$ be the flat sized CPO on $\{0,1\}$. In what
follows we assume that~$\Two$ is the sized CPO associated with
$\{0,1\}$, e.g., a production function for $f : A \to \{0,1\}$ is
assumed to have $\On(1) = \{0,1\}$ as its codomain. Recall that
$s_\Two(0)=s_\Two(1) = 1$ and $s_\Two(\bot) = 0$.

\begin{lemma}\label{lem_production_functions}~
  \begin{itemize}
  \item The function $\eta(\alpha_1,\ldots,\alpha_n) = \alpha_i$ is a
    continuous production function for the $i$-th projection function
    $\pi_i : A_1 \times \ldots \times A_n \to A_i$ defined by
    $\pi_i(x_1,\ldots,x_n) = x_i$.
  \item The function $\eta(\alpha) = \alpha$ is a continuous
    production function for the identity function $\id : A \to A$.
  \item The function $\eta : \{0,1\}^n \to \{0,1\}$ defined by
    $\eta(\alpha_1,\ldots,\alpha_n)=\min_{i=1,\ldots,n}\alpha_i$ is a
    continuous production function for any function $f : \{0,1\}^m \to
    \{0,1\}$.
  \end{itemize}
\end{lemma}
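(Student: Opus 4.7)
The plan is to handle each of the three items by exhibiting a regular extension $f^*$ realising the proposed $\eta$ as a production function, and then verifying continuity of $\eta$ separately.

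For the first item I take $\pi_i^*(x_1,\ldots,x_n)=x_i$ as a map on the product CPO $\Ab_1\times\ldots\times\Ab_n$. Monotonicity holds componentwise, and max-preservation follows from the fact that a tuple is maximal in a product of CPOs iff each coordinate is maximal. The production equation $\eta(s(x_1),\ldots,s(x_n))=s(x_i)=s(\pi_i^*(\vec{x}))$ is then immediate from the definitions. Continuity of $\eta$ reduces to the observation that directed suprema in $\prod_j\On(\zeta_{\Ab_j})$ are computed componentwise, so a projection trivially commutes with them. The second item is essentially the special case $n=1$: the extension $\id^*=\id_{\Ab}$ is evidently regular and $\eta(\alpha)=\alpha$ is obviously continuous.

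The third item needs slightly more care, since here the ambient sized CPO for $\{0,1\}$ is the flat $\Two=\{0,1\}_\bot$. Given $f:\{0,1\}^m\to\{0,1\}$, I define $f^*:\Two^m\to\Two$ by $f^*(\vec{x})=f(\vec{x})$ when every $x_i\in\{0,1\}$ and $f^*(\vec{x})=\bot$ otherwise. Because the order on $\Two$ is flat, any comparison $x_i\le y_i$ forces $x_i=\bot$ or $x_i=y_i$, so whenever $\vec{x}\le\vec{y}$ has all entries of $\vec{x}$ non-$\bot$ we must have $\vec{x}=\vec{y}$; this gives monotonicity, and max-preservation is built in. The production equation $s_\Two(f^*(\vec{x}))=\min_i s_\Two(x_i)$ holds because the right-hand side equals $1$ exactly when every $x_i$ is maximal, which is exactly when $f^*(\vec{x})=f(\vec{x})\in\{0,1\}$ is maximal. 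Continuity of $\eta=\min$ on $\On(1)^n=\{0,1\}^n$ follows because the product is finite, every directed subset has a maximum that coincides with its supremum, and $\min$ is monotone, so it preserves this maximum.

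None of the three verifications presents a real obstacle; the only point that needs attention is choosing the extension in the third item so that both regularity and the production equation hold simultaneously, which is precisely why the ``$\bot$ as soon as any argument is $\bot$'' convention is the right one.
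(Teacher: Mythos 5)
Your proposal is correct and fills in exactly the routine verifications that the paper compresses into ``Follows from definitions'': in each case you exhibit the evident regular extension, check the production equation, and note continuity. The only point worth highlighting is the one you already flag yourself -- sending the value to $\bot$ as soon as any argument of the flat CPO $\Two^m$ is $\bot$ in the third item -- and your handling of it is the intended one.
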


\begin{proof}
  Follows from definitions.
\end{proof}

\begin{lemma}\label{lem_substitution}
  If a function $f : A_1\times\ldots\times A_n \to B$ is defined by
  substitution from functions $h : B_1\times\ldots\times B_m\to B$ and
  $g_i : A_1\times\ldots\times A_n \to B_i$ ($i=1,\ldots,m$), and
  $\eta_h$ and $\eta_{g_i}$ are production functions for~$h$ and~$g_i$
  respectively, then the function~$\eta_f$ defined by
  \[
  \eta_f(\alpha_1,\ldots,\alpha_n) =
  \eta_h(\eta_{g_1}(\alpha_1,\ldots,\alpha_n),\ldots,\eta_{g_m}(\alpha_1,\ldots,\alpha_n))
  \]
  is a production function for~$f$. Moreover, if~$\eta_h$ and
  all~$\eta_{g_i}$ are continuous, then so is~$\eta_f$.
\end{lemma}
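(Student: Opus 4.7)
The plan is to build the required regular extension $f^\ast$ of $f$ by composing the regular extensions associated with $\eta_h$ and each $\eta_{g_i}$, and then verify the size equation by straightforward substitution. More specifically, let $h^\ast : \Bb_1\times\ldots\times\Bb_m\to\Bb$ be the regular extension of $h$ associated with $\eta_h$, and for each $i=1,\ldots,m$ let $g_i^\ast : \Ab_1\times\ldots\times\Ab_n\to\Bb_i$ be the regular extension of $g_i$ associated with $\eta_{g_i}$. I would then define
\[
f^\ast(x_1,\ldots,x_n) = h^\ast(g_1^\ast(x_1,\ldots,x_n),\ldots,g_m^\ast(x_1,\ldots,x_n))
\]
for $x_i\in\Ab_i$, and show that $f^\ast$ is a regular extension of $f$ associated with~$\eta_f$.

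First I would check that $f^\ast$ is regular. Monotonicity is inherited from composition, since the product ordering on $\Ab_1\times\ldots\times\Ab_n$ makes $(g_1^\ast,\ldots,g_m^\ast)$ componentwise monotone, and $h^\ast$ is monotone. The function $f^\ast$ is max-preserving because whenever each $x_i\in A_i$, every $g_i^\ast(x_1,\ldots,x_n)\in B_i$ (as each $g_i^\ast$ is max-preserving), and then $h^\ast$ applied to maximal elements yields a maximal element of $\Bb$. On such maximal tuples we also have $g_i^\ast(x_1,\ldots,x_n)=g_i(x_1,\ldots,x_n)$ and $h^\ast(y_1,\ldots,y_m)=h(y_1,\ldots,y_m)$ for $y_i\in B_i$, so $f^\ast$ restricts to $f$ on $A_1\times\ldots\times A_n$, i.e., it is an extension of~$f$.

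Next, the size equation follows by unwinding the definitions: for any $x_1,\ldots,x_n$,
\[
\begin{array}{rcl}
s(f^\ast(x_1,\ldots,x_n)) &=& s(h^\ast(g_1^\ast(x_1,\ldots,x_n),\ldots,g_m^\ast(x_1,\ldots,x_n))) \\
&=& \eta_h(s(g_1^\ast(x_1,\ldots,x_n)),\ldots,s(g_m^\ast(x_1,\ldots,x_n))) \\
&=& \eta_h(\eta_{g_1}(s(x_1),\ldots,s(x_n)),\ldots,\eta_{g_m}(s(x_1),\ldots,s(x_n))) \\
&=& \eta_f(s(x_1),\ldots,s(x_n)),
\end{array}
\]
where the second and third equalities use the defining property of the production functions $\eta_h$ and $\eta_{g_i}$, and the last equality is the definition of $\eta_f$. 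Hence $\eta_f$ is a production function for~$f$ witnessed by the regular extension $f^\ast$.

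Finally, for the continuity clause, observe that the product of continuous functions into a product CPO is continuous, so if each $\eta_{g_i}$ is continuous then $\alpha\mapsto(\eta_{g_1}(\alpha),\ldots,\eta_{g_m}(\alpha))$ is continuous; composing with the continuous $\eta_h$ then yields continuity of $\eta_f$. There is no real obstacle here: the only thing one has to be slightly careful about is keeping track of which extension is associated with which production function, so that the equalities $s\circ h^\ast = \eta_h\circ(s,\ldots,s)$ and $s\circ g_i^\ast = \eta_{g_i}\circ(s,\ldots,s)$ can be chained cleanly.
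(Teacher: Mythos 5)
Your proposal is correct and is exactly the argument the paper has in mind: the paper's own proof of this lemma is simply ``Follows directly from definitions,'' and your construction of $f^\ast$ as the composition $h^\ast\circ(g_1^\ast,\ldots,g_m^\ast)$, together with the chain of size equalities and the composition argument for continuity, is the straightforward unwinding of those definitions. No gaps.
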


\begin{proof}
  Follows directly from definitions.
\end{proof}

\begin{corollary}\label{cor_permutation}~
  \begin{itemize}
  \item If $\eta_g$ is a (continuous) production
    function for $g: A^m\to B$, then
    \[
    \eta(\alpha_1,\ldots,\alpha_m) =
    \eta_g(\alpha_{\tau(1)},\ldots,\alpha_{\tau(m)})
    \]
    is a (continuous) production function for $f : A^m \to B$ defined
    by
    \[
    f(x_1,\ldots,x_m)=g(x_{\tau(1)},\ldots,x_{\tau(m)})
    \]
    where $\tau : \{1,\ldots,m\} \to \{1,\ldots,m\}$.
  \item If $\eta_g$ is a (continuous) production function for $g :
    A_1\times\ldots\times A_n \to B$, then
    \[
    \eta(\alpha_1,\ldots,\alpha_n,\beta_1,\ldots,\beta_k) =
    \eta_g(\alpha_1,\ldots,\alpha_n)
    \]
    is a (continuous) production function for $f :
    A_1\times\ldots\times A_n\times B_1 \times\ldots\times B_k \to B$
    defined by $f(x_1,\ldots,x_n,y_1,\ldots,y_k)=g(x_1,\ldots,x_n)$
    for $x_i\in A_i$ ($i=1,\ldots,n$), $y_i\in B_i$ ($i=1,\ldots,k$).
  \end{itemize}
\end{corollary}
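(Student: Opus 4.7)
The plan is to derive both clauses as direct instances of the substitution rule (Lemma~\ref{lem_substitution}) applied to projection functions, whose production functions were computed in Lemma~\ref{lem_production_functions}.

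For the first clause, I would observe that the function $f(x_1,\ldots,x_m) = g(x_{\tau(1)},\ldots,x_{\tau(m)})$ is defined by substitution from~$g$ and the projection functions $\pi_{\tau(i)} : A^m \to A$, i.e.\ $f = g \circ (\pi_{\tau(1)},\ldots,\pi_{\tau(m)})$. By Lemma~\ref{lem_production_functions}, each~$\pi_{\tau(i)}$ has the continuous production function $\eta_{\pi_{\tau(i)}}(\alpha_1,\ldots,\alpha_m) = \alpha_{\tau(i)}$. Applying Lemma~\ref{lem_substitution}, a production function for~$f$ is
\[
\eta_f(\alpha_1,\ldots,\alpha_m) = \eta_g\bigl(\eta_{\pi_{\tau(1)}}(\alpha_1,\ldots,\alpha_m),\ldots,\eta_{\pi_{\tau(m)}}(\alpha_1,\ldots,\alpha_m)\bigr) = \eta_g(\alpha_{\tau(1)},\ldots,\alpha_{\tau(m)}),
\]
which is exactly the function~$\eta$ in the statement. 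Continuity is preserved by Lemma~\ref{lem_substitution} since the projections' production functions are continuous.

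For the second clause, the argument is analogous but uses the projections $\pi_i : A_1 \times \ldots \times A_n \times B_1 \times \ldots \times B_k \to A_i$ for $i = 1,\ldots,n$. Each such projection has the continuous production function $\eta_{\pi_i}(\alpha_1,\ldots,\alpha_n,\beta_1,\ldots,\beta_k) = \alpha_i$ by (a straightforward extension of) Lemma~\ref{lem_production_functions}. Since $f = g \circ (\pi_1,\ldots,\pi_n)$, Lemma~\ref{lem_substitution} yields
\[
\eta_f(\alpha_1,\ldots,\alpha_n,\beta_1,\ldots,\beta_k) = \eta_g(\alpha_1,\ldots,\alpha_n),
\]
with continuity again inherited from the continuity of the projection production functions together with~$\eta_g$.

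There is no real obstacle here: the corollary is essentially a packaged instance of substitution composed with projections. The only minor point worth mentioning is that for the second clause one uses projections out of a larger product than the one considered explicitly in Lemma~\ref{lem_production_functions}, but the proof that $\eta_{\pi_i}(\alpha_1,\ldots,\alpha_n,\beta_1,\ldots,\beta_k) = \alpha_i$ is a continuous production function is literally the same calculation. Thus a one-line appeal to Lemmas~\ref{lem_production_functions} and~\ref{lem_substitution} suffices for both parts.
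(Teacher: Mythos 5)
Your proposal is correct and is exactly the paper's argument: the paper's proof is the one-line remark that the corollary follows from the first point of Lemma~\ref{lem_production_functions} and from Lemma~\ref{lem_substitution}, which is precisely your decomposition of $f$ as $g$ composed with projections. (Your caveat about projections out of a larger product is unnecessary: Lemma~\ref{lem_production_functions} already states the projection case for an arbitrary product $A_1\times\ldots\times A_n$.)
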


\begin{proof}
  Follows from the first point of Lemma~\ref{lem_production_functions}
  and from Lemma~\ref{lem_substitution}.
\end{proof}

\begin{lemma}\label{lem_cases}
  If a function $f : A_1\times\ldots\times A_n \to B$ is defined by
  cases from $g_i : A_1\times\ldots\times A_n \to B$ ($i=0,\ldots,m$)
  and $h_i : A_1\times\ldots\times A_n \to \{0,1\}$ ($i=1,\ldots,m$),
  and $\eta_{g_i}$ is a production function for~$g_i$, and
  $\eta_{h_i}$ is a production function for~$h_i$, then the function
  $\eta_f$ defined by
  \[
  \eta_f(\alpha_1,\ldots,\alpha_n) = \left\{
    \begin{array}{cl}
      \displaystyle{\min_{i=0,\ldots,m} \eta_{g_i}(\alpha_1,\ldots,\alpha_n)} &
      \text{ if } \eta_{h_i}(\alpha_1,\ldots,\alpha_n) = 1 \text{ for every } i=1,\ldots,m \\
      0 & \text{ otherwise }
    \end{array}
  \right.
  \]
  is a production function for~$f$. Moreover, if all~$\eta_{g_i}$ and
  all~$\eta_{h_i}$ are continuous, then so is~$\eta_f$.
\end{lemma}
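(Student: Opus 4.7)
The plan is to explicitly construct the regular extension $f^* : \Ab_1 \times \ldots \times \Ab_n \to \Bb$ associated with $\eta_f$, using the regular extensions $g_i^*$ and $h_i^*$ that come with the given production functions. Concretely, given $(x_1,\ldots,x_n)$, set $f^*(x_1,\ldots,x_n) = \bot$ whenever some $h_i^*(x_1,\ldots,x_n) = \bot$; otherwise every $h_i^*(x_1,\ldots,x_n)$ is either $0$ or $1$ in~$\Two$, so either pick the unique $i_0 \in \{1,\ldots,m\}$ with $h_{i_0}^*(x_1,\ldots,x_n) = 1$, or take $i_0 = 0$ if all $h_i^*(x_1,\ldots,x_n) = 0$. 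Then define $f^*(x_1,\ldots,x_n) = \tcut(\alpha, g_{i_0}^*(x_1,\ldots,x_n))$ where $\alpha = \min_{j=0,\ldots,m} \eta_{g_j}(s(x_1),\ldots,s(x_n))$. The cut ensures the size drops exactly to the minimum regardless of which branch is taken, which is essential for a production function that depends only on the input sizes.

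Next I would verify that $f^*$ is regular and extends $f$ on maximal arguments. For monotonicity, suppose $x_i \le x_i'$ for all $i$: if any $h_i^*(\vec x) = \bot$ there is nothing to prove, otherwise each $h_i^*(\vec x)$ is maximal in the flat CPO $\Two$ and so monotonicity of $h_i^*$ forces $h_i^*(\vec{x'}) = h_i^*(\vec x)$, meaning the same index $i_0$ is selected. Then monotonicity of $\eta_{g_j}$ and of $\tcut$ in both arguments, combined with monotonicity of $g_{i_0}^*$, yields $f^*(\vec x) \le f^*(\vec{x'})$. For max-preservation and the extension property, if all $x_i \in A_i$ then max-preservation of $h_i^*$ gives $h_i^*(\vec x) = h_i(\vec x)$ maximal, so $i_0$ is exactly the case selected by the definition of $f$, and $\alpha = \eta_f(\zeta_{\Ab_1},\ldots,\zeta_{\Ab_n}) = \zeta_\Bb$ by Lemma~\ref{lem_prodfun_regular}, whence $f^*(\vec x) = g_{i_0}^*(\vec x) = g_{i_0}(\vec x) = f(\vec x)$.

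The production equation $\eta_f(s(\vec x)) = s(f^*(\vec x))$ then splits along the definition of $\eta_f$. When all $\eta_{h_i}(s(\vec x)) = 1$, every $h_i^*(\vec x)$ is defined, so $f^*(\vec x) = \tcut(\alpha, g_{i_0}^*(\vec x))$ and $s(g_{i_0}^*(\vec x)) = \eta_{g_{i_0}}(s(\vec x)) \ge \alpha$, so the cut has size exactly $\alpha = \min_j \eta_{g_j}(s(\vec x)) = \eta_f(s(\vec x))$. Otherwise some $\eta_{h_i}(s(\vec x)) = 0$ forces $h_i^*(\vec x) = \bot$, so $f^*(\vec x) = \bot$ with $s(\bot) = 0 = \eta_f(s(\vec x))$.

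The only real obstacle is continuity of $\eta_f$ when all $\eta_{g_i}, \eta_{h_i}$ are continuous. Given a directed set $D$ with $\sup D = \beta$, if some $\eta_{h_i}(\beta) = 0$ then monotonicity gives $\eta_{h_i}(d) = 0$ for every $d \in D$, so $\eta_f(d) = 0 = \eta_f(\beta)$ throughout and continuity is trivial. If all $\eta_{h_i}(\beta) = 1$, continuity gives $1 = \sup_{d \in D} \eta_{h_i}(d)$ in $\Two$; by directedness of $D$ and monotonicity of each $\eta_{h_i}$ one finds a single $d^* \in D$ with $\eta_{h_i}(d) = 1$ for all $d \ge d^*$ and all $i$. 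On this cofinal tail $\eta_f$ coincides with $\min_i \eta_{g_i}$, so the problem reduces to showing $\min_{i} \sup_{d \ge d^*} \eta_{g_i}(d) = \sup_{d \ge d^*} \min_i \eta_{g_i}(d)$. The inequality $\ge$ is immediate; for $\le$, given any $\alpha$ strictly below the left-hand side, directedness lets us find for each $i$ a witness $d_i$ with $\eta_{g_i}(d_i) > \alpha$ and a common upper bound $d \ge d^*, d_1, \ldots, d_m$ in $D$, making $\min_i \eta_{g_i}(d) > \alpha$. Combining this with continuity of each $\eta_{g_i}$ gives $\eta_f(\beta) = \sup_{d \in D} \eta_f(d)$, as required.
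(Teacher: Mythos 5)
Your proposal is correct and follows essentially the same route as the paper: the same explicit extension $f^*(\vec x) = \tcut\bigl(\min_j \eta_{g_j}(s(\vec x)),\, g_{i_0}^*(\vec x)\bigr)$ with $\bot$ when some $h_i^*(\vec x)=\bot$, the same regularity and size computations, and the same cofinal-tail plus min/sup-interchange argument for continuity. The one imprecision is your phrase ``the unique $i_0$ with $h_{i_0}^*(\vec x)=1$'': the disjointness condition in the definition by cases only constrains the $h_i$ on maximal arguments, so on non-maximal inputs several $h_i^*$ may equal $1$ and you need a deterministic tie-break (the paper takes the least such index); since the cut forces the size to $\min_j \eta_{g_j}(s(\vec x))$ regardless of which index is chosen, nothing else in your argument changes.
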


\begin{proof}
  Let $g_i^* : \Ab_1\times\ldots\times\Ab_n \to \Bb$ and \mbox{$h_i^*
    : \Ab_1\times\ldots\times\Ab_n \to \Two$} be the regular
  extensions associated with~$\eta_{g_i}$ and~$\eta_{h_i}$
  respectively. Define \mbox{$f^* : \Ab_1\times\ldots\times\Ab_n \to
    \Bb$} by
  \[
  f^*(x_1,\ldots,x_n) =
  \left\{
    \begin{array}{cl}
      \tcut(\kappa(x_1,\ldots,x_n), g_k^*(x_1,\ldots,x_n)) &
      \text{ if } h_i^*(x_1,\ldots,x_n) \ne \bot \text{ for } 1
      \le i \le m, \\
      & \text{ and } k \text{ is least s.t. } h_k^*(x_1,\ldots,x_n) = 1 \\
      \tcut(\kappa(x_1,\ldots,x_n), g_0^*(x_1,\ldots,x_n)) & \text{ if } h_i^*(x_1,\ldots,x_n) = 0 \text{ for } 1 \le
      i \le m \\
      \bot & \text{ otherwise }
    \end{array}
  \right.
  \]
  where $x_i \in \Ab_i$ ($i=1,\ldots,n$) and $\kappa(x_1,\ldots,x_n) =
  \min_{i=0,\ldots,m} \eta_{g_i}(s(x_1),\ldots,s(x_n))$. One easily
  checks that~$f^*$ is an extension of~$f$. Hence~$f^*$ is
  max-preserving. To show that~$f^*$ is regular it thus suffices to
  check that it is monotone. Assume $x_i \le y_i$ for
  $i=1,\ldots,n$. We need to show $f^*(x_1,\ldots,x_n) \le
  f^*(y_1,\ldots,y_n)$. If $f^*(x_1,\ldots,x_n) = \bot$ then this is
  obvious. So assume, e.g., $h_k^*(x_1,\ldots,x_n) = 1$ and
  $h_i^*(x_1,\ldots,x_n) \ne \bot$ for $i=1,\ldots,m$. Then
  $h_i^*(y_1,\ldots,y_n) = h_i^*(x_1,\ldots,x_n)$ for $i=1,\ldots,m$,
  because each~$h_i^*$ is monotone. Thus it suffices to show
  $\tcut(\kappa(x_1,\ldots,x_n), g_k^*(x_1,\ldots,x_n)) \le
  \tcut(\kappa(y_1,\ldots,y_n),
  g_k^*(y_1,\ldots,y_n))$. Because~$g_i^*$ for $i=0,\ldots,m$ and~$s$
  are monotone, $s(g_i^*(x_1,\ldots,x_n)) \le
  s(g_i^*(y_1,\ldots,y_n))$ for $i=0,\ldots,m$. Hence
  \[
  \begin{array}{rcl}
    \kappa(x_1,\ldots,x_n) &=& \min_{i=0,\ldots,m}
    \eta_{g_i}(s(x_1),\ldots,s(x_n)) \\ &=& \min_{i=0,\ldots,m}
    s(g_i^*(x_1,\ldots,x_n)) \\ &\le& \min_{i=0,\ldots,m}
    s(g_i^*(y_1,\ldots,y_n)) \\ &=& \min_{i=0,\ldots,m}
    \eta_{g_i}(s(y_1),\ldots,s(y_n))\\ &=& \kappa(y_1,\ldots,y_n).
  \end{array}
  \]
  Therefore
  \[
  \tcut(\kappa(x_1,\ldots,x_n), g_k^*(x_1,\ldots,x_n)) \le
  \tcut(\kappa(y_1,\ldots,y_n), g_k^*(y_1,\ldots,y_n))
  \]
  because~$g_k^*$ and~$\tcut$ are monotone.

  We now check that the function~$\eta_f$ defined in the statement of
  the theorem is a production function for~$f^*$. Let $x_i \in \Ab_i$
  for $i=1,\ldots,n$. If $h_i^*(x_1,\ldots,x_n) = \bot$ for some
  $i=1,\ldots,m$, then $\eta_{h_i}(s(x_1),\ldots,s(x_n)) = 0$ and
  $f^*(x_1,\ldots,x_n) = \bot$. Hence
  \[
  \eta_f(s(x_1),\ldots,s(x_n)) = 0 = s(f^*(x_1,\ldots,x_n)).
  \]
  Thus assume, e.g., $h_i^*(x_1,\ldots,x_n) = 0$ for all
  $i=1,\ldots,m$. Then $\eta_{h_i}(s(x_1),\ldots,s(x_n)) = 1$ for
  every $i=1,\ldots,m$, and $f^*(x_1,\ldots,x_n) =
  \tcut(\kappa(x_1,\ldots,x_n),g_0^*(x_1,\ldots,x_n))$. Therefore
  \[
  \begin{array}{rcl}
    \eta_f(s(x_1),\ldots,s(x_n)) &=&
    \min_{i=0,\ldots,m}\eta_{g_i}(s(x_1),\ldots,s(x_n)) \\
    &=& \kappa(x_1,\ldots,x_n) \\
    &=& \min\{\kappa(x_1,\ldots,x_n), \eta_{g_0}(s(x_1),\ldots,s(x_n))\} \\
    &=& \min\{\kappa(x_1,\ldots,x_n), s(g_0^*(x_1,\ldots,x_n))\} \\
    &=& s(\tcut(\kappa(x_1,\ldots,x_n), g_0^*(x_1,\ldots,x_n))) \\
    &=& s(f^*(x_1,\ldots,x_n)).
  \end{array}
  \]

  It remains to show that if all~$\eta_{g_i}$ and~$\eta_{h_i}$ are
  continuous, then so is~$\eta_f$. Let $D \subseteq
  \Ab_1\times\ldots\times\Ab_n$ be a directed set. First assume
  $\eta_{h_i}(\join D) = 0$ for some $1\le i \le m$. Then
  $\eta_f(\join D) = 0$ by the definition of~$\eta_f$. Also $\join
  \eta_{h_i}(D) = \eta_{h_i}(\join D) = 0$ by continuity
  of~$\eta_{h_i}$. Hence $\eta_{h_i}(d) = 0$ for every $d \in D$. So
  $\eta_f(d) = 0$ for $d \in D$. Hence $\join \eta_f(D) = 0 =
  \eta_f(\join D)$.

  So assume $\eta_{h_i}(\join D) = 1$ for every $1 \le i \le m$. Then
  \[
  \eta_f(\join D) = \min_{i=0,\ldots,m}\eta_{g_i}(\join D) =
  \min_{i=0,\ldots,m}\join\eta_{g_i}(D)
  \]
  by the continuity of~$\eta_{g_i}$. Let~$D^*$ be the set of all $d
  \in D$ such that $\eta_{h_i}(d) = 1$ for all $i=1,\ldots,m$. We have
  $\join \eta_{h_i}(D) = \eta_{h_i}(\join D) = 1$ for every
  $i=1,\ldots,m$. So for every $i=1,\ldots,m$ there is $d_i \in D$
  such that $\eta_{h_i}(d_i) = 1$. Because~$D$ is directed and
  each~$\eta_{h_i}$ is monotone, we thus have $D^* \ne \emptyset$ (an
  element greater or equal all of $d_1,\ldots,d_m$ is in~$D^*$). Hence
  for every $d \in D$ there is $d^* \in D^*$ such that $d^* \ge d$
  (take an element greater or equal than~$d$ and than some element
  of~$D^*$). Thus for every $d \in D$ there is $d^* \in D^*$ such that
  $\min_{i=0,\ldots,m}\eta_{g_i}(d) \le
  \min_{i=0,\ldots,m}\eta_{g_i}(d^*)$. Therefore
  \[
  \join \eta_f(D) = \join_{d^* \in \D^*}
  \min_{i=0,\ldots,m}\eta_{g_i}(d^*) = \join_{d \in D}
  \min_{i=0,\ldots,m}\eta_{g_i}(d)
  \]
  Hence it suffices to show
  \[
  \min_{i=0,\ldots,m}\join\eta_{g_i}(D) = \join_{d \in D}
  \min_{i=0,\ldots,m}\eta_{g_i}(d).
  \]
  Let $L = \min_{i=0,\ldots,m}\join\eta_{g_i}(D)$ and $R = \join_{d\in
    D} \min_{i=0,\ldots,m}\eta_{g_i}(d)$. Without loss of generality,
  assume $L = \join\eta_{g_0}(D)$. We need to show $L \le R$ and $R
  \le L$. For $R \le L$ it suffices to show that $L \ge
  \min_{i=0,\ldots,m}\eta_{g_i}(d)$ for $d \in D$. But $L \ge
  \eta_{g_0}(d) \ge \min_{i=0,\ldots,m}\eta_{g_i}(d)$ for $d \in
  D$. For $L \le R$ it suffices to show $R \ge \eta_{g_0}(d)$ for $d
  \in D$. So let $d \in D$ and assume $R < \eta_{g_0}(d)$. We have
  $\join \eta_{g_i}(D) \ge \eta_{g_0}(d)$ for $i=0,\ldots,m$, so for
  every $i=0,\ldots,m$ there exists $d_i \in D$ with $\eta_{g_i}(d_i)
  > R$. Because~$D$ is directed there is a $d' \in D$ such that $d'
  \ge d_i$ for $i=0,\ldots,m$. Then $\eta_{g_i}(d') > R$ for
  $i=0,\ldots,m$, because each~$\eta_{g_i}$ is monotone. Hence
  $\min_{i=0,\ldots,m}\eta_{g_i}(d') > R$. This contradicts the
  definition of~$R$.
\end{proof}

The following theorem shows how to calculate a production function for
a function defined by corecursion.

\begin{theorem}\label{thm_corecursion}
  Let $h : A_1\times\ldots\times A_n\times B^m\to B$ and
  $g_i:A_1\times\ldots\times A_n\to A_1\times\ldots\times A_n$
  (\mbox{$i=1,\ldots,m$}) where
  \[
  g_i(x_1,\ldots,x_n) = \la
  g_i^1(x_1,\ldots,x_n),\ldots,g_i^n(x_1,\ldots,x_n)\ra
  \]
  for $x_j \in A_j$ ($i=1,\ldots,m$, $j=1,\ldots,n$). Let $\eta_h$ be
  a production function for~$h$, and $\eta_{i,j}$ a production
  function for~$g_i^j$ ($i=1,\ldots,m$, $j=1,\ldots,n$). Assume
  that~$\eta_h$ satisfies:
  \[
  (\star)\quad\quad
  \eta_h(\zeta_{\Ab_1},\ldots,\zeta_{\Ab_n},\beta_1,\ldots,\beta_m)
  > \min_{i=1,\ldots,m}\beta_i
  \]
  for all $\beta_1,\ldots,\beta_m\le \zeta_{\Bb}$ with $\beta_i <
  \zeta_{\Bb}$ for some $1 \le i \le m$.

  If $f : A \to B$ is a function defined by corecursion from~$h$
  and~$g_i$ ($i=1,\ldots,m$), then there exists a production
  function~$\eta_f$ for~$f$ satisfying for all $\bar{\alpha}\in
  \On(\zeta_{\Ab_1})\times\ldots\times\On(\zeta_{\Ab_n})$ the equation
  \[
  \eta_f(\bar{\alpha}) =
  \eta_h\left(\bar{\alpha},\, \eta_f(\eta_{1,1}(\bar{\alpha}),\ldots,\eta_{1,n}(\bar{\alpha})),
    \ldots,
    \eta_f(\eta_{m,1}(\bar{\alpha}),\ldots,\eta_{m,n}(\bar{\alpha}))
  \right)
  \]
  Moreover, if~$\eta_h$ and all~$\eta_{i,j}$ ($i=1,\ldots,m$,
  $j=1,\ldots,n$) are continuous, then so is~$\eta_f$.
\end{theorem}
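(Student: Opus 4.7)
The plan is to construct a regular extension $f^*$ of $f$ via Lemma~\ref{lem_corecursion_regular}, then define $\eta_f$ as the least fixpoint of a suitable monotone endofunction on the CPO of ordinal-valued maps, and finally verify that $\eta_f$ is a production function for $f^*$ by running the two least-fixpoint constructions in lockstep.

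First, I would assemble the data for a single application of Lemma~\ref{lem_corecursion_regular}. Let $h^*:\Ab_1\times\ldots\times\Ab_n\times\Bb^m\to\Bb$ be the regular extension of $h$ associated with $\eta_h$, and for each $i$ let $g_i^*:\Ab_1\times\ldots\times\Ab_n\to\Ab_1\times\ldots\times\Ab_n$ be the tuple of regular extensions $(g_i^{1,*},\ldots,g_i^{n,*})$ associated componentwise with the $\eta_{i,j}$. The maximal elements of the product CPO are precisely the tuples in $A_1\times\ldots\times A_n$, so on such tuples $s(h^*(\bar x,\bar y))=\eta_h(\zeta_{\Ab_1},\ldots,\zeta_{\Ab_n},s(\bar y))$; combined with $(\star)$ this verifies the hypothesis of Lemma~\ref{lem_corecursion_regular} (used with the product $\Ab_1\times\ldots\times\Ab_n$ in place of $\Ab$). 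The lemma yields a regular $f^*$ which is the least fixpoint of the monotone endofunction $F^*(u)(\bar x)=h^*(\bar x,u(g_1^*(\bar x)),\ldots,u(g_m^*(\bar x)))$ and whose restriction to $A_1\times\ldots\times A_n$ coincides with $f$. In particular, $f^*$ is the limit of its initial sequence $(F^{*\kappa})_\kappa$.

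Next, on the CPO $\On(\zeta_\Bb)^{\On(\zeta_{\Ab_1})\times\ldots\times\On(\zeta_{\Ab_n})}$ (pointwise order) define an endofunction
\[
E(\eta)(\bar\alpha)=\eta_h(\bar\alpha,\eta(\eta_{1,1}(\bar\alpha),\ldots,\eta_{1,n}(\bar\alpha)),\ldots,\eta(\eta_{m,1}(\bar\alpha),\ldots,\eta_{m,n}(\bar\alpha))).
\]
Monotonicity of $E$ follows from monotonicity of $\eta_h$ (part of Lemma~\ref{lem_prodfun_regular}). Let $\eta_f$ be its least fixpoint, which exists by Theorem~\ref{thm_fixpoint_cpo} and equals the limit of the initial sequence $(E^\kappa)_\kappa$. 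I would then show by transfinite induction on $\kappa$ that $E^\kappa(s(\bar x))=s(F^{*\kappa}(\bar x))$ for every $\bar x$. The successor step unfolds both sides, applies the inductive hypothesis, and uses the defining equalities of the production functions $\eta_h$ and $\eta_{i,j}$; the limit step invokes continuity of the size function $s_\Bb$ to commute $s$ with the directed supremum of a chain in $\Bb$. At the common closure ordinal this gives $\eta_f(s(\bar x))=s(f^*(\bar x))$, so $\eta_f$ is a production function for $f^*$. The recursive equation displayed in the theorem is then simply $\eta_f=E(\eta_f)$; that it holds for every $\bar\alpha$, not only for those of the form $s(\bar x)$, follows because each $s_{\Ab_j}$ is surjective.

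For the continuity claim, under the assumption that $\eta_h$ and all $\eta_{i,j}$ are continuous, a transfinite induction in the style of Lemmas~\ref{lem_substitution} and~\ref{lem_monotone_fixpoint} shows that each approximant $E^\kappa$ is continuous, and hence so is $\eta_f$ as a pointwise directed supremum. The main technical difficulty will be the transfinite induction in the previous paragraph: the limit case requires commuting the size function on $\Bb$ with a directed supremum of the $F^{*\kappa}$, which is justified by continuity of $s_\Bb$ but must be applied carefully since the approximants $F^{*\kappa}$ are functions $\Ab_1\times\ldots\times\Ab_n\to\Bb$ rather than elements of $\Bb$, and one must also check that the two initial sequences stabilize compatibly.
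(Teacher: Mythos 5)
Your proposal is correct and follows essentially the same route as the paper's proof: extend everything to the sized CPOs via the associated regular extensions, invoke Lemma~\ref{lem_corecursion_regular} using $(\star)$, and then show by transfinite induction that the approximants of the production-function operator track the size of the approximants $F^{*\kappa}$, using continuity of $s_\Bb$ at limit stages and surjectivity of the size functions to transfer the fixpoint equation to all of $\On(\zeta_{\Ab_1})\times\ldots\times\On(\zeta_{\Ab_n})$. The only (cosmetic) difference is that you package the sequence $(\eta^\kappa)$ as the initial sequence of an explicit monotone operator $E$ on the complete lattice $\On(\zeta_\Bb)^{W}$, whereas the paper constructs the same approximants $\eta^\kappa$ directly by transfinite recursion and argues the existence of the limit-stage joins via the chain $\{s(f^\beta(\bar x))\}$.
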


\begin{proof}
  Let $h^*$ be the regular extension associated with~$\eta_h$,
  and~$g_{i,j}^*$ the regular extension associated with~$\eta_{i,j}$
  for $i=1,\ldots,m$ and $j=1,\ldots,n$. Let $\Ab =
  \Ab_1\times\ldots\times\Ab_n$. For $i=1,\ldots,m$, let~$g_{i}^* :
  \Ab \to \Ab$ be defined by
  \[
  g_i^*(\bar{x}) = \la g_{i,1}^*(\bar{x}), \ldots, g_{i,n}^*(\bar{x}) \ra
  \]
  for $\bar{x} \in \Ab$. Let $F^* : \Bb^{\Ab}
  \to \Bb^{\Ab}$ be defined by
  \[
  F^*(f')(\bar{x})=h^*(\bar{x}, f'(g_1^*(\bar{x})),\ldots,f'(g_m^*(\bar{x})))
  \]
  for $\bar{x} \in \Ab$. Then~$f$ is the restriction of the least
  fixpoint~$f^*$ of~$F^*$, by~$(\star)$ and
  Lemma~\ref{lem_corecursion_regular}. Let~$(f^\alpha)_\alpha$ be the
  initial sequence of~$F^*$. Let $W =
  \On(\zeta_{\Ab_1})\times\ldots\times\On(\zeta_{\Ab_n})$. Let
  $\eta_i : W \to W$ be defined by
  \[
  \eta_i(w) = \la \eta_{i,1}(w), \ldots, \eta_{i,n}(w) \ra
  \]
  for $w \in W$. If $\bar{x} = \la x_1,\ldots,x_n\ra \in \Ab$ then we
  write~$s(\bar{x})$ for $\la s(x_1),\ldots,s(x_n)\ra$. Note that
  \[
  \eta_i(s(\bar{x})) = s(g_i^*(\bar{x}))
  \]
  for $\bar{x} \in \Ab$.

  By transfinite induction on~$\alpha$ we show that there exists a
  production function~$\eta^\alpha$ for~$f^\alpha$. For $\alpha=0$ we
  may define~$\eta^0$ by $\eta^0(w) = 0$ for $w \in W$, because
  $f^0(\bar{x})=\bot$ for any $\bar{x}\in\Ab$.

  For $\alpha=\beta+1$ we define
  \[
  \eta^{\beta+1}(w) = \eta_h(w,\eta^\beta(\eta_1(w)),\ldots,
  \eta^\beta(\eta_m(w)))
  \]
  Then the equality
  \[
  f^{\beta+1}(\bar{x}) =
  h^*(\bar{x},f^\beta(g_1^*(\bar{x})),\ldots,f^\beta(g_m^*(\bar{x})))
  \]
  and the inductive hypothesis imply that $\eta^{\beta+1}(s(\bar{x}))
  = s(f^{\beta+1}(\bar{x}))$.

  Finally, let~$\alpha$ be a limit ordinal. For $\bar{x}\in\Ab$ we
  have
  \[
  f^\alpha(\bar{x}) = \join_{\beta<\alpha}
  h^*(\bar{x},f^\beta(g_1^*(\bar{x})),\ldots,f^\beta(g_m^*(\bar{x})))
  \]
  Because~$s_\Bb$ is continuous and $f^\beta \le f^{\beta+1}$, we
  obtain
  \[
  \begin{array}{rcl}
    s(f^\alpha(\bar{x})) &=& s(\join_{\beta<\alpha} f^\beta(\bar{x})) \\
    &=& s(\join_{\beta<\alpha} f^{\beta+1}(\bar{x})) \\
    &=& s\left(\join_{\beta<\alpha}
      h^*(\bar{x},f^\beta(g_1^*(\bar{x})),\ldots,f^\beta(g_m^*(\bar{x})))
    \right) \\
    &=& \join_{\beta < \alpha} s\left(
      h^*(\bar{x},f^\beta(g_1^*(\bar{x})),\ldots,f^\beta(g_m^*(\bar{x})))
    \right) \\
    &=& \join_{\beta<\alpha} \eta_h(s(\bar{x}),
    \eta^\beta(\eta_1(s(\bar{x}))), \ldots,
    \eta^\beta(\eta_m(s(\bar{x}))))
  \end{array}
  \]
  where in the last equality we use the inductive
  hypothesis. Therefore, we may define
  \[
  \eta^\alpha(w) = \join_{\beta<\alpha} \eta_h(w,
  \eta^\beta(\eta_1(w)), \ldots, \eta^\beta(\eta_m(w)))
  \]
  The join always exists, because~$s_{\Ab_i}$ is surjective for
  $i=1,\ldots,m$. Indeed, for any $w\in W$ there is $\bar{x} \in \Ab$
  such that $w=s(\bar{x})$, by surjectivity. So
  \[
  \begin{array}{rcl}
    \eta^\alpha(w) &=& \join_{\beta<\alpha} \eta_h(w,
    \eta^\beta(\eta_1(w)), \ldots,
    \eta^\beta(\eta_m(w))) \\
    &=& \join_{\beta<\alpha} \eta_h(s(\bar{x}),
    \eta^\beta(\eta_1(s(\bar{x}))), \ldots,
    \eta^\beta(\eta_m(s(\bar{x})))) \\
    &=& \join_{\beta<\alpha} s(f^\beta(\bar{x}))
  \end{array}
  \]
  which exists because $\{ f^\beta(\bar{x}) \mid \beta < \alpha \}$ is
  a chain and~$s_\Bb$ is continuous.

  Let~$\kappa$ be the closure ordinal of~$(f^\alpha)_\alpha$. Then
  $f^{\kappa+1}=f^\kappa$, so for $\bar{x}\in\Ab$:
  \[
  \eta^{\kappa+1}(s(\bar{x})) = s(f^{\kappa+1}(\bar{x})) =
  s(f^\kappa(\bar{x})) = \eta^\kappa(s(\bar{x}))
  \]
  Since~$s_{\Ab_i}$ is surjective for $i=1,\ldots,m$, we thus have
  $\eta^{\kappa+1} = \eta^\kappa$. So for $w\in W$:
  \[
  \eta^{\kappa}(w) = \eta^{\kappa+1}(w) \\
  = \eta_h(w,\eta^\kappa(\eta_1(w)), \ldots, \eta^\kappa(\eta_m(w)))
  \]
  Therefore, $\eta^\kappa$ is the required production function for
  $f^*=f^\kappa$.

  If $\eta_h$ and all $\eta_{i,j}$ are continuous, then it follows by
  transfinite induction on~$\alpha$ that each~$\eta^\alpha$ is
  continuous.
\end{proof}

\subsection{Coterms}\label{sec_corecursion_examples}

The above general theory for defining corecursive functions will now
be illustrated with some concrete examples. The examples will involve
many-sorted coterms.

\begin{definition}\label{def_coterms}
  A \emph{many-sorted algebraic signature}
  $\Sigma=\la\Sigma_s,\Sigma_f\ra$ consists of a collection of
  \emph{sort symbols}~$\Sigma_s=\{s_i\}_{i\in I}$ and a collection of
  \emph{function symbols} $\Sigma_f=\{f_j\}_{j\in J}$. Each function
  symbol~$f$ has an associated \emph{type}
  $\tau(f)=(s_1,\ldots,s_n;s)$ where $s_1,\ldots,s_n,s\in\Sigma_s$. If
  $\tau(f)=(;s)$ then~$f$ is a \emph{constant} of sort~$s$. In what
  follows we use $\Sigma,\Sigma'$, etc., for many-sorted algebraic
  signatures, $s,s'$, etc., for sort symbols, and $f,g,c$, etc., for
  function symbols.

  The set~$\Tc^\infty(\Sigma)$, or just~$\Tc(\Sigma)$, of
  \emph{coterms over~$\Sigma$} is the set of all finite and infinite
  terms over~$\Sigma$, i.e., all finite and infinite labelled trees
  with labels of nodes specified by the function symbols of~$\Sigma$
  such that the types of labels of nodes agree. More precisely, a
  coterm over~$\Sigma$ is a function $t : \Nbb^* \to \Sigma_f \cup
  \{\bot\}$, where $\bot\notin\Sigma_f$, satisfying:
  \begin{itemize}
  \item $t(\epsilon) \ne \bot$, and
  \item if $t(p) = f \in \Sigma_f$ with $\tau(f)=(s_1,\ldots,s_n;s)$
    then
    \begin{itemize}
    \item $t(pi)=g \in \Sigma_f$ with
      $\tau(g)=(s_1',\ldots,s_{m_i}';s_i)$ for $i < n$,
    \item $t(pi)=\bot$ for $i \ge n$,
    \end{itemize}
  \item if $t(p)=\bot$ then~$t(pi)=\bot$ for every $i\in\Nbb$,
  \end{itemize}
  where $\epsilon \in \Nbb^*$ is the empty string. We use obvious
  notations for coterms, e.g., $f(g(t,s),c)$ when $c,f,g \in \Sigma_f$
  and $t,s \in \Tc(\Sigma)$, and the types agree. We say that a
  coterm~$t$ \emph{is of sort~$s$} if $t(\epsilon)$ is a function
  symbol of type $(s_1,\ldots,s_n;s)$ for some
  $s_1,\ldots,s_n\in\Sigma_s$. By~$\Tc_s(\Sigma)$ we denote the set of
  all coterms of sort~$s$ from~$\Tc(\Sigma)$. We also
  write~$\Tc_*(\Sigma)$ for~$\Tc(\Sigma)$, i.e., by~$*$ we denote a
  special sort of all coterms.

  The \emph{$n$-th approximant} of a coterm $t\in\Tc(\Sigma)$ is a
  function $\ucut{t}{n} : \Nbb^* \to \Sigma_f \cup \{\bot\}$ such that
  \begin{itemize}
  \item $\ucut{t}{n}(p)=t(p)$ if $|p| < n$, or $|p| = n > 0$
    and~$t(p)$ is a constant, i.e., $\tau(t(p))=(;s)$ for some $s \in
    \Sigma_s$,
  \item $\ucut{t}{n}(p)=\bot$ otherwise,
  \end{itemize}
  where by $|p|$ we denote the length of $p\in\Nbb^*$. In other words,
  $\ucut{t}{n}$ is~$t$ cut at depth~$n$, but we do not change
  constants in leaves at depth~$n > 0$ into~$\bot$. By~$\Tc^n(\Sigma)$
  we denote the set of all $n$-th approximants of coterms
  over~$\Sigma$. We extend the notation $\ucut{t}{m}$ to approximants
  $t\in\Tc^n(\Sigma)$ in the obvious way. We also use obvious
  notations for approximants, e.g., the first approximant
  of~$f(g(t_1),h(t_2),c)$ is denoted by~$f(\bot,\bot,c)$. We say that
  an approximant $t \in \Tc^n(\Sigma)$ \emph{is of sort}~$s \in
  \Sigma_s$ if either $t(\epsilon) = \bot$ or~$t(\epsilon)$ is a
  function symbol of type $(s_1,\ldots,s_n;s)$ for some
  $s_1,\ldots,s_n\in\Sigma_s$. By~$\Tc_s^n(\Sigma)$ we denote the set
  of all $t \in \Tc^n(\Sigma)$ of sort~$s$.

  The partial order $\Nbb_\infty = \la \Nbb \cup \{\infty\}, \le \ra$
  is ordered by the usual order on~$\Nbb$ extended with $n\le\infty$
  for $n\in\Nbb_\infty$. Note that~$\Nbb_\infty$ is isomorphic
  to~$\On(\omega)$. We extend the arithmetical operations on~$\Nbb$
  to~$\Nbb_\infty$ in an obvious way, with $\infty-n=\infty$,
  $n+\infty=\infty+n=\infty+\infty=\infty$, $n\cdot\infty=\infty\cdot
  n=\infty\cdot\infty=\infty$, where $n\in\Nbb$.

  If~$A_i$ for $i\in I$ are sets, then by $\amalg_{i\in I}A_i$ we
  denote the coproduct of the~$A_i$s, i.e., the set of all pairs $\la
  i,a \ra$ such that $i\in I$ and $a \in A_i$. We define the partial
  order $\Tb(\Sigma) = \la \amalg_{n\in\Nbb_\infty} \Tc^n(\Sigma),
  \qle\ra$ by: $\la i,t\ra\qle\la j,s\ra$ iff $i \le j$ and
  $\ucut{s}{i} = t$. The \emph{size}~$|t|\in\Nbb_\infty$ of $t \in
  \Tb(\Sigma)$ is the first component of~$t$. We will often confuse
  $\la i,t\ra \in \Tb(\Sigma)$ with~$t \in \Tc^i(\Sigma)$. For a sort
  symbol $s \in \Sigma_s$, by~$\Tb_s(\Sigma)$ we denote the subset
  of~$\Tb(\Sigma)$ consisting of all $\la i,t\ra$ such that~$t$ is of
  sort~$s$. We also use the notation~$\Tb_*(\Sigma)$
  for~$\Tb(\Sigma)$.

  We define $\tcut : \Nbb_\infty \times \Tb(\Sigma) \to \Tb(\Sigma)$
  by $\tcut(n, \la i, t \ra) = \la i, t\ra$ if $i \le n$, and
  $\tcut(n, \la i, t \ra) = \la n, \ucut{t}{n}\ra$ if $i > n$. Note
  that if $t \in \Tb_s(\Sigma)$ then $\tcut(n,t) \in \Tb_s(\Sigma)$.

  Let $f \in \Sigma_f$ be a function symbol of type
  $(s_1,\ldots,s_n;s)$ and let $1\le i\le n$. The \emph{$i$-th
    destructor for~$f$} is a function
  $d_{f,i}:\Tc_s(\Sigma)\to\Tc_{s_i}(\Sigma)$ defined by:
  \[
  d_{f,i}(t) =
  \left\{
  \begin{array}{rl}
    t_i &\text{ if } t \equiv f(t_1,\ldots,t_n) \\
    t' &\text{ otherwise }
  \end{array}
  \right.
  \]
  where $t' \in \Tc_{s_i}(\Sigma)$ is arbitrary. The \emph{constructor
    for~$f$} is a function
  $c_f:\Tc_{s_1}(\Sigma)\times\ldots\times\Tc_{s_n}(\Sigma)\to\Tc_s(\Sigma)$
  defined by
  \[
  c_f(t_1,\ldots,t_n) = f(t_1,\ldots,t_n)
  \]
  The \emph{test for~$f$} is a function $o_f:\Tc_s\to\{0,1\}$ defined
  by
  \[
  o_f(t) =
  \left\{
  \begin{array}{rl}
    1 &\text{ if } t \equiv f(t_1,\ldots,t_n) \\
    0 &\text{ otherwise }
  \end{array}
  \right.
  \]

  If $t \in \Tb(\Sigma)$ and $t \equiv \la i, t'\ra$ with $t' \in
  \Tc^\infty(\Sigma)$ (this may happen for $i < \infty$ if e.g.~$t'$
  is a constant), then by~$\ulift{t}$ we denote $\la \infty, t'\ra$.
\end{definition}

\begin{lemma}\label{lem_coterms_cpo}
  The partial order~$\Tb(\Sigma)$ is a CPO. Also, for each $s \in
  \Sigma_s$, the partial order~$\Tb_s(\Sigma)$ is a CPO.
\end{lemma}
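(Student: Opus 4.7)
The plan is to verify the CPO axioms directly: produce a bottom element, and construct suprema of directed sets. First, I would observe that the partial order structure itself is routine to check. Reflexivity $\la i,t\ra\qle\la i,t\ra$ follows from the identity $\ucut{t}{|t|}=t$ for any $t\in\Tc^{|t|}(\Sigma)$, which is immediate from unpacking the definition of an approximant. Transitivity reduces to the idempotency $\ucut{\ucut{t}{m}}{n}=\ucut{t}{n}$ for $n\le m$, and antisymmetry is forced because $\la i,t\ra\qle\la j,s\ra\qle\la i,t\ra$ gives $i=j$ and hence $t = \ucut{s}{i} = s$.

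The bottom element is the unique member of $\Tc^0(\Sigma)$, namely the constant function mapping every $p\in\Nbb^*$ to $\bot$. By the definition of $\ucut{\cdot}{0}$ this is exactly $\ucut{t}{0}$ for every coterm or approximant $t$, so $\la 0,\bot\ra\qle\la i,t\ra$ for every $\la i,t\ra\in\Tb(\Sigma)$.

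The main work is constructing the supremum of a directed set $D\subseteq\Tb(\Sigma)$. I would split on whether $N=\sup\{|t|:t\in D\}\in\Nbb_\infty$ is finite or infinite. If $N$ is finite, then $N$ is attained: since $\Nbb$ contains no strictly increasing sequence whose supremum is finite but unattained, some $t^*\in D$ has $|t^*|=N$. Directedness plus the earlier identity $\ucut{u}{N}=u$ when $|u|=N$ then force every other element of $D$ to lie $\qle t^*$, so $t^*$ is the maximum, hence the supremum. If $N=\infty$, I would define $t^*:\Nbb^*\to\Sigma_f\cup\{\bot\}$ pointwise by
\[
t^*(p) \;=\; \begin{cases} s(p) & \text{for any } s\in D \text{ with } s(p)\ne\bot \\ \bot & \text{if no such } s \text{ exists.} \end{cases}
\]
This is well-defined: any two $s_1,s_2\in D$ with $s_j(p)\ne\bot$ are dominated by a common $s\in D$, and the definition of $\qle$ forces $s_1(p)=s(p)=s_2(p)$.

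The key verification is that $t^*$ is a genuine coterm in $\Tc^\infty(\Sigma)$, i.e., that $t^*(\epsilon)\ne\bot$ and that the successor and arity/type constraints propagate. Both facts follow from $N=\infty$: for any $p$, directedness yields some $s\in D$ with $|s|>|p|+1$, whence $s$ is forced by its coterm origin to satisfy the required constraints at $p$ and its immediate successors, and these values agree with $t^*$ by construction. Once $t^*\in\Tc^\infty(\Sigma)$ is in hand, showing $\la\infty,t^*\ra$ is an upper bound of $D$ and showing it is least amount to straightforward unwinding of $\qle$ and $\ucut{\cdot}{n}$; in particular, any upper bound $u$ of $D$ must have $|u|=\infty$ (otherwise $D$ would be size-bounded) and must agree with $t^*$ pointwise. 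The main source of friction is bookkeeping around the boundary case $|p|=|s|$, where $s(p)$ retains the value only when it is a constant; this case has to be tracked carefully in the well-definedness and upper-bound arguments. Finally, for $\Tb_s(\Sigma)$, the construction restricts verbatim: the bottom is of every sort since it satisfies $\bot(\epsilon)=\bot$, and the supremum $t^*$ inherits the sort of any non-bottom element of $D$ by agreeing with it at $\epsilon$.
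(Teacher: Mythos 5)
Your proof is correct and follows essentially the same route as the paper's: the same bottom element $\la 0,\bot\ra$, the same pointwise-union construction of the supremum of a directed set (the paper handles the finite-size case by observing that the constructed union already lies in $D$, which amounts to your attained-maximum argument), and the same sort-preservation remark for $\Tb_s(\Sigma)$. The boundary bookkeeping you flag around constants retained at depth $|s|$ is indeed the one delicate point, and the paper elides it just as you do.
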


\begin{proof}
  The bottom of~$\Tb(\Sigma)$ is $\la 0,\bot\ra$, where~$\bot$ is the
  sole element of~$\Tc^0(\Sigma)$.

  Let $D \subseteq \Tb(\Sigma)$ be a directed set. Let~$n$ be the
  supremum of the first coordinates of elements of~$D$. Define
  $t\in\Tc^n(\Sigma)$ by:
  \begin{itemize}
  \item $t(p) = f$ if $f \in \Sigma_f$ and there is $s \in D$ with
    $s(p) = f$,
  \item $t(p) = \bot$ otherwise.
  \end{itemize}
  where $p \in \Nbb^*$. This is a good definition, because~$D$ is
  directed. By definition we of course have $\la n,t\ra \qge s$ for
  all $s \in D$. If~$n < \infty$ then $\la n, t\ra \in D$, so it is
  the supremum of~$D$. Assume $n=\infty$. Suppose $u \in
  \Tc^m(\Sigma)$ and $\la m, u\ra \qge s$ for all $s \in D$. Then $m =
  \infty$. Let $p \in \Nbb^*$. Then there exists $s \in D$ with $|s|
  \ge |p|$, and so $t(p)=s(p)$. Since $u \qge s$, we obtain
  $u(p)=s(p)=t(p)$. Thus $u = t$, so $\la \infty, t\ra$ is the
  supremum of~$D$.

  That for each $s \in \Sigma_s$, the order~$\Tb_s(\Sigma)$ is a CPO
  follows from the fact that if all elements of a directed set are of
  sort~$s$, then its supremum is also of sort~$s$.
\end{proof}

\begin{lemma}\label{lem_coterms_sized_cpo}
  The tuple $\la \Tb(\Sigma), \infty, |\cdot|, \tcut \ra$ is a sized
  CPO.\footnote{Since $\Nbb_\infty$ and $\On(\omega)$ are isomorphic
    we identify them without loss of generality. So $\infty =
    \omega$.} Also for each $s \in \Sigma_s$, the tuple $\la
  \Tb_s(\Sigma), \infty, |\cdot|, \tcut \ra$ is a sized CPO.
\end{lemma}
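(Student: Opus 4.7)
The plan is to verify the five conditions of Definition~\ref{def_sized_cpo} directly for $\la \Tb(\Sigma), \infty, |\cdot|, \tcut \ra$, using the CPO structure already supplied by Lemma~\ref{lem_coterms_cpo} and the identification of~$\Nbb_\infty$ with~$\On(\omega)$. Under the standing assumption that $\Sigma$ admits at least one coterm (and, in the sorted statement, that the sort $s$ is inhabited), the whole verification is essentially bookkeeping against the explicit definitions of $|\cdot|$ and $\tcut$.

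First I would handle the size function. Surjectivity is witnessed by $\la 0, \bot\ra$ at size~$0$, by $\la n, \ucut{t_0}{n}\ra$ at each size $n \in \Nbb$ for a fixed coterm $t_0$, and by $\la \infty, t_0\ra$ at size~$\infty$. Continuity falls out of the supremum construction in the proof of Lemma~\ref{lem_coterms_cpo}: the first coordinate of $\join D$ equals $\sup\{|d| \mid d \in D\}$, which is the join of $|D|$ in $\Nbb_\infty$. For the maximality characterisation, if $|\la i, t\ra| = \infty$ then any $\la j, s\ra \qge \la \infty, t\ra$ must satisfy $j = \infty$ and $\ucut{s}{\infty} = t$, which forces $s = t$; conversely, if $i < \infty$ then by the very definition of $\Tc^i(\Sigma)$ as the set of $i$-th approximants of coterms, $t$ is the $i$-th approximant of some coterm $t_0$, so $\la \infty, t_0\ra$ lies strictly above $\la i, t\ra$.

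The three conditions on $\tcut$ are immediate from its two-case definition: monotonicity in the first argument follows because $\ucut{t}{\alpha}$ is the $\alpha$-approximant of $\ucut{t}{\beta}$ whenever $\alpha \le \beta$, and the various cases of $|t|$ relative to $\alpha,\beta$ unfold without difficulty; monotonicity in the second argument reduces to the observation that $\la i, t\ra \qle \la j, s\ra$ together with $\alpha \le i$ forces $\ucut{t}{\alpha} = \ucut{s}{\alpha}$; and the two size equations $|\tcut(\alpha,t)| = \alpha$ when $|t|>\alpha$, and $\tcut(\alpha,t)=t$ when $|t|\le\alpha$, match the case split in the definition of $\tcut$ verbatim. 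The sorted statement transfers without incident, using sort-preservation of suprema (clear from the construction in Lemma~\ref{lem_coterms_cpo}) and of $\tcut$ (noted in Definition~\ref{def_coterms}). The only step with any content is the maximality direction for finite-size elements, and even there the work is done for us by the definition of $\Tc^n(\Sigma)$ as the image of the truncation operator $\ucut{\cdot}{n}$, which hands us the needed witness coterm.
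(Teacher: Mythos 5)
Your proposal is correct and follows the same route as the paper: the paper's own proof simply observes that everything except the continuity of $|\cdot|$ is immediate from the definitions, and establishes continuity exactly as you do, by reading off $|\join D| = \join|D|$ from the explicit supremum construction in Lemma~\ref{lem_coterms_cpo}. Your additional verifications (surjectivity, the maximality characterisation via the witness coterm $t_0$ with $t = \ucut{t_0}{i}$, and the monotonicity and size equations for $\tcut$) are exactly the checks the paper leaves implicit as ``completely obvious,'' and they all go through.
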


\begin{proof}
  The only part which is not completely obvious is the continuity of
  the size function~$|\cdot|$. Let $D \subseteq \Tb(\Sigma)$ be a directed
  set. The set~$|D|$ is directed. By the definition of~$\join D$ in
  the proof of Lemma~\ref{lem_coterms_cpo}, we have $|\join D| = \join
  |D|$.
\end{proof}

Our definition of~$\Tb(\Sigma)$ may seem somewhat convoluted. One may
wonder why we do not simply use
$\bigcup_{n\in\Nbb_\infty}\Tc^n(\Sigma)$, or even the set of all
coterms with some arbitrary subterms changed into~$\bot$, with an
obvious ``information'' ordering. The answer is that then there would
be no cut function~$\tcut$ with the desired properties. Also, the
construction of~$\Tb(\Sigma)$ is a slightly modified instance of a
more general sized CPO construction for an arbitrary final coalgebra
in the category of sets (see the appendix).

For the sake of brevity we often use $\Tc = \Tc(\Sigma)$, $\Tc_s =
\Tc_s(\Sigma)$, $\Tb = \Tb(\Sigma)$ and $\Tb_s = \Tb_s(\Sigma)$, i.e.,
we omit the signature~$\Sigma$ when clear from the context or
irrelevant. We also confuse~$\Tb$ and~$\Tb_s$ with the sized CPOs from
Lemma~\ref{lem_coterms_sized_cpo}.

\begin{lemma}\label{lem_coterms_prodfuns}
  Assume $f \in \Sigma_f$ has type $(s_1,\ldots,s_n;s)$.
  \begin{enumerate}
  \item The function $\eta_{d_{f,i}}:\Nbb_\infty\to\Nbb_\infty$
    defined by $\eta_{d_{f,i}}(n)=\max(0,n-1)$ is a production
    function for~$d_{f,i}$.
  \item If all elements of~$\Tc_{s_i}$ are constants, then
    $\eta_\infty : \Nbb_\infty \to \Nbb_\infty$ defined by
    \[
    \eta_\infty(n) =
    \left\{
      \begin{array}{cl}
        \infty & \text{ if } n > 0 \\
        0 & \text{ if } n = 0
      \end{array}
    \right.
    \]
    is a production function for~$d_{f,i}$.
  \item The function $\eta_{c_f}:\Nbb_\infty^n\to\Nbb_\infty$ defined
    by $\eta_{c_f}(m_1,\ldots,m_n)=\min_{i=1,\ldots,n}m_i+1$ is a
    production function for~$c_f$.
  \item For any $k \in \Nbb$ the function $\eta_{o_f}^k : \Nbb_\infty
    \to \{0,1\}$ defined by
    \[
    \eta_{o_f}^k(n) =
    \left\{
      \begin{array}{cl}
        1 & \text{ if } n > k \\
        0 & \text{ if } n \le k
      \end{array}
    \right.
    \]
    is a production function for~$o_f$.
  \item If all elements of~$\Tc_s$ are constants and $g : \Tc_s^m \to
    \Tc_s$ then the function $\eta_g^\infty : \Nbb_\infty^m \to
    \Nbb_\infty$ given by
    \[
    \eta_g^\infty(n_1,\ldots,n_m) = \left\{
      \begin{array}{cl}
        \infty & \text{ if } n_1,\ldots,n_m > 0 \\
        0 & \text{ otherwise }
      \end{array}
    \right.
    \]
    is a production function for~$g$.
  \item If all elements of~$\Tc_s$ are constants and $\chi : \Tc_s^m
    \to \{0,1\}$ then the function $\eta_{\chi}^{\infty} :
    \Nbb_\infty^m \to \{0,1\}$ defined by
    \[
    \eta_\chi^{\infty}(n_1,\ldots,n_m) = \left\{
      \begin{array}{cl}
        1 & \text{ if } n_1,\ldots,n_m > 0 \\
        0 & \text{ otherwise }
      \end{array}
    \right.
    \]
    is a production function for~$\chi$.
  \end{enumerate}
\end{lemma}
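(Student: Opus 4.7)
The plan is, for each of the six items, to exhibit an explicit regular extension $f^\ast$ of the given function $f$ to the appropriate sized CPOs (either $\Tb_s$ or $\Two$), then verify directly that $\eta(s(x_1),\ldots,s(x_n))=s(f^\ast(x_1,\ldots,x_n))$ for all arguments. Monotonicity and max-preservation of $f^\ast$ then yield regularity, so $\eta$ qualifies as a production function per Definition~\ref{def_corecursion}. The constructions are natural; the work is bookkeeping.

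For item~(1) I would define $d_{f,i}^\ast(\langle 0,\bot\rangle)=\langle 0,\bot\rangle$, and for $\langle n,t\rangle$ with $n\ge 1$ set $d_{f,i}^\ast(\langle n,t\rangle)=\tcut(n-1,\langle\infty,u\rangle)$, where $u$ is the $i$-th immediate subterm of $t$ if $t(\epsilon)=f$ and $u$ is the fixed arbitrary element of $\Tc_{s_i}$ used in defining $d_{f,i}$ otherwise (with the convention $\infty-1=\infty$). Monotonicity holds because refining an approximant of $t$ never changes its head or alters the relevant truncation of the $i$-th subterm, and $\tcut$ is monotone; max-preservation at $n=\infty$ is immediate; and the size equation then yields $\max(0,n-1)$. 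For item~(2), when $\Tc_{s_i}$ consists entirely of constants we have $\ucut{u}{1}=u$ for every $u\in\Tc_{s_i}$, so the $i$-th subterm is already fully known once $n\ge 1$. Replacing $\tcut(n-1,\langle\infty,u\rangle)$ by $\langle\infty,u\rangle$ in the definition of the extension then gives the production function $\eta_\infty$.

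Items~(3), (5), (6) follow the same recipe. For the constructor take $c_f^\ast(\langle n_1,t_1\rangle,\ldots,\langle n_k,t_k\rangle)=\langle m+1,f(\ucut{t_1}{m},\ldots,\ucut{t_k}{m})\rangle$ with $m=\min_j n_j$; this is monotone, max-preserving, and has size $\min_j n_j+1$. For items~(5) and~(6), since every element of $\Tc_s$ is a constant, any $t_i$ with $n_i\ge 1$ is already a full constant in $\Tc_s$; so set $g^\ast(\vec{x})=\bot$ when some $|x_j|=0$ and $g^\ast(\vec{x})=\langle\infty,g(t_1,\ldots,t_m)\rangle$ otherwise, and analogously for $\chi^\ast$ valued in $\Two$. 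For item~(4), define $o_f^\ast(\langle n,t\rangle)=\bot$ if $n\le k$, and $o_f^\ast(\langle n,t\rangle)=o_f(t)$ otherwise, using that $n\ge 1$ already determines the head $t(\epsilon)$. Monotonicity holds because the value is constant on the upper set $\{|x|>k\}$, and max-preservation is clear since $\eta_{o_f}^k(\infty)=1$; the size equation then matches $\eta_{o_f}^k$ by direct inspection.

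The main obstacle is not conceptual but rather the uniform verification of monotonicity and max-preservation across the $n=0$, $1\le n<\infty$, and $n=\infty$ regimes, and keeping sort restrictions consistent (e.g.\ ensuring that $d_{f,i}^\ast$ actually lands in $\Tb_{s_i}$). The most delicate points are item~(2), where one must explicitly invoke the fact that $\ucut{\cdot}{1}$ preserves constants, and item~(4), where the ``pessimistic'' extension returning $\bot$ at sizes $\le k$ is precisely what makes the production equation hold uniformly in~$k$.
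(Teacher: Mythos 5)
Your proposal is correct and follows essentially the same route as the paper: for each item the paper simply exhibits the same explicit regular extensions (the $i$-th subterm cut down to size $n-1$, the constructor applied to the arguments cut to the minimum size, $\bot$ below the threshold $k$ for the test, and the $\infty$-lifted values in the constant-sort cases) and reads off the size equation, leaving monotonicity and max-preservation to inspection. Your packaging of the ``otherwise'' branches as $\tcut(n-1,\langle\infty,u\rangle)$ for a fixed $u$ is a slightly more explicit way of making the paper's ``arbitrary element of size $\max(0,n-1)$'' choices coherent across comparable approximants, which is exactly what monotonicity requires.
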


\begin{proof}~
  \begin{enumerate}
  \item The $i$-th destructor $d_{f,i}:\Tc_s(\Sigma)\to\Tc_s(\Sigma)$
    extends to a regular
    $d_{f,i}^*:\Tb_s(\Sigma)\to\Tb_{s_i}(\Sigma)$:
    \[
    d_{f,i}^*(t) =
    \left\{
      \begin{array}{rl}
        t_i &\text{ if } t \equiv f(t_1,\ldots,t_n) \\
        t' &\text{ otherwise, where } t' \in \Tb_{s_i}(\Sigma) \text{ is
          arbitrary with } |t'| = \max(0, |t| - 1)
      \end{array}
    \right.
    \]
  \item We may take the regular extension
    \[
    d_{f,i}^\infty(t) =
    \left\{
      \begin{array}{cl}
        \ulift{t_i} & \text{ if } t \equiv f(t_1,\ldots,t_n) \\
        t' & \text{ if } t \equiv g(t_1,\ldots,t_k)
        \text{ with } g \ne f\text{, where } t' \in \Tb_{s_i}(\Sigma)
        \\ & \text{ is arbitrary with } |t'| = \infty \\
        \bot & \text{ otherwise }
      \end{array}
    \right.
    \]
    This is well-defined, because~$t_i$ above is a constant.
  \item The constructor~$c_f$ extends to a regular
    $c_f^*:\Tb_{s_1}(\Sigma)\times\ldots\times\Tb_{s_n}(\Sigma)\to\Tb(\Sigma)$
    as follows:
    \[
    c_f^*(t_1,\ldots,t_n) = f(\ucut{t_1}{m},\ldots,\ucut{t_n}{m})
    \]
    where $m = \min_{i=1,\ldots,n}|t_i|$.
  \item The test~$o_f$ extends to a regular
    $o_f^*:\Tb_s(\Sigma)\to\Two$ as follows:
    \[
    o_f^*(t) =
    \left\{
      \begin{array}{cl}
        1 & \text{ if } |t| > k \text{ and } t \equiv f(t_1,\ldots,t_n) \\
        0 & \text{ if } |t| > k \text{ and } t \equiv
        g(t_1,\ldots,t_k) \text{ for } g \in \Sigma_f, g \ne f \\
        \bot & \text{ otherwise, if } |t| \le k
      \end{array}
    \right.
    \]
  \item A regular extension $g^\infty : \Tb_s^m \to \Tb_s$ of~$g$ is
    given by
    \[
    g^\infty(t_1,\ldots,t_m) =
    \left\{
      \begin{array}{cl}
        g(\ulift{t_1},\ldots,\ulift{t_m}) & \text{ if } t_i \ne \bot \text{ for }
        i=1,\ldots,m \\
        \bot & \text{ otherwise }
      \end{array}
    \right.
    \]
  \item A regular extension $\chi^\infty : \Tb_s^m \to \Two$ of~$\chi$
    is given by
    \[
    \chi^\infty(t_1,\ldots,t_m) =
    \left\{
      \begin{array}{cl}
        \chi(\ulift{t_1},\ldots,\ulift{t_m}) &
        \text{ if } t_i \ne \bot \text{ for } i=1,\ldots,m \\
        \bot & \text{ otherwise }
      \end{array}
    \right.
    \]
  \end{enumerate}
\end{proof}

The following simple lemma implies that all the production functions
from the above lemma are continuous.

\begin{lemma}\label{lem_infinite_continuous}
  A production function $\eta : \Nbb_\infty^m \to \Nbb_\infty$ is
  continuous iff $\eta(\join D) = \join \eta(D)$ for any infinite
  directed $D \subseteq \Nbb_\infty^m$.
\end{lemma}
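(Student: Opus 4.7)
The plan is to establish both directions separately, with the forward implication being immediate from the definition of continuity and the reverse requiring us to reduce the case of finite directed sets to a trivial observation.

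For the forward direction, I would simply note that continuity of $\eta$ means that for every directed $D \subseteq \Nbb_\infty^m$ the image $\eta(D)$ is directed and $\eta(\join D) = \join \eta(D)$. Restricting this condition to infinite directed sets is a fortiori weaker, so it follows immediately.

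For the reverse direction, the key observation is that any finite directed set $D \subseteq \Nbb_\infty^m$ has its supremum attained in $D$: iteratively using directedness on pairs, one finds an element $d^* \in D$ with $d^* \ge d$ for all $d \in D$, so $\join D = d^* \in D$. By Lemma~\ref{lem_prodfun_regular} the production function $\eta$ is regular and hence monotone, which immediately gives both that $\eta(D)$ is directed (for any directed $D$) and that, when $\join D \in D$, the equality $\eta(\join D) = \join \eta(D)$ holds trivially. Thus the condition restricted to infinite directed sets, combined with the finite case handled by monotonicity, yields continuity on all directed sets.

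The main (and essentially only) subtlety is to remember to invoke Lemma~\ref{lem_prodfun_regular} to get monotonicity of $\eta$ for free; without it, the argument that $\eta$ preserves suprema of finite directed sets would not go through. Everything else is a routine unfolding of definitions, so I expect the proof to be short — just a couple of lines for each direction.
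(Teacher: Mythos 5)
Your proof is correct and follows essentially the same route as the paper: both directions are handled identically, with the finite case reduced to the observation that a finite directed set contains its supremum and that $\eta$ is monotone by Lemma~\ref{lem_prodfun_regular}. Your additional remark that monotonicity also gives directedness of $\eta(D)$ is a slightly more careful reading of the paper's definition of continuity, but it does not change the argument.
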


\begin{proof}
  The implication from left to right is obvious. For the converse, let
  $D \subseteq \Nbb_\infty^m$ be a finite directed set. Then~$\join D$
  is the largest element of~$D$. So $\eta(\join D)$ is the largest
  element of~$\eta(D)$, because~$\eta$ is monotone by
  Lemma~\ref{lem_prodfun_regular}. Thus $\join \eta(D) = \eta(\join
  D)$.
\end{proof}

Because any $p \in \Nbb_\infty^m \setminus \Nbb^m$ is a join of an
infinite chain $C \subseteq \Nbb^m$, the above lemma implies that the
values of continuous functions in $\Nbb_\infty^m \to \Nbb_\infty$ are
uniquely determined by their values on~$\Nbb^m$. We shall thus often
treat continuous functions as if they were defined on~$\Nbb^m$, and
leave their values at infinity implicit.

\begin{lemma}\label{lem_inequality_at_infinity}
  If $\eta : \Nbb_\infty^m \to \Nbb_\infty$ is continuous and for
  every $n_1,\ldots,n_m \in \Nbb$ we have
  \[
  \eta(n_1,\ldots,n_m) > \min_{i=1,\ldots,m}n_i,
  \]
  then also for every $n_1,\ldots,n_m \in \Nbb_\infty$ such that $n_k
  < \infty$ for some $1 \le k \le m$, we have $\eta(n_1,\ldots,n_m) >
  \min_{i=1,\ldots,m}n_i$.
\end{lemma}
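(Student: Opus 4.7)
The plan is to reduce the infinitary case to a single application of the finite hypothesis, using only the monotonicity of $\eta$. Recall from the preliminaries that every continuous function between CPOs is monotone, so continuity gives us monotonicity for free; this is all we shall need from the continuity assumption. Hence it will suffice to exhibit a tuple $(a_1,\ldots,a_m) \in \Nbb^m$ with $a_i \le n_i$ for each $i$ and $\min_i a_i = \min_i n_i$; then the finite hypothesis produces $\eta(a_1,\ldots,a_m) > \min_i a_i$, and monotonicity lifts this to
\[
\eta(n_1,\ldots,n_m) \;\ge\; \eta(a_1,\ldots,a_m) \;>\; \min_i a_i \;=\; \min_i n_i.
\]

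The desired tuple is immediate. Let $M = \min_{i=1,\ldots,m} n_i$. The assumption that some $n_k$ is finite forces $M \in \Nbb$, so the constant tuple $(M,M,\ldots,M)$ lies in $\Nbb^m$. By the definition of minimum we have $M \le n_i$ for every $i$, and trivially $\min_i M = M = \min_i n_i$. Applying the finite hypothesis to this constant tuple yields $\eta(M,\ldots,M) > M$, and monotonicity then gives $\eta(n_1,\ldots,n_m) \ge \eta(M,\ldots,M) > M$, as required.

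The argument is purely formal with no real obstacle; I expect this is precisely the intended role of the lemma, namely to serve as a convenient bridge between the form of the strict-increase condition verified on $\Nbb^m$ via explicit production-function calculations (as in Lemma \ref{lem_coterms_prodfuns}) and the form on $\Nbb_\infty^m$ actually demanded by Corollary \ref{cor_unique_solution}. In fact the hypothesis could be weakened from continuity to monotonicity, but since all production functions we care about are continuous anyway, there is no loss in stating the lemma as it stands.
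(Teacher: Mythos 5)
Your proof is correct, and it takes a genuinely different and more elementary route than the paper's. The paper argues via continuity proper: for the representative case $\eta(n,\infty)$ it forms the directed set $D = \{ (n,k) \mid k \ge n \}$, whose join is $(n,\infty)$, observes $\eta(n,k) \ge n+1$ for all $k \ge n$ by the finite hypothesis, and concludes $\eta(n,\infty) = \join \eta(D) \ge n+1 > n$. You instead compare the given tuple downward to the constant tuple $(M,\ldots,M)$ with $M = \min_i n_i$, which lies in $\Nbb^m$ precisely because some coordinate is finite, and then need only monotonicity of $\eta$ (available since every continuous function between CPOs is monotone, as recorded in the paper's preliminaries) to lift $\eta(M,\ldots,M) > M$ to $\eta(n_1,\ldots,n_m) \ge \eta(M,\ldots,M) > M$. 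Your single-comparison argument avoids the limit computation entirely and, as you note, establishes the lemma under the strictly weaker hypothesis of monotonicity; the paper's statement assumes continuity only because that is the property it tracks for production functions throughout (and needs elsewhere, e.g.\ in Theorem~\ref{thm_corecursion}), so nothing is lost by either formulation. The one thing your approach relies on that the directed-limit argument does not make explicit is that $\Nbb_\infty$ is totally ordered, so that the constant tuple at the minimum is genuinely a lower bound coordinatewise --- this is immediate here and you use it correctly.
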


\begin{proof}
  Without loss of generality, we consider the case $m=2$ and show that
  if $\eta(n_1,n_2) > \min(n_1,n_2)$ for $n_1,n_2\in\Nbb$ then
  $\eta(n,\infty) > n$ for $n \in \Nbb$. For every $k \ge n$ we have
  $\eta(n,k) > \min(n,k) = n$, i.e., $\eta(n,k) \ge n + 1$. Because $D
  = \{ \la n, k \ra \mid k \ge n \}$ is directed, $\join D = \la n,
  \infty \ra$ and~$\eta$ is continuous, we have
  \[
  \eta(n,\infty) = \eta(\join D) = \join \eta(D) = \join_{k\ge
    n}\eta(n,k) \ge \join_{k\ge n}n + 1 = n + 1 > n.
  \]
\end{proof}

The method for showing well-definedness of functions given by
corecursive equations is to use Lemma~\ref{lem_coterms_prodfuns} and
lemmas~\ref{lem_production_functions}-\ref{lem_cases} and
Theorem~\ref{thm_corecursion} from Section~\ref{sec_corecursion} to
calculate production functions, and then apply
Corollary~\ref{cor_unique_solution}. For convenience of reference, we
reformulate Corollary~\ref{cor_unique_solution} specialized to
many-sorted coterms, in its most useful form.

\begin{corollary}\label{cor_unique_solution_coterms}
  Let~$S$ be an arbitrary set. Let $h : S \times \Tc_s^{m}\to \Tc_s$
  and $g_i:S\to S$ ($i=1,\ldots,m$). For each $x \in S$, let
  $\eta_h^x:\Nbb_\infty^{m}\to \Nbb_\infty$ be a continuous production
  function for $\lambda \bar{y} . h(x, \bar{y})$. If
  \[
  (\star)\quad\quad \eta_h^x(n_1,\ldots,n_m) > \min_{i=1,\ldots,m}n_i
  \]
  for all $x \in S$ and all $n_1,\ldots,n_m \in \Nbb$, then there
  exists a function $f : S \to \Tc_s$ defined by corecursion from~$h$
  and $g_1,\ldots,g_m$.
\end{corollary}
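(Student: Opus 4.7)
The plan is to derive this corollary directly from Corollary~\ref{cor_unique_solution}, using the sized CPO structure on $\Tb_s$ developed in Lemma~\ref{lem_coterms_sized_cpo} and the ``lift to infinity'' result of Lemma~\ref{lem_inequality_at_infinity}. The target sized CPO is $\Bb = \Tb_s$, whose size ordinal is $\zeta_\Bb = \infty$ (identified with $\omega$). Since $\eta_h^x$ is, by hypothesis, a production function for the $x$-local prefix function $\lambda \bar{y} . h(x,\bar{y})$, it is by definition an $x$-local prefix production function for $f$ in the sense of Definition~\ref{def_corecursion}. So the only thing to verify in order to apply Corollary~\ref{cor_unique_solution} is the strict-increase condition
\[
\eta_h^x(\alpha_1,\ldots,\alpha_m) > \min_{i=1,\ldots,m}\alpha_i
\]
for all $\alpha_1,\ldots,\alpha_m \le \zeta_\Bb = \infty$ such that $\alpha_k < \infty$ for some $k$.

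First I would observe that the hypothesis~$(\star)$ of the present corollary gives this strict inequality exactly on $\Nbb^m$. The point that requires argument is the extension to tuples that contain some entries equal to~$\infty$ (but at least one finite entry). Here the continuity of each~$\eta_h^x$ is crucial: Lemma~\ref{lem_inequality_at_infinity} was proved precisely to perform this extension. Applying it to each~$\eta_h^x$ individually upgrades~$(\star)$ from $\Nbb^m$ to all tuples in $\Nbb_\infty^m$ with at least one finite coordinate, which is exactly the requirement of Corollary~\ref{cor_unique_solution}.

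With the hypotheses of Corollary~\ref{cor_unique_solution} now verified, that corollary (whose proof in turn routes through Corollary~\ref{cor_general_corecursion} and Lemma~\ref{lem_corecursion_regular}) produces a unique $f \in \Tc_s^S$ satisfying
\[
f(x) = h(x, f(g_1(x)), \ldots, f(g_m(x)))
\]
for all $x \in S$, which is by Definition~\ref{def_corecursion} precisely a function defined by corecursion from~$h$ and $g_1,\ldots,g_m$.

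There is no real obstacle here; the statement is genuinely a convenience reformulation. The only minor subtlety is remembering that $(\star)$ is only assumed over natural numbers, so one must explicitly invoke Lemma~\ref{lem_inequality_at_infinity}, and hence one must rely on the continuity assumption on $\eta_h^x$ (which is why ``continuous'' appears in the hypothesis rather than merely ``monotone'' or ``regular''). Apart from that, the proof is a one-line appeal to Corollary~\ref{cor_unique_solution}.
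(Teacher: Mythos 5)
Your proof is correct and follows exactly the paper's route: the paper's own proof is the one-line ``Follows from Corollary~\ref{cor_unique_solution} and Lemma~\ref{lem_inequality_at_infinity}.'' Your expanded explanation of why continuity is needed (to upgrade $(\star)$ from $\Nbb^m$ to tuples in $\Nbb_\infty^m$ with a finite coordinate) correctly identifies the only nontrivial step.
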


\begin{proof}
  Follows from Corollary~\ref{cor_unique_solution} and
  Lemma~\ref{lem_inequality_at_infinity}.
\end{proof}

Note that if $S = \Tc_{s_1}\times\ldots\times\Tc_{s_k}$, the local
prefix production functions~$\eta_h^x$ above are all the same and a
global prefix production function $\eta_h : \Nbb_\infty^{k+m} \to
\Nbb_\infty$ satisfies $\eta_h(\infty,\ldots,\infty,n_1,\ldots,n_m) =
\eta_h^x(n_1,\ldots,n_m)$, then~$(\star)$ in
Corollary~\ref{cor_unique_solution_coterms} implies~$(\star)$ in
Theorem~\ref{thm_corecursion}. This situation is usually the case, and
we will often avoid mentioning it explicitly.

\begin{example}
  Let $A$ be a set. Let $\Sigma$ consist of two sorts~$\sfr$
  and~$\dfr$, one function symbol~$\cons$ of type $(\dfr,\sfr;\sfr)$
  and a distinct constant symbol $a \in A$ of sort~$\dfr$ for each
  element of~$A$. Then~$\Tc_\sfr(\Sigma)$ is the set of streams
  over~$A$. We also write $\Tc_\sfr(\Sigma) = A^\omega$ and
  $\Tc_\dfr(\Sigma) = A$. Instead of $\cons(a,t)$ we usually write $a
  : t$, and we assume that~$:$ associates to the right, e.g., $x : y :
  t$ is $x : (y : t)$. We also use the notation $x : t$ to denote the
  application of the constructor for~$\cons$ to~$x$ and~$t$. Instead
  of~$d_{\cons,1}$ we write~$\head$, and instead of~$d_{\cons,2}$ we
  write~$\tail$. Instead of $o_a(x) = 1$, where $a \in A$, we write
  $x=a$. For~$\tail$ we shall use the continuous production function
  $\eta_\tail(n) = \max(0, n - 1)$, and for~$\cons$ we shall use the
  function $\eta_\cons(n) = n + 1$. Since all elements of $\Tc_\dfr$
  are constants, we may use
  \[
  \eta_\infty(n) = \left\{
    \begin{array}{cl}
      \infty & \text{ if } n > 0 \\
      0 & \text{ if } n = 0
    \end{array}
  \right.
  \]
  as a continuous production function for~$\head$. For~$o_\cons$ we
  use the continuous production function~$\eta_{o_\cons}^0$. See
  Lemma~\ref{lem_coterms_prodfuns}.

  Consider the equation
  \[
  \even(x : y : t) = x : \even(t)
  \]
  We shall show that the above equation determines a unique function
  $\even : A^\omega \to A^\omega$.

  The equation may be rewritten as
  \[
  \even(t) = \head(t) : \even(\tail(\tail(t))) \text{ if } o_\cons(t) = 1
  \]
  So $\even$ is defined by corecursion from $h : A^\omega \times A^\omega \to
  A^\omega$ given by
  \[
  h(t,t') = \head(t) : t' \text{ if } o_\cons(t) = 1
  \]
  and $g : A^\omega \to A^\omega$ given by
  \[
  g(t) = \tail(\tail(t))
  \]
  The function~$h$ is defined by cases from $h_0 : A^\omega \times
  A^\omega \to A^\omega$ given by
  \[
  h_0(t,t') = \head(t) : t'
  \]
  and from the test function~$o_\cons$ (formally, we also need some
  function for~$g_0$ in Definition~\ref{def_corecursion}, i.e., for
  the case when none of the conditions holds, but in the present
  instance~$o_\cons$ never gives the value~$0$ so this does not
  matter). Using Lemma~\ref{lem_substitution} we conclude that for
  each $t \in \Tc$ a continuous $t$-local prefix production
  function\footnote{Recall that by Lemma~\ref{lem_infinite_continuous}
    we may consider continuous production functions as defined
    on~$\Nbb^m$ instead of~$\Nbb_\infty^m$.} $\xi^t : \Nbb \to \Nbb$
  is defined by $\xi^t(n) = n + 1$. Therefore, $\even$ is well-defined
  (i.e.~it exists and is unique) by
  Corollary~\ref{cor_unique_solution_coterms}. Using
  Lemma~\ref{lem_substitution} we see that a continuous production
  function $\eta_g : \Nbb \to \Nbb$ for~$g$ is defined by
  \[
  \eta_g(n) =
  \left\{
    \begin{array}{cl}
      n - 2 & \text{ if } n \ge 2 \\
      0 & \text{ otherwise }
    \end{array}
  \right.
  \]
  From Lemma~\ref{lem_substitution} and Lemma~\ref{lem_cases} a
  continuous production function $\eta_h : \Nbb^2 \to \Nbb$ for~$h$ is
  given by
  \[
  \eta_h(n,m) =
  \left\{
    \begin{array}{cl}
      m + 1 & \text{ if } n > 0 \\
      0 & \text{ if } n = 0
    \end{array}
  \right.
  \]
  Therefore, by Theorem~\ref{thm_corecursion} there exists a
  continuous production function $\eta_\even : \Nbb \to \Nbb$
  for~$\even$ satisfying
  \[
  \eta_\even(n) =
  \left\{
    \begin{array}{cl}
      \eta_\even(n-2) + 1 & \text{ if } n \ge 2 \\
      \eta_\even(0) + 1 & \text{ if } n = 1 \\
      0 & \text{ if } n = 0
    \end{array}
  \right.
  \]
  Thus $\eta_\even(n) = \ceil{\frac{n}{2}}$ for $n \in \Nbb$.

  Usually, we do not get into so much detail when justifying
  well-definedness of a function given by some corecursive
  equations. Formally, sets of equations of the form
  \[
  \begin{array}{c}
    f(t_1) = s_1 \\
    \vdots \\
    f(t_k) = s_k
  \end{array}
  \]
  where $t_1,\ldots,t_k$ are some patterns and $s_1,\ldots,s_k$ some
  expressions possibly involving~$f$, are always interpreted as
  defining a function~$f$ by (corecursion from a function defined by)
  cases from appropriate functions corresponding to the~$s_i$ and from
  some combinations of test functions corresponding to the
  patterns. These correspondences are usually straightforward and left
  implicit. To prove well-definedness of~$f$ we implicitly use
  lemmas~\ref{lem_production_functions}-\ref{lem_cases} to calculate
  all local prefix production functions, and then we show~$(\star)$ in
  Corollary~\ref{cor_unique_solution_coterms}. If we are also
  interested in a production function for~$f$, then we calculate
  production functions for the argument functions~$g_i$ and the prefix
  function~$h$ (using
  lemmas~\ref{lem_production_functions}-\ref{lem_cases}), and then we
  apply Theorem~\ref{thm_corecursion} to obtain recursive equations
  for a production function for~$f$. The resulting production
  functions are typically continuous. We leave this observation
  implicit and consider the production functions as functions defined
  on~$\Nbb^m$ (which we can do by Lemma~\ref{lem_infinite_continuous}
  and Lemma~\ref{lem_inequality_at_infinity}).

  Applying the remarks of the preceding paragraph, we now give
  arguments justifying the well-definedness of~$\even$ and the form of
  its production function in a style which we shall adopt from now on.

  A prefix production function\footnote{More precisely: for each $r
    \in A^\omega$ a continuous $r$-local prefix production
    function\ldots} for a function~$\even$ satisfying
  \[
  \even(x : y : t) = x : \even(t)
  \]
  is given by $\xi(n) = n + 1 > n$. Thus~$\even$ is well-defined and
  its production function $\eta_\even : \Nbb \to \Nbb$
  satisfies\footnote{We leave implicit the verification of~$(\star)$
    in Theorem~\ref{thm_corecursion}, which follows from the fact that
    a global prefix production function~$\xi'$ (see the definition
    of~$\eta_h$ above) satisfies $\xi'(\infty,n) = \xi(n)$, and
    $\xi(n) > n$ for $n \in \Nbb$.} for $n \in \Nbb$:
  \[
  \begin{array}{rcl}
    \eta_\even(0) &=& 0 \\
    \eta_\even(1) &=& 1 + \eta_\even(0) \\
    \eta_\even(n+2) &=& 1 + \eta_\even(n)
  \end{array}
  \]
  Hence $\eta_\even(n) = \ceil{\frac{n}{2}}$ for $n \in
  \Nbb$. \hfill$\Box$
\end{example}

The above definition of~$\even$ is actually an instance of a common
form of definition by guarded corecursion.

\begin{definition}\label{def_guarded_corecursion}
  A function $h : S \times \Tc_s^m \to \Tc_{s'}$ (for $m \in \Nbb$,
  $s,s' \in \Sigma_s$) is \emph{non-consuming} if for each $x \in S$
  there is a continuous production function
  $\eta_h^x:\Nbb_\infty^{m}\to \Nbb_\infty$ for $\lambda \bar{y}
  . h(x, \bar{y})$ satisfying
  \[
  \eta_h^x(n_1,\ldots,n_m) \ge \min_{i=1,\ldots,m}n_i
  \]
  for all $n_1,\ldots,n_m \in \Nbb$.

  The class of \emph{constructor-guarded} functions is defined
  inductively as the class of all functions $h : S \times \Tc_s^m \to
  \Tc_{s'}$ (for arbitrary $m \in \Nbb$, $s,s' \in \Sigma_s$) such
  that
  \[
  h(x,y_1,\ldots,y_m) = c(u_1(x,y_1,\ldots,y_m),\ldots,u_k(x,y_1,\ldots,y_m))
  \]
  where~$c$ is a constructor for a function symbol of type
  $(s_1,\ldots,s_k;s')$ and each $u_i : S \times \Tc_s^m \to
  \Tc_{s_i}$ is non-consuming.

  We say that a function $f : S \to \Tc_s$ is defined by \emph{guarded
    corecursion} from $h : S \times \Tc_s^m \to \Tc_s$ and $g_i : S
  \to S$ ($i=1,\ldots,m$) if it is defined by corecursion from~$h$ and
  $g_1,\ldots,g_m$, with~$h$ defined by cases from some
  constructor-guarded functions $h_j : S \times \Tc_s^m \to \Tc_s$
  ($j=0,\ldots,k$) and some condition functions $o_j : S \to \{0,1\}$
  ($j=1,\ldots,k$), i.e., the condition functions depend only on the
  first argument of~$h$.
\end{definition}

Note that every function $h : S \times \Tc_s^m \to \Tc_{s'}$ which
\begin{itemize}
\item depends only on its first argument, or
\item satisfies $h(x,y_1,\ldots,y_m) = y_i$ for all $x \in S$,
  $y_1,\ldots,y_m\in\Tc_s^m$, fixed~$i$, or
\item is constructor-guarded
\end{itemize}
is also non-consuming.

By Corollary~\ref{cor_unique_solution_coterms}, for every~$h$ and
$g_1,\ldots,g_m$ satisfying the requirements of
Definition~\ref{def_guarded_corecursion}, there exists a unique
function defined by guarded corecursion. When some corecursive
equations involving a function~$f$ straightforwardly translate to a
definition by guarded corecursion, then we say that the definition
of~$f$ is \emph{guarded}, which implies well-definedness
of~$f$. If~$f$ is defined by guarded corecursion, $S =
\Tc_{s_1}\times\ldots\times\Tc_{s_l}$ and there exist appropriate
production functions for the~$u_j$ ($j=1,\ldots,k$), then~$(\star)$ in
Theorem~\ref{thm_corecursion} holds, so we may then use
Theorem~\ref{thm_corecursion} to calculate a production function
for~$f$.

The functions~$\eta_f^\infty$ and~$\eta_\chi^\infty$ for various~$f$
and~$\chi$ (see Lemma~\ref{lem_coterms_prodfuns}) will be used
implicitly in calculations of production functions in the following
examples.

\begin{example}
  Consider the equations over streams of natural numbers:
  \[
  \begin{array}{rcl}
    \add(x : t, y : s) &=& (x + y) : \add(t,s) \\
    \zip(x : t, s) &=& x : \zip(s, t) \\
    D &=& 0 : 1 : 1 : \zip(\add(\tail(D),\tail(\tail(D))),\even(\tail(D)))
  \end{array}
  \]
  We show that these equations define unique functions $\add :
  \Nbb^\omega \times \Nbb^\omega \to \Nbb^\omega$, $\zip : \Nbb^\omega
  \times \Nbb^\omega \to \Nbb^\omega$, and a unique stream\footnote{To
    make the definition of~$D$ consistent with our theory, which
    considers only functions, we could provide~$D$ with one dummy
    argument.}~$D \in \Nbb^\omega$.

  The function~$\add$ is well-defined, because its definition is
  guarded. A production function $\eta_\add : \Nbb \times \Nbb \to
  \Nbb$ for~$\add$ satisfies
  \[
  \begin{array}{rcl}
    \eta_\add(0,m) &=& 0 \\
    \eta_\add(n,0) &=& 0 \\
    \eta_\add(n + 1, m + 1) &=& \eta_\add(n, m) + 1
  \end{array}
  \]
  Thus $\eta_\add(n, m) = \min(n,m)$.

  The definition of~$\zip$ is also guarded, so~$\zip$ is
  well-defined. A production function $\eta_\zip : \Nbb \times \Nbb
  \to \Nbb$ for~$\zip$ satisfies
  \[
  \begin{array}{rcl}
    \eta_\zip(0, m) &=& 0 \\
    \eta_\zip(n + 1, m) &=& \eta_\zip(m, n) + 1
  \end{array}
  \]
  The equations for~$\eta_\zip$ are equivalent to
  \[
  \begin{array}{rcl}
    \eta_\zip(0, m) &=& 0 \\
    \eta_\zip(n + 1, 0) &=& 1 \\
    \eta_\zip(n + 1, m + 1) &=& \eta_\zip(n, m) + 2
  \end{array}
  \]
  Thus $\eta_\zip(n,m) = \min(2n,2m+1)$.

  Using the formulas for~$\eta_\even$, $\eta_\add$ and~$\eta_\zip$ we
  now calculate a prefix production function~$\xi$ for~$D$. For
  $n < 2$ we have
  \[
  \begin{array}{rcl}
    \xi(n) &=& 3 + \eta_\zip(\eta_\add(0,0), \eta_\even(0)) \\
    &=& 3 + \eta_\zip(0, 0) \\
    &=& 3
  \end{array}
  \]
  so $\xi(n) > n$ for $n < 2$. For $n \ge 2$ we have
  \[
  \begin{array}{rcl}
    \xi(n) &=& 3 + \eta_\zip(\eta_\add(n-1,n-2), \eta_\even(n-1))
    \\
    &=& 3 + \eta_\zip(\min(n-1,n-2), \ceil{\frac{n-1}{2}}) \\
    &=& 3 + \eta_\zip(n-2,\ceil{\frac{n-1}{2}}) \\
    &=& 3 + \min(2(n - 2), 2 \ceil{\frac{n-1}{2}} + 1)
  \end{array}
  \]
  We have $2(n - 2) = 2n - 4 > n - 3$ for $n \ge 2$. Also $2
  \ceil{\frac{n-1}{2}} + 1 \ge n - 1 + 1 = n > n - 3$. Hence for $n
  \ge 2$ we have $\xi(n) > 3 + n - 3 = n$. Thus $\xi(n) > n$
  for $n \in \Nbb$, and therefore~$D$ is well-defined.\hfill$\Box$
\end{example}

\begin{example}
  Consider the following specification of the Hamming string~$H$ of
  positive natural numbers not divisible by primes other than $2$, $3$
  and~$5$.
  \[
  \begin{array}{rcl}
    \mul(x, y : t) &=& x \cdot y : \mul(x, t) \\
    \mer(x : t_1, y : t_2) &=&
    \left\{
    \begin{array}{cl}
      x : \mer(t_1, y : t_2) & \text{ if } x \le y \\
      y : \mer(x : t_1, t_2) & \text{ otherwise }
    \end{array}
    \right.
    \\
    H &=& 1 : \mer(\mer(\mul(2, H), \mul(3, H)), \mul(5,H))
  \end{array}
  \]
  We show that $\mul : \Nbb \times \Nbb^\omega \to \Nbb^\omega$,
  $\mer : \Nbb^\omega \times \Nbb^\omega \to \Nbb^\omega$ and $H \in
  \Nbb^\omega$ are well-defined.

  The function~$\mul$ is well-defined, because the definition is
  guarded. A production function\footnote{Formally, we consider
    infinitely many functions $\lambda t . \mul(n, t)$ for each
    $n\in\Nbb$, and apply Theorem~\ref{thm_corecursion} to each of
    them.}~$\eta_\mul$ for~$\mul$ is given by $\eta_\mul(n) = n$. The
  definition of~$\mer$ is also guarded, so~$\mer$ is well-defined. A
  production function~$\eta_\mer$ for~$\mer$ satisfies\footnote{We use
    Lemma~\ref{lem_cases} and Theorem~\ref{thm_corecursion}.}:
  \[
  \begin{array}{rcl}
    \eta_\mer(0,m) &=& 0 \\
    \eta_\mer(n,0) &=& 0 \\
    \eta_\mer(n+1,m+1) &=& \min(\eta_\mer(n,m+1),\eta_\mer(n+1,m)) + 1
  \end{array}
  \]
  Thus $\eta_\mer(n,m) = \min(n,m)$. Note that the form of this
  production function or even its existence is not completely
  intuitive -- one would expect that the ``size'' of the resulting
  stream may depend on the elements of the argument streams, not only
  on their sizes. The trick is that we use cut functions in the proof
  of Lemma~\ref{lem_cases} to effectively select the least possible
  size, disregarding any side conditions.

  Therefore, a prefix production function~$\xi$ for~$H$ satisfies
  $\xi(n) = 1 + \min(\min(n, n), n) = n + 1 > n$. So~$H$ is
  well-defined.\hfill$\Box$
\end{example}

Specifications of many-sorted signatures may be conveniently given by
coinductively interpreted grammars. For instance, the set~$S$ of
streams over a set~$A$ could be specified by writing
\[
S \Coloneqq \cons(A, S).
\]
A more interesting example is that of finite and infinite binary trees
with nodes labelled either with~$a$ or~$b$, and leaves labelled with
one of the elements of a set~$V$:
\[
T \Coloneqq V \parallel a(T, T) \parallel b(T, T).
\]
As such grammars are not intended to be formal entities but only
convenient visual means for specifying sets of coterms, we will not
define them precisely. It is always clear what many-sorted signature
is meant.

\newcommand{\subst}{\mathtt{subst}}

\begin{example}\label{ex_epsilon_lambda_terms}
  We define the set~$\Lambda$ of infinitary $\epsilon$-$\lambda$-terms by
  \[
  \Lambda \Coloneqq V \parallel \Lambda \Lambda \parallel \lambda V
  . \Lambda \parallel \epsilon(\Lambda)
  \]
  where~$V$ is a set of variables. For $s,t\in\Lambda$ and $x \in V$,
  the operation of substitution $\subst_x : \Lambda \times \Lambda \to
  \Lambda$ is defined by guarded
  corecursion
  \[
  \begin{array}{rcl}
    x[t/x] &=& t \\
    y[t/x] &=& y \quad\text{ if } x \ne y \\
    (s_1s_2)[t/x] &=& (s_1[t/x])(s_2[t/x]) \\
    (\lambda y . s)[t/x] &=& \lambda y . s[t/x] \quad\text{ if } x \ne
    y \\
    (\lambda x . s)[t/x] &=& \lambda x . s \\
    (\epsilon(s))[t/x] &=& \epsilon(s[t/x])
  \end{array}
  \]
  where $s[t/x] = \subst_x(s,t)$.

  Note that substitution defined in this way may capture
  variables. For the sake of simplicity, we disregard this problem by
  assuming that in all terms the free variables are distinct from the
  bound ones.

  A production function~$\eta_\subst$ for~$\subst_x$ is given by the
  equations\footnote{We again implicitly use Lemma~\ref{lem_cases},
    and Theorem~\ref{thm_corecursion}.}
  \[
  \begin{array}{rcl}
    \eta_\subst(0,m) &=& 0 \\
    \eta_\subst(n,0) &=& 0 \\
    \eta_\subst(n+1,m) &=& \min(m,n+1,\eta_\subst(n,m)+1)
  \end{array}
  \]
  Thus $\eta_\subst(n,m) = \min(n,m)$.

  The definition of substitution on infinitary $\epsilon$-$\lambda$-terms
  will be used in an example in the next section.\hfill$\Box$
\end{example}

\subsection{Coinduction}\label{sec_coinduction}

Coinduction is a method of proving statements involving coinductive
relations, i.e., relations defined as greatest fixpoints of certain
monotone operators. Coinductive relations are most useful in
conjunction with infinite objects and corecursively defined functions.

\begin{definition}\label{def_coinductive_relation}
  Let $A$ be a set. An $n$-ary relation $R \subseteq A^n$ is a
  \emph{coinductive relation} if it is the greatest fixpoint of some
  monotone function $F : \Pow{A^n} \to \Pow{A^n}$. Since~$\Pow{A^n}$
  is a complete lattice, for any monotone function $F : \Pow{A^n} \to
  \Pow{A^n}$ there exists an associated coinductive relation $R = \nu
  F$, and it is the limit of the final sequence of~$F$. The final
  sequence of a function for a coinductive relation~$R$ will be
  denoted by~$(R^\alpha)_\alpha$. The $\alpha$-th element~$R^\alpha$
  of the final sequence is called the \emph{approximant of~$R$ at
    stage~$\alpha$}. If $\la x_1,\ldots,x_n\ra \in R$ then we say that
  $R(x_1,\ldots,x_n)$ (holds). If $R^\alpha(x_1,\ldots,x_n)$ then we
  say that $R(x_1,\ldots,x_n)$ (holds) at (stage)~$\alpha$.
\end{definition}

Note that the approximants~$R^\alpha$ of a coinductive relation~$R$
depend on the function~$F$ of which~$R$ is the greatest fixpoint,
i.e., they depend on a particular definition of~$R$, not on the
relation~$R$ itself.

\begin{example}\label{ex_coinductive_def}
  We define a set of coterms~$T$ by
  \[
  T \Coloneqq V \parallel A(T) \parallel B(T, T)
  \]
  where~$V$ is a countable set of variables, and~$A$, $B$ are
  constructors. By $x,y,\ldots$ we denote variables, and by
  $t,s,\ldots$ we denote coterms (i.e.~elements of~$T$).

  We define a coinductive relation ${\to} \subseteq T \times T$ by a
  set of coinductive rules:
  \[
  \infer=[(1)]{x \to x}{} \quad
  \infer=[(2)]{A(t) \to A(t')}{t \to t'} \quad
  \infer=[(3)]{B(s,t) \to B(s',t')}{s \to s' & t \to t'} \quad
  \infer=[(4)]{A(t) \to B(t',t')}{t\to t'}
  \]
  Formally, the relation~${\to}$ is the greatest fixpoint of a
  monotone \mbox{$F : \Pow{T \times T} \to \Pow{T \times T}$} defined
  by
  \[
  F(R) = \left\{ \la t_1, t_2 \ra \mid \exists_{x \in V}(t_1 \equiv
    t_2 \equiv x) \lor \exists_{t,t'\in T}(t_1 \equiv A(t) \land t_2
    \equiv B(t',t') \land R(t,t')) \lor \ldots \right\}.
  \]
  It is always straightforward to convert rules of the above form into
  an appropriate monotone function (provided the rules actually are
  monotone). We shall always leave this conversion implicit.

  Alternatively, using the Knaster-Tarski fixpoint theorem, the
  relation~$\to$ may be characterized as the greatest binary relation
  on~$T$ (i.e. the greatest subset of $T\times T$ w.r.t.~set
  inclusion) such that ${\to} \subseteq F({\to})$, i.e., such that for
  every $t_1,t_2 \in T$ with $t_1 \to t_2$ one of the following holds:
  \begin{enumerate}
  \item $t_1 \equiv t_2 \equiv x$ for some variable $x \in V$,
  \item $t_1 \equiv A(t)$, $t_2 \equiv A(t')$ with $t \to t'$,
  \item $t_1 \equiv B(s,t)$, $t_2 \equiv B(s',t')$ with $s \to s'$ and
    $t \to t'$,
  \item $t_1 \equiv A(t)$, $t_2 \equiv B(t',t')$ with $t \to t'$.
  \end{enumerate}

  Yet another way to think about~$\to$ is that $t_1 \to t_2$ holds iff
  there exists a \emph{potentially infinite} derivation tree of $t_1
  \to t_2$ built using the rules~$(1)-(4)$.

  The rules~$(1)-(4)$ could also be interpreted inductively to yield
  the least fixpoint of~$F$. This is the conventional interpretation,
  and it is indicated with single line in each rule separating
  premises from the conclusion. A coinductive interpretation is
  indicated with double lines.

  It is instructive to note that the coinductive rules may also be
  interpreted as giving (ordinary) rules for approximants at each
  successor ordinal stage~$\alpha+1$.
  \[
  \begin{array}{c}
    \infer[(1)]{x \to x \text{ at } \alpha+1}{} \quad
    \infer[(2)]{A(t) \to A(t') \text{ at } \alpha+1}{
      t \to t' \text{ at } \alpha} \quad
    \infer[(3)]{B(s,t) \to B(s',t') \text{ at } \alpha+1}{
      s \to s' \text{ at } \alpha & t \to t' \text{ at } \alpha} \\ \\
    \infer[(4)]{A(t) \to B(t',t') \text{ at } \alpha+1}{
      t\to t' \text{ at } \alpha}
  \end{array}
  \]
  This follows directly from the way~$F$ and the approximants are
  defined. We will often use this observation implicitly.

  Usually, the closure ordinal for the definition of a coinductive
  relation is~$\omega$. In general, however, it is not difficult to
  come up with a coinductive definition whose closure ordinal is
  greater than~$\omega$. For instance, consider the relation $R
  \subseteq \Nbb \cup \{\infty\}$ defined coinductively by the
  following two rules.
  \[
  \infer={R(n+1)}{R(n) & n \in \Nbb} \quad\quad
  \infer={R(\infty)}{\exists n \in \Nbb . R(n)}
  \]
  We have $R = \emptyset$, $R^n = \{m \in \Nbb \mid m \ge n\} \cup
  \{\infty\}$ for $n \in \Nbb$, $R^\omega = \{\infty\}$, and only
  $R^{\omega+1}=\emptyset$. Thus the closure ordinal of this
  definition is $\omega+1$.\hfill$\Box$
\end{example}

\begin{definition}\label{def_formulas}
  Let~$\Sigma$ be a first-order signature. The \emph{first-order
    language} over the signature~$\Sigma$ is defined in the standard
  way, except that we additionally allow free relational variables
  (but not bound ones -- quantification is only over individuals). We
  use the symbol~$\equiv$ to denote syntactic identity of terms and
  formulas.

  A \emph{sentence} is a formula without free variables (relational or
  individual). Given a $\Sigma$-structure~$\Ab$ and a
  sentence~$\varphi$, we write $\Ab \models \varphi$ if~$\varphi$ is
  true in~$\Ab$. If~$\Ab$ is clear or irrelevant, we sometimes simply
  say that~$\varphi$ holds.

  Since we will usually work with a fixed structure~$\Ab$, to save on
  notation we often confuse function and relation symbols in the
  language with corresponding functions and relations on~$\Ab$. We
  will also often confuse a structure~$\Ab$ with its carrier
  set. Moreover, we usually implicitly assume that in the
  signature~$\Sigma$ there is a corresponding constant (i.e.~a nullary
  function symbol) for every element of~$\Ab$.

  If $\Sigma \subseteq \Sigma'$ and~$\Ab$ is a $\Sigma$-structure,
  then a \emph{$\Sigma'$-expansion} of~$\Ab$ is a
  $\Sigma'$-structure~$\Ab'$ with the same carrier set and the same
  interpretation of symbols from~$\Sigma$ as~$\Ab$.

  We write $\varphi \equiv \varphi(\bar{x},\bar{X}) \equiv
  \varphi(x_1,\ldots,x_n,X_1,\ldots,X_m)$ for a formula with all free
  individual variables among $x_1,\ldots,x_n$, and all free relational
  variables among $X_1,\ldots,X_m$. We then write
  $\varphi(t_1,\ldots,t_n,R_1,\ldots,R_m)$ to denote~$\varphi$ with
  each~$x_i$ substituted with the term~$t_i$, and each~$X_i$
  substituted with the relation symbol~$R_i$.

  A formula~$\varphi$ is in \emph{prenex normal form} if
  $\varphi\equiv \forall_{x_1}\exists_{y_1}\forall_{x_2}\exists_{y_2}
  \ldots\forall_{x_n}\exists_{y_n}\psi$ where~$\psi$ is
  quantifier-free. It is a standard result in elementary logic that
  any first order formula may be effectively translated into an
  equivalent formula in prenex normal form. A formula~$\varphi$ is
  \emph{universal} if it is equivalent to a formula
  $\forall_{x_1}\forall_{x_2}\ldots\forall_{x_n}\psi$ with~$\psi$
  quantifier-free. A formula~$\varphi$ is \emph{standard} if it is
  equivalent to a conjunction of formulas of the form\footnote{The
    individual variables $x_1,\ldots,x_n$ may of course occur in the
    terms~$t_i^j$.}
  \[
  \forall_{x_1}\ldots\forall_{x_n}\left(\psi(x_1,\ldots,x_n,X_1,\ldots,X_m)
    \to X_{i_1}(t_1^1,\ldots,t_{n_1}^1) \land \ldots \land
    X_{i_k}(t_1^k,\ldots,t_{n_k}^k)\right)
  \]
  where~$\psi$ is quantifier-free.
\end{definition}

The following simple but important theorem states the coinduction
principle.

\begin{theorem}[Coinduction principle]\label{thm_coinduction}
  Let~$\Sigma$ be a first-order signature, $\varphi(X_1,\ldots,X_m)$ a
  standard formula over~$\Sigma$, and~$\Ab$ a $\Sigma$-structure. Let
  $R_1,\ldots,R_m$ be coinductive relations on~$\Ab$, with arities
  matching the arities of $X_1,\ldots,X_m$. Suppose the
  \emph{coinductive step} holds:
  \begin{itemize}
  \item for every ordinal~$\alpha$, if $\Ab \models
    \varphi(R_1^\alpha,\ldots,R_m^\alpha)$ then \mbox{$\Ab \models
      \varphi(R_1^{\alpha+1},\ldots,R_m^{\alpha+1})$}.
  \end{itemize}
  Then $\Ab \models \varphi(R_1^\alpha,\ldots,R_m^\alpha)$ for every
  ordinal~$\alpha$. In particular, $\Ab \models
  \varphi(R_1,\ldots,R_m)$.
\end{theorem}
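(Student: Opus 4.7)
The plan is to prove the theorem by transfinite induction on the ordinal~$\alpha$, establishing that $\Ab \models \varphi(R_1^\alpha,\ldots,R_m^\alpha)$ for every~$\alpha$; the particular case $\Ab \models \varphi(R_1,\ldots,R_m)$ then follows by taking~$\alpha = \zeta$ to be the (common) closure ordinal of the final sequences, since Theorem~\ref{thm_fixpoint_complete_lattice} guarantees that $R_i = R_i^\zeta$ for sufficiently large~$\zeta$. Since~$\varphi$ is standard, it is a conjunction, and it suffices to treat each conjunct of the form $\forall \bar{x}(\psi(\bar{x},\bar{X}) \to X_{i_1}(\bar{t}^1) \land \ldots \land X_{i_k}(\bar{t}^k))$ separately.

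For the base case $\alpha = 0$, each $R_i^0$ is the top of the complete lattice~$\Pow{A^{n_i}}$, i.e.\ the full relation $A^{n_i}$, so every atom $X_{i_j}(\bar{t}^j)$ evaluates to true when $X_{i_j}$ is interpreted as $R_{i_j}^0$. Hence the consequent of each conjunct is trivially true and the whole implication holds regardless of the antecedent. The successor case $\alpha + 1$ is exactly the coinductive step we assumed as a hypothesis of the theorem, so nothing needs to be shown beyond invoking it with the inductive hypothesis at stage~$\alpha$.

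The limit case $\alpha = \lambda$ is the main point of interest. Here $R_i^\lambda = \bigcap_{\beta<\lambda} R_i^\beta$ by definition of the final sequence. Fix~$\bar{x}$ and assume the antecedent $\psi(\bar{x}, R_1^\lambda,\ldots,R_m^\lambda)$. By Lemma~\ref{lem_sequence} the approximants are decreasing, so $R_i^\lambda \subseteq R_i^\beta$ for each $\beta<\lambda$; because~$\psi$ occurs as the antecedent of an implication whose consequent is built from the~$X_i$ positively, the intended reading of \textit{standard} is that the~$X_i$ occur monotonically in~$\psi$ (in the sense that enlarging the relations preserves truth of~$\psi$), so $\psi(\bar{x}, R_1^\beta,\ldots,R_m^\beta)$ holds for every~$\beta<\lambda$. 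The inductive hypothesis then yields $R_{i_j}^\beta(\bar{t}^j)$ for each~$\beta<\lambda$ and each~$j$, hence $\bar{t}^j \in \bigcap_{\beta<\lambda} R_{i_j}^\beta = R_{i_j}^\lambda$, as required.

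The main obstacle is precisely this monotonicity issue at limit stages: the argument relies on~$\psi$ behaving monotonically in the relational variables, which is implicit in the meta-mathematically informal definition of a standard formula (the relational variables may appear in~$\psi$, but their positions in the implication force them to be used monotonically if the principle is to be sound). Once this is granted, the transfinite induction is routine and the passage from $\varphi(R_1^\zeta,\ldots,R_m^\zeta)$ to $\varphi(R_1,\ldots,R_m)$ is immediate from $R_i = R_i^\zeta$.
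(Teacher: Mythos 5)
Your overall scaffolding (transfinite induction; trivial base case because each $R_i^0$ is the full relation; successor case is the hypothesis) matches the paper exactly. The problem is the limit case, and the gap you flag yourself is real, not a matter of ``intended reading.'' The paper's Definition~\ref{def_formulas} only requires the antecedent $\psi$ to be quantifier-free; it places no positivity or monotonicity restriction on how the relational variables occur in $\psi$. A formula such as $\forall x(\neg X(x) \to Y(x))$ is standard in the paper's sense, and for it your key step --- inferring $\psi(\bar{x},R_1^\beta,\ldots,R_m^\beta)$ for \emph{every} $\beta<\lambda$ from $\psi(\bar{x},R_1^\lambda,\ldots,R_m^\lambda)$ via $R_i^\lambda\subseteq R_i^\beta$ --- fails: $\neg R^\lambda(x)$ means $x\notin\bigcap_{\beta<\lambda}R^\beta$, which only gives $\neg R^\beta(x)$ for all sufficiently large $\beta<\lambda$, not for all of them. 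So your proof establishes the theorem only for a strictly smaller class of formulas than the one stated, and the indispensable part of the theorem (it is invoked for proofs where the antecedent genuinely mentions approximants) is exactly what you have assumed away.

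The paper closes this gap without any monotonicity hypothesis: it rewrites the matrix of each conjunct in disjunctive normal form $\psi_1\lor\ldots\lor\psi_k$, fixes the parameters $a_1,\ldots,a_n$, and observes that since only finitely many disjuncts exist, some single $\psi_i$ must hold at cofinally many stages $\beta<\lambda$. Each literal of that disjunct is then pushed to stage $\lambda$ separately: a positive literal $X_p(\bar{t})$ holding cofinally holds at \emph{all} $\beta<\lambda$ by Lemma~\ref{lem_sequence}, hence in the intersection $R_p^\lambda$; a negative literal $\neg X_p(\bar{t})$ holding at some $\beta$ persists to $\lambda$ precisely because $R_p^\lambda\subseteq R_p^\beta$; literals not mentioning the $X_i$ are constant. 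You should replace your monotonicity appeal with this cofinality-and-DNF argument (or else explicitly weaken the theorem to antecedents in which the $X_i$ occur only positively, which would no longer be the stated result).
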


\begin{proof}
  By transfinite induction on~$\alpha$ we show $\Ab \models
  \varphi(R_1^\alpha,\ldots,R_m^\alpha)$. For $\alpha=0$ this follows
  from the fact that~$\varphi$ is standard and each~$R_i^0$ is a full
  relation, i.e., $R_i^0 = \Ab^k$ for some~$k > 0$. For~$\alpha$ a
  successor ordinal this follows from the coinductive step. So
  assume~$\alpha$ is a limit ordinal. Since~$\varphi$ is a universal
  formula, it is equivalent to a formula
  $\forall_{x_1}\ldots\forall_{x_n}\psi$ with~$\psi$ quantifier-free
  in disjunctive normal form. So $\psi \equiv \psi_1 \lor \ldots \lor
  \psi_k$ with each disjunct~$\psi_i$ a conjunction of literals. We
  need to show that for all $a_1,\ldots,a_n \in \Ab$ we have $\Ab
  \models \psi(a_1,\ldots,a_n,R_1^\alpha,\ldots,R_m^\alpha)$.

  Let $a_1,\ldots,a_n \in \Ab$. Let $\beta < \alpha$. Then $\Ab
  \models \varphi(R_1^\beta,\ldots,R_m^\beta)$, so $\Ab \models
  \psi(a_1,\ldots,a_n,R_1^\beta,\ldots,R_m^\beta)$. Hence $\Ab \models
  \psi_i(a_1,\ldots,a_n,R_1^\beta,\ldots,R_m^\beta)$ for some $1 \le i
  \le k$. Since the number~$k$ of disjuncts is finite, there must be
  $1 \le i \le k$ with $\Ab \models
  \psi_i(a_1,\ldots,a_n,R_1^\beta,\ldots,R_m^\beta)$ for arbitrarily
  large~$\beta < \alpha$, i.e., for every $\gamma < \alpha$ there is
  $\gamma \le \beta < \alpha$ with $\Ab \models
  \psi_i(a_1,\ldots,a_n,R_1^\beta,\ldots,R_m^\beta)$.

  Assume $\psi_i \equiv \theta_1 \land \ldots \land \theta_r$ with
  each~$\theta_j$ a literal. Thus $\Ab \models
  \theta_j(a_1,\ldots,a_n,R_1^\beta,\ldots,R_m^\beta)$ for arbitrarily
  large $\beta<\alpha$, for $j=1,\ldots,r$. It suffices to show
  \[
  \Ab \models \theta_j(a_1,\ldots,a_n,R_1^\alpha,\ldots,R_m^\alpha)
  \]
  for every $1 \le j \le r$. Let $1 \le j \le r$.

  If $\theta_j(a_1,\ldots,a_n,X_1,\ldots,X_m) \equiv
  \theta_j(a_1,\ldots,a_n)$, i.e., $\theta_j$ does not depend on the
  relational variables $X_1,\ldots,X_m$, then $\Ab \models
  \theta_j(a_1,\ldots,a_n,R_1^\alpha,\ldots,R_m^\alpha)$, because $\Ab
  \models \theta_j(a_1,\ldots,a_n,R_1^\beta,\ldots,R_m^\beta)$ for
  some $\beta < \alpha$, i.e., $\Ab \models \theta_j(a_1,\ldots,a_n)$.

  Now assume $\theta_j(a_1,\ldots,a_n,X_1,\ldots,X_m) \equiv \neg
  X_p(t_1,\ldots,t_q)$. Then $\Ab \models \neg
  R_p^\beta(t_1,\ldots,t_q)$ for some $\beta < \alpha$. We have
  $R_p^\alpha = \bigcap_{\beta<\alpha}R_p^\beta$ because~$\alpha$ is a
  limit ordinal (recall the definition of the final sequence in
  Definition~\ref{def_order}). Hence $\Ab \not\models
  R_p^\alpha(t_1,\ldots,t_q)$, and thus $\Ab \models \neg
  R_p^\alpha(t_1,\ldots,t_q)$, i.e., $\Ab \models
  \theta_j(a_1,\ldots,a_n,R_1^\alpha,\ldots,R_m^\alpha)$.

  So finally assume $\theta_j(a_1,\ldots,a_n,X_1,\ldots,X_m) \equiv
  X_p(t_1,\ldots,t_q)$. Then $\Ab \models R_p^\beta(t_1,\ldots,t_q)$
  for arbitrarily large $\beta < \alpha$. By Lemma~\ref{lem_sequence},
  if $\Ab \models R_p^\beta(t_1,\ldots,t_q)$ then $\Ab \models
  R_p^\gamma(t_1,\ldots,t_q)$ for all $\gamma \le \beta$. Thus in fact
  $\Ab \models R_p^\beta(t_1,\ldots,t_q)$ for all $\beta < \alpha$,
  i.e., $\Ab \models
  \bigcap_{\beta<\alpha}R_p^\beta(t_1,\ldots,t_q)$. Since $R_p^\alpha
  = \bigcap_{\beta<\alpha}R_p^\beta$, we have $\Ab \models
  R_p^\alpha(t_1,\ldots,t_q)$. Hence $\Ab \models
  \theta_j(a_1,\ldots,a_n,R_1^\alpha,\ldots,R_m^\alpha)$.
\end{proof}

\begin{example}\label{ex_to_refl}
  Let~$T$ be the set of coterms, and~${\to}$ the coinductive relation,
  from Example~\ref{ex_coinductive_def}. We show by coinduction that
  for arbitrary $t \in T$ we have $t \to t$. For the coinductive step,
  assume the coinductive hypothesis (CH), i.e., that for $\beta \le
  \alpha$: for all $t \in T$ we have $t \to t$ at
  stage~$\beta$. Consider possible forms of~$t$. If $t \equiv x \in
  V$, then $t \to t$ at~$\alpha+1$ by rule~$(1)$. If $t \equiv A(t')$
  then $t' \to t'$ at~$\alpha$ by the~CH, so $t \equiv A(t') \to A(t')
  \equiv t$ at~$\alpha+1$ by rule~$(2)$. If $t \equiv B(t_1,t_2)$ then
  $t_1 \to t_1$ at~$\alpha$ and $t_2 \to t_2$ at~$\alpha$ by the~CH,
  so $t \to t$ at~$\alpha+1$ by rule~$(3)$. Therefore, for all $t \in
  T$ we have $t \to t$ at~$\alpha+1$, which shows the coinductive
  step.

  The correctness of the above reasoning relies on
  Theorem~\ref{thm_coinduction}. The signature~$\Sigma$ and the
  structure~$\Ab$ are left implicit. For every function and relation
  on~$T$ that we use in the proof there is a corresponding symbol
  in~$\Sigma$. The structure~$\Ab$ has the set~$T$ as its carrier, and
  the expected interpretation of all symbols from~$\Sigma$ (as the
  corresponding actual functions and relations on~$T$).

  Usually, we do not get into so much detail when doing coinductive
  proofs. The ordinal stages are also left implicit, unless they occur
  in the statement we ultimately want to show or the argument that the
  stage increases is not completely trivial. Below we give the proof
  again in a style which we adopt from now on.

  We show by coinduction that if $t \in T$ then $t \to t$. If $t
  \equiv x$ then this follows by rule~$(1)$. If $t \equiv A(t')$ then
  $t' \to t'$ by the~CH, so $t \to t$ by rule~$(2)$. If $t \equiv
  B(t_1,t_2)$ then $t_1 \to t_1$ and $t_2 \to t_2$ by the~CH, so $t
  \to t$ by rule~$(3)$.\hfill$\Box$
\end{example}

When doing a proof by coinduction one must be careful to ensure that
the implicit stages actually do increase. The most common way to
ensure this is to immediately provide the conclusion of the
coinductive hypothesis as a premise of some coinductive rule, since
applying a rule increases the stage. Note that $R^\alpha \subseteq
R^\beta$ for $\beta < \alpha$, by Lemma~\ref{lem_sequence}. This has
the important practical consequence that we do not have to
worry\footnote{As long as we are showing a statement with only
  positive occurences of the coinductive relations for which we
  (implicitly) track the stages.} to increase the stage by exactly
one, as it would at first sight seem necessary from the statement of
Theorem~\ref{thm_coinduction}. We may increase it by an arbitrary $n >
0$, and the proof is still correct. In particular, it is harmless to
apply rules repeatedly a positive number of times to a conclusion of
the coinductive hypothesis, e.g., to conclude~$R(x)$ (at~$\alpha$) by
the~CH, then to conclude~$R(s(x))$ (at~$\alpha+1$) by some rule~$(r)$
with~$R(x)$ (at~$\alpha$) as a premise, then conclude~$R(s(s(x)))$
(at~$\alpha+2$, so also at~$\alpha+1$ by Lemma~\ref{lem_sequence}) by
rule~$(r)$ with~$R(s(x))$ (at~$\alpha+1$) as a premise, finishing the
proof of the coinductive step.

In general, Lemma~\ref{lem_sequence} implies that we may always
decrease the stage of a coinductive relation. But to increase it we
need to apply at least one rule.

Note that because we are usually just interested in showing properties
of some coinductive relations on certain sets, we have some freedom in
choosing the signature~$\Sigma$ and the structure~$\Ab$ in
Theorem~\ref{thm_coinduction}, as well as the actual formula~$\varphi$
we want to prove. Hence the restriction on~$\varphi$ in
Theorem~\ref{thm_coinduction} to standard formulas is less limiting
than it might at first seem. For instance, suppose $\varphi(X) \equiv
\forall x ((\forall y \psi(x,y)) \to X(f(x)))$, i.e., $X$ does not
occur in~$\psi$. We are interested in showing $\Ab \models \varphi(R)$
for some structure~$\Ab$ and a coinductive relation~$R$. One cannot
apply Theorem~\ref{thm_coinduction} directly to~$\varphi$ because of
the negative occurence of the univeral quantifier $\forall y$ (the
prenex normal of~$\varphi$ has an existential quantifier). However,
one may add a new unary relation symbol~$r$ to the signature,
interpreted in an expansion~$\Ab'$ of~$\Ab$ by the relation $\{a \in
\Ab \mid \Ab \models \forall y \psi(a,y) \}$. Then $\Ab \models
\varphi(R)$ iff $\Ab' \models \forall x (r(x) \to R(f(x)))$. In
practice, we thus do not need to worry about negative (resp.~positive)
occurences of universal (resp.~existential) quantifiers which do not
have any relational variables within their scope.

\begin{example}\label{ex_subst}
  On coterms from~$T$ (from Example~\ref{ex_coinductive_def}) we
  define the operation of substitution by guarded corecursion.
  \[
  \begin{array}{rcl}
    y[t/x] &=& y \quad\text{ if } x \ne y \\
    x[t/x] &=& t \\
    (A(s))[t/x] &=& A(s[t/x]) \\
    (B(s_1,s_2))[t/x] &=& B(s_1[t/x],s_2[t/x])
  \end{array}
  \]
  We show by coinduction: if $s \to s'$ and $t \to t'$ then $s[t/x]
  \to s'[t'/x]$, where~$\to$ is the relation from
  Example~\ref{ex_coinductive_def}. Formally, the statement we show
  is: for $s,s',t,t' \in T$, if $s \to s'$ and $t \to t'$ then $s[t/x]
  \to s'[t'/x]$ at~$\alpha$. So we do not track the stages in the
  antecedent of the implication, as this is not necessary for the
  proof to go through. It is somewhat arbitrary how to choose the
  occurences of coinductive relations for which we track the
  stages. Generally, tracking stages for negative occurences makes the
  proof harder, while tracking them for positive occurences makes it
  easier. So we adopt the convention of tracking the stages only for
  positive occurences of coinductive relations, and leave this choice
  implicit.

  Let us proceed with the proof. The proof is by coinduction with case
  analysis on $s \to s'$. If $s \equiv s' \equiv y$ with $y \ne x$,
  then $s[t/x] \equiv y \equiv s'[t'/x]$. If $s \equiv s' \equiv x$
  then $s[t/x] \equiv t \to t' \equiv s'[t'/x]$ (at~$\alpha+1$ -- we
  implicitly use Lemma~\ref{lem_sequence} here). If $s \equiv A(s_1)$,
  $s' \equiv A(s_1')$ and $s_1 \to s_1'$, then $s_1[t/x] \to
  s_1'[t'/x]$ by the~CH. Thus $s[t/x] \equiv A(s_1[t/x]) \to
  A(s_1'[t'/x]) \equiv s'[t'/x]$ by rule~$(2)$. If $s \equiv
  B(s_1,s_2)$, $s' \equiv B(s_1',s_2')$ then the proof is
  analogous. If $s \equiv A(s_1)$, $s' \equiv B(s_1',s_1')$ and $s_1
  \to s_1'$, then the proof is also similar. Indeed, by the~CH we have
  $s_1[t/x] \to s_1'[t'/x]$, so $s[t/x] \equiv A(s_1[t/x]) \to
  B(s_1'[t'/x],s_1'[t'/x]) \equiv s'[t'/x]$ by rule~$(4)$.\hfill$\Box$
\end{example}

Let us reiterate the convention introduced in the above example.

\begin{paragraph}
  {\bf Important convention.} Unless explicitly stated otherwise, we
  track the stages only for positive occurences of coinductive
  relations, i.e., we do not treat negative occurences as relational
  variables in the formula we feed to
  Theorem~\ref{thm_coinduction}. For instance, let $f : T \to T$,
  let~$R \subseteq T$ be a coinductive relation, and suppose we want
  to show that for all $x \in T$ such that~$R(x)$ we
  have~$R(f(x))$. Then by default we take $\varphi(X) \equiv
  \forall_{x \in T} . R(x) \to X(f(x))$ to be the formula used with
  Theorem~\ref{thm_coinduction}. To override this convention one may
  mention the stages explicitly, e.g.: for all $x \in T$ such
  that~$R(x)$ at stage~$\alpha$ we have~$R(f(x))$ at
  stage~$\alpha$. Then the formula we take is $\varphi(X) \equiv
  \forall_{x \in T} . X(x) \to X(f(x))$. In conclusion, by default we
  track the stages of all positive occurences of coinductive
  relations, and only those negative occurences for which the stage is
  explicitly mentioned.
\end{paragraph}

\begin{definition}\label{def_bisimilarity}
  Let~$\Sigma$ be a many-sorted algebraic signature, as in
  Definition~\ref{def_coterms}. Let $\Tc = \Tc(\Sigma)$. Define
  on~$\Tc$ a binary relation~${=}$ of \emph{bisimilarity} by the
  coinductive rules
  \[
  \infer={f(t_1,\ldots,t_n) = f(s_1,\ldots,s_n)}{t_1 = s_1 & \ldots &
    t_n = s_n}
  \]
  for each $f \in \Sigma_f$.
\end{definition}

It is intuitively obvious that on coterms bisimilary is the same as
identity. The following easy theorem makes this precise.

\begin{theorem}\label{thm_bisimilarity}
  For $t,s \in \Tc$ we have: $t = s$ iff $t \equiv s$.
\end{theorem}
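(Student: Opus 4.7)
The plan is to prove the two directions separately, essentially repeating the argument sketched for Proposition~\ref{prop_bisimilarity_a} in the ``crash course'' Section~\ref{sec_crash}, now within the framework developed in this section.

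For the direction $t \equiv s \Rightarrow t = s$, I would proceed by coinduction via Theorem~\ref{thm_coinduction}, showing the stronger statement that $t = t$ holds for every $t \in \Tc$. Given $t \in \Tc$, by the definition of coterms we have $t \equiv f(t_1,\ldots,t_n)$ for some $f \in \Sigma_f$ and coterms $t_1,\ldots,t_n$ of appropriate sorts. The coinductive hypothesis delivers $t_i = t_i$ at stage $\alpha$ for each $i$, and then the bisimilarity rule for $f$ produces $t = t$ at stage $\alpha+1$. This fits the convention introduced above for tracking stages only at positive occurrences of the coinductive relation, so the formula fed to Theorem~\ref{thm_coinduction} is simply $\varphi(X) \equiv \forall_{t \in \Tc}\, X(t,t)$, which is standard.

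For the other direction $t = s \Rightarrow t \equiv s$, I would argue by ordinary transfinite (in fact finite) induction on the length of a position $p \in \Nbb^*$, showing that $t(p) \approx s(p)$, where $\approx$ means both values are defined and equal, or both are undefined. This then yields $t \equiv s$ since a coterm is formally a partial function $\Nbb^* \to \Sigma_f \cup \{\bot\}$ (using the formalization of Definition~\ref{def_coterms}; or, with the earlier formalization of Definition~\ref{def_coterms_a}, both defined and equal, or both undefined). The inductive step uses the characterization of $=$ as the greatest fixpoint of the monotone operator associated with the bisimilarity rules: if $\pos{t}{p} = \pos{s}{p}$, then unfolding once, both subterms have the same head symbol $f$ and matching immediate subterms, which lets the induction pass to $p\cdot i$ for each child index $i$.

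I do not expect any real obstacle, since this is essentially a restatement of Proposition~\ref{prop_bisimilarity_a}; the only delicate point is keeping the two uses of ``$=$'' cleanly apart (bisimilarity versus equality of values of the partial functions representing coterms), and making sure to apply the coinductive definition exactly once per step of the induction in the forward direction. Both halves are short and routine once the setup of the section is in place.
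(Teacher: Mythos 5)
Your proposal is correct and follows essentially the same route as the paper: coinduction on $\varphi(X)\equiv\forall_{t\in\Tc}\,X(t,t)$ for the direction $t\equiv s\Rightarrow t=s$, and induction on the length of the position $p$ for the converse, unfolding the bisimilarity rules once per step. The only small wobble is that you state the inductive invariant as $t(p)\approx s(p)$ (equality of labels) but then, as the paper does, actually need and use the stronger invariant that $\pos{t}{p}=\pos{s}{p}$ or both are $\bot$ — the label-equality version alone is too weak to push the induction to $p\cdot i$, so state the subterm-bisimilarity version as the invariant from the start.
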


\begin{proof}
  Recall that each coterm is formally a function from~$\Nbb^*$
  to~$\Sigma_f \cup \{\bot\}$.

  Assume $t = s$. It suffices to show by induction of the length of $p
  \in \Nbb^*$ that $\pos{t}{p} = \pos{s}{p}$ or $\pos{t}{p} \equiv
  \pos{s}{p} \equiv \bot$, where by~$\pos{t}{p}$ we denote the subterm
  of~$t$ at position~$p$. For $p = \epsilon$ this is obvious. Assume
  $p = p'j$. By the inductive hypothesis (IH), $\pos{t}{p'} =
  \pos{s}{p'}$ or $\pos{t}{p'} \equiv \pos{s}{p'} \equiv \bot$. If
  $\pos{t}{p'} = \pos{s}{p'}$ then $\pos{t}{p'} \equiv
  f(t_0,\ldots,t_n)$ and $\pos{s}{p'} \equiv f(s_0,\ldots,s_n)$ for
  some $f \in \Sigma_f$ with $t_i = s_i$ for $i=0,\ldots,n$. If $0 \le
  j \le n$ then $\pos{t}{p} \equiv t_j = s_j = \pos{s}{p}$. Otherwise,
  if $j > n$ or if $\pos{t}{p'} \equiv \pos{s}{p'} \equiv \bot$, then
  $\pos{t}{p} \equiv \pos{s}{p} \equiv \bot$ by the definition of
  coterms.

  For the other direction, we show by coinduction that for any $t \in
  \Tc$ we have $t = t$. If $t \in \Tc$ then $t \equiv
  f(t_1,\ldots,t_n)$ for some $f \in \Sigma_f$. By the~CH we obtain
  $t_i = t_i$ for $i=1,\ldots,n$. Hence $t = t$ by the rule for~$f$.
\end{proof}

For coterms $t,s\in \Tc$, we shall theorefore use the notations $t =
s$ and $t \equiv s$ interchangeably, employing
Theorem~\ref{thm_bisimilarity} implicitly.

\begin{example}
  Recall the coinductive definitions of~$\zip$ and~$\even$ from
  Section~\ref{sec_corecursion_examples}.
  \[
  \begin{array}{rcl}
    \even(x : y : t) &=& x : \even(t) \\
    \zip(x : t, s) &=& x : \zip(s, t)
  \end{array}
  \]
  By coinduction we show
  \[
  \zip(\even(t),\even(\tail(t))) = t
  \]
  for any stream $t \in A^\omega$.

  Let $t \in A^\omega$. Then $t = x : y : s$ for some $x, y \in A$ and
  $s \in A^\omega$. We have
  \[
  \begin{array}{rcl}
    \zip(\even(t),\even(\tail(t))) &=& \zip(\even(x : y : s), \even(y
    : s)) \\
    &=& \zip(x : \even(s), \even(y : s)) \\
    &=& x : \zip(\even(y : s), \even(s)) \\
    &=& x : y : s \quad\text{ (by~CH) }\\
    &=& t
  \end{array}
  \]
  In the equality marked with~(by~CH) we use the coinductive
  hypothesis, and implicitly a bisimilarity rule from
  Definition~\ref{def_bisimilarity}.\hfill$\Box$
\end{example}

Theorem~\ref{thm_coinduction} gives a coinduction principle only for
standard formulas. By the discussion just above
Example~\ref{ex_subst}, this essentially means that we cannot do
coinductive proofs for formulas with some positive (resp.~negative)
occurences of existential (resp.~universal) quantifiers which have
some relational variables in their scope. However, even this is not so
much of a restriction as it may seem, because any formula without free
individual variables may be converted into Skolem normal form.

\begin{definition}
  Let $\varphi \equiv \forall_{x_1} \exists_{y_1} \ldots \forall_{x_n}
  \exists_{y_n} \psi(x_1,\ldots,x_n,y_1,\ldots,y_n,X_1,\ldots,X_k)$ be
  a formula over a signature~$\Sigma$, with~$\psi$
  quantifier-free. The \emph{Skolem normal form} of~$\varphi$ is
  \[
  \varphi^S \equiv \forall_{x_1} \ldots \forall_{x_n}
  \psi(x_1,\ldots,x_n,f_1(x_1),f_2(x_1,x_2),\ldots,f_n(x_1,\ldots,x_n),X_1,\ldots,X_k)
  \]
  where $f_1,\ldots,f_n$ are distinct new \emph{Skolem function
    symbols}, i.e., $f_1,\ldots,f_n \notin \Sigma$. The signature
  $\Sigma^S = \Sigma \cup \{f_1,\ldots,f_n\}$ is called a \emph{Skolem
    signature} for~$\varphi$. Thus~$\varphi^S$ is a formula
  over~$\Sigma^S$. The definition of Skolem normal form extends in a
  natural way to arbitrary formulas without free individual variables,
  by converting them into equivalent prenex normal form first. A
  \emph{Skolem expansion}~$\Ab^S$ of a $\Sigma$-structure~$\Ab$
  wrt.~$\varphi$ is a $\Sigma^S$-expansion of~$\Ab$. The functions
  interpreting Skolem function symbols in a Skolem expansion are
  called \emph{Skolem functions}.
\end{definition}

Let~$\Ab$ be a $\Sigma$-structure, and~$\varphi(X_1,\ldots,X_n)$ a
formula over~$\Sigma$. Let $R_1,\ldots,R_n$ be coinductive relations
on~$\Ab$ with matching arities. It is obvious that if there exists a
Skolem expansion~$\Ab^S$ of~$\Ab$ with $\Ab^S \models
\varphi^S(R_1^\alpha,\ldots,R_n^\alpha)$ for all ordinals~$\alpha$,
then $\Ab \models \varphi(R_1^\alpha,\ldots,R_n^\alpha)$ for all
ordinals~$\alpha$.

The method for showing by coinduction a
formula~$\varphi(R_1,\ldots,R_n)$ with existential quantifiers
occuring positively is to first convert~$\varphi$ into Skolem normal
form~$\varphi^S$ and find appropriate Skolem functions, and then show
using Theorem~\ref{thm_coinduction} that~$\varphi^S(R_1,\ldots,R_n)$
is true in the Skolem expansion. Usually, it is convenient to define
the required Skolem functions by corecursion, using methods from
Section~\ref{sec_corecursion}.

\begin{example}
  Let~$T$ be the set of coterms and~$\to$ the binary relation from
  Example~\ref{ex_coinductive_def}. We show: for all $s,t,t' \in T$,
  if $s \to t$ and $s \to t'$ then there exists $s' \in T$ with $t \to
  s'$ and $t' \to s'$. So we need to find a Skolem function $f : T
  \times T \times T \to T$ which will allow us to prove:
  \begin{itemize}
  \item[$(\star)$] if $s \to t$ and $s \to t'$ then $t \to f(s,t,t')$
    and $t' \to f(s,t,t')$.
  \end{itemize}
  The rules for~$\to$ suggest a definition of~$f$:
  \[
  \begin{array}{rcl}
    f(x, x, x) &=& x \\
    f(A(s), A(t), A(t')) &=& A(f(s,t,t')) \\
    f(A(s),A(t),B(t',t')) &=& B(f(s,t,t'),f(s,t,t')) \\
    f(A(s),B(t,t),A(t')) &=& B(f(s,t,t'),f(s,t,t')) \\
    f(A(s),B(t,t),B(t',t')) &=& B(f(s,t,t'),f(s,t,t')) \\
    f(B(s_1,s_2), B(t_1,t_2), B(t_1',t_2')) &=&
    B(f(s_1,t_1,t_1'),f(s_2,t_2,t_2')) \\
    f(s, t, t') &=& \text{some arbitrary coterm if none of the above matches}
  \end{array}
  \]
  The definition is guarded, so~$f$ is well-defined.

  We now proceed with a coinductive proof of~$(\star)$. Assume $s \to
  t$ and $s \to t'$. If $s \equiv t \equiv t' \equiv x$ then
  $f(s,t,t') = x$, and $x \to x$ by rule~$(1)$. If $s \equiv A(s_1)$,
  $t \equiv A(t_1)$ and $t' \equiv A(t_1')$ with $s_1 \to t_1$ and
  $s_1 \to t_1'$, then by the~CH $t_1 \to f(s_1,t_1,t_1')$ and $t_1'
  \to f(s_1,t_1,t_1')$. We have $f(s,t,t') \equiv
  A(f(s_1,t_1,t_1'))$. Hence $t \equiv A(t_1) \to f(s,t,t')$ and $t
  \equiv A(t_1') \to f(s,t,t')$, by rule~$(2)$. If $s \equiv
  B(s_1,s_2)$, $t \equiv B(t_1,t_2)$ and $t' \equiv B(t_1',t_2')$,
  with $s_1 \to t_1$, $s_1 \to t_1'$, $s_2 \to t_2$ and $s_2 \to
  t_2'$, then by the~CH we have $t_1 \to f(s_1,t_1,t_1')$, $t_1' \to
  f(s_1,t_1,t_1')$, $t_2 \to f(s_2,t_2,t_2')$ and $t_2' \to
  f(s_2,t_2,t_2')$. Hence $t \equiv B(t_1,t_2) \to
  B(f(s_1,t_1,t_1'),f(s_2,t_2,t_2')) \equiv f(s,t,t')$ by
  rule~$(3)$. Analogously, $t' \to f(s,t,t')$ by rule~$(3)$. Other
  cases are similar.

  Usually, it is inconvenient to invent Skolem functions beforehand,
  because definitions of the Skolem functions and the coinductive
  proof of the Skolem normal form are typically
  interdependent. Therefore, we adopt a style of doing a proof by
  coinduction of a formula~$\varphi(R_1,\ldots,R_m)$ in prenex normal
  form with existential quantifiers. We intertwine mutually
  corecursive definitions\footnote{Section~\ref{sec_corecursion}
    directly deals only with corecursive definitions of single
    functions, but mutual corecursion may be easily handled by
    considering an appropriate function on tuples of elements. See
    also Example~\ref{ex_mutual_coinduction} and
    Definition~\ref{def_mutual_corecursion}.} of Skolem functions with
  a coinductive proof of the Skolem normal
  form~$\varphi^S(R_1,\ldots,R_m)$. We pretend that the coinductive
  hypothesis is~$\varphi(R_1^\alpha,\ldots,R_m^\alpha)$. Each element
  obtained from an existential quantifier in the coinductive
  hypothesis is interpreted as a corecursive invocation of the
  corresponding Skolem function. When later we exhibit an element to
  show an existential subformula
  of~$\varphi(R_1^{\alpha+1},\ldots,R_m^{\alpha+1})$, we interpret
  this as the definition of the corresponding Skolem function in the
  case specified by the assumptions currently active in the
  proof. Note that this exhibited element may (or may not) depend on
  some elements obtained from existential quantifiers in the
  coinductive hypothesis, i.e., the definition of the corresponding
  Skolem function may involve corecursive invocations of Skolem
  functions.

  To illustrate the style of doing coinductive proofs of formulas with
  existential quantifiers, we redo the proof done above. For
  illustrative purposes, we indicate the arguments of the Skolem
  function, i.e., we write~$s'_{s,t,t'}$ in place
  of~$f(s,t,t')$. These subscripts $s,t,t'$ are normally omitted.

  We show by coinduction that if $s \to t$ and $s \to t'$ then there
  exists $s' \in T$ with $t \to s'$ and $t' \to s'$. Assume $s \to t$
  and $s \to t'$. If $s \equiv t \equiv t' \equiv x$ then take
  $s'_{x,x,x} = x$. If $s \equiv A(s_1)$, $t \equiv A(t_1)$ and $t'
  \equiv A(t_1')$ with $s_1 \to t_1$ and $s_1 \to t_1'$, then by
  the~CH we obtain\footnote{More precisely: by corecursively applying
    the Skolem function to $s_1,t_1,t_1'$ we
    obtain~$s'_{s_1,t_1,t_1'}$, and by the coinductive hypothesis we
    have $t_1 \to s'_{s_1,t_1,t_1'}$ and $t_1' \to
    s'_{s_1,t_1,t_1'}$.}~$s'_{s_1,t_1,t_1'}$ with $t_1 \to
  s'_{s_1,t_1,t_1'}$ and $t_1' \to s'_{s_1,t_1,t_1'}$. Hence $t \equiv
  A(t_1) \to A(s'_{s_1,t_1,t_1'})$ and $t \equiv A(t_1') \to
  A(s'_{s_1,t_1,t_1'})$, by rule~$(2)$. Thus we may take $s'_{s,t,t'}
  = A(s'_{s_1,t_1,t_1'})$. If $s \equiv B(s_1,s_2)$, $t \equiv
  B(t_1,t_2)$ and $t' \equiv B(t_1',t_2')$, with $s_1 \to t_1$, $s_1
  \to t_1'$, $s_2 \to t_2$ and $s_2 \to t_2'$, then by the~CH we
  obtain~$s'_{s_1,t_1,t_1'}$ and~$s'_{s_2,t_2,t_2'}$ with $t_1 \to
  s'_{s_1,t_1,t_1'}$, $t_1' \to s'_{s_1,t_1,t_1'}$, $t_2 \to
  s'_{s_2,t_2,t_2'}$ and $t_2' \to s'_{s_2,t_2,t_2'}$. Hence $t \equiv
  B(t_1,t_2) \to B(s'_{s_1,t_1,t_1'},s'_{s_2,t_2,t_2'})$ by
  rule~$(3)$. Analogously, $t' \to
  B(s'_{s_1,t_1,t_1'},s'_{s_2,t_2,t_2'})$ by rule~$(3)$. Thus we may
  take $s'_{s,t,t'} \equiv
  B(s'_{s_1,t_1,t_1'},s'_{s_2,t_2,t_2'})$. Other cases are similar.

  It is clear that the above proof, when interpreted in the way
  outlined before, implicitly defines the Skolem function~$f$. Also,
  in each case a local prefix production function is implicitly
  defined. From Corollary~\ref{cor_unique_solution} it follows that to
  justify the well-definedness of the implicit Skolem function it
  suffices to bound a local prefix production function for each case
  separately. If the definition is guarded in a given case, the
  well-definedness argument for this case is left implicit. Otherwise,
  a justification is needed.

  Note that for a coinductive proof to implicitly define a Skolem
  function, the elements exhibited for existential statements
  \emph{must not depend on the (implicit) stage~$\alpha$}. In other
  words, the Skolem functions must be the same for all~$\alpha$. This
  is the reason why Theorem~\ref{thm_coinduction} does not generalize
  to arbitrary formulas in the first place. However, it is usually the
  case that there is no dependency on~$\alpha$, and thus the
  justification of this is typically left implicit. But the necessity
  of this requirement should be kept in mind.\hfill$\Box$
\end{example}

\begin{example}
  We now give an example of an \emph{incorrect} coinductive
  argument. Let~$\to$ and~$T$ be like in the previous
  example. Define~$\to_i$ \emph{inductively} by the rules $(1)-(4)$
  from Example~\ref{ex_coinductive_def}. We show: if $s \to t$ and $s
  \to t'$ then there exists~$s'$ such that $t \to s'$ and $t' \to_i
  s'$. By inspecting the proof in the previous example one sees that
  it also works for the new statement. Simply, we need to change~$\to$
  to~$\to_i$ in certain places. The proof is still correct -- we just
  no longer need to track stages for the occurences of~$\to$ replaced
  by~$\to_i$.

  What is wrong with this argument? The modified coinductive step is
  indeed correct, but the formula we show is no longer standard, so
  Theorem~\ref{thm_coinduction} cannot be applied. Formally, we now
  show~$\varphi(\to^\alpha)$ for each ordinal~$\alpha$, where
  $\varphi(X) \equiv \forall s, t, t' \in T . \exists s' \in T . (s
  \to t \land s \to t') \to (X(t, s') \land t' \to_i s')$
  and~$\to^\alpha$ is the approximant of~$\to$ at stage~$\alpha$. In
  fact, $\varphi(\to^0)$ is false -- e.g.~if~$t'$ is infinite then
  there is no~$s'$ such that $t' \to_i s'$. \hfill$\Box$
\end{example}

We finish this section with a complex example of a proof of the
diamond property of a certain relation on infinitary
$\epsilon$-$\lambda$-terms.

\begin{definition}
  The binary relation~$\to_1$ on infinitary
  $\epsilon$-$\lambda$-terms~$\Lambda$ from
  Example~\ref{ex_epsilon_lambda_terms} is defined by the following
  coinductive rules.
  \[
  \infer=[(1)]{x \to_1 x}{}\quad
  \infer=[(2)]{st \to_1 s't'}{s \to_1 s' & t \to_1 t'}\quad
  \infer=[(3)]{\lambda x . s \to_1 \lambda x . s'}{s \to_1 s'}\quad
  \infer=[(4)]{(\lambda x . s)t \to_1 \epsilon(s'[t'/x])}{
    s \to_1 s' & t \to_1 t'}\quad
  \]
  \[
  \infer=[(5)]{\epsilon(t) \to_1 \epsilon(t')}{t \to_1 t'}
  \]
\end{definition}

\begin{lemma}\label{lem_to_1_refl}
  For $t \in \Lambda$ we have $t \to_1 t$.
\end{lemma}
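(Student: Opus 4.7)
The plan is to prove this by coinduction on the structure of $t$, closely following the pattern of Example~\ref{ex_to_refl}. The statement $\forall t \in \Lambda . t \to_1 t$ is a standard formula with only a positive occurrence of the coinductive relation $\to_1$, so Theorem~\ref{thm_coinduction} applies directly, and by our convention we track stages on this positive occurrence.

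For the coinductive step, I would case-split on the shape of $t$, using the cogrammar for $\Lambda$. If $t \equiv x \in V$, then $t \to_1 t$ by rule~$(1)$. If $t \equiv s_1 s_2$, then by the coinductive hypothesis we have $s_1 \to_1 s_1$ and $s_2 \to_1 s_2$, so $t \equiv s_1 s_2 \to_1 s_1 s_2 \equiv t$ by rule~$(2)$. If $t \equiv \lambda x . s$, then the coinductive hypothesis gives $s \to_1 s$, so rule~$(3)$ yields $t \to_1 t$. Finally, if $t \equiv \epsilon(s)$, then by the coinductive hypothesis $s \to_1 s$, and rule~$(5)$ gives $\epsilon(s) \to_1 \epsilon(s)$.

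There is essentially no obstacle here: rule~$(4)$ is simply never used, since reflexivity is witnessed by the ``congruence'' rules rather than the $\beta$-step. The guardedness requirement is automatically satisfied because in every case the conclusion $t \to_1 t$ at stage $\alpha+1$ is obtained by applying one of the coinductive rules whose premises use only the coinductive hypothesis at stage $\alpha$, so the stage strictly increases as required by Theorem~\ref{thm_coinduction}.
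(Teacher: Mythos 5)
Your proof is correct and matches the paper's own argument, which also proceeds by coinduction with a case split on the shape of $t$, handling the variable and application cases explicitly via rules $(1)$ and $(2)$ and dismissing the abstraction and $\epsilon$ cases as analogous. Your version just spells out all four cases and the stage bookkeeping explicitly; nothing differs in substance.
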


\begin{proof}
  Coinduction. If $t \equiv x$ then $t \to_1 t$ by rule~$(1)$. If $t
  \equiv t_1t_2$ then $t_1 \to_1 t_1$ and $t_2 \to_1 t_2$ by
  the~CH. Thus $t \to_1 t$ by rule~$(2)$. Other cases are analogous.
\end{proof}

\newcommand{\FV}{\mathrm{FV}}

\begin{lemma}\label{lem_subst_equiv}
  If $y \notin \FV(t)$ then $s_1[s_2/y][t/x] \equiv
  s_1[t/x][s_2[t/x]/y]$.
\end{lemma}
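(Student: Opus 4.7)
The plan is to prove the equality by coinduction, where we read $\equiv$ as the bisimilarity relation from Definition~\ref{def_bisimilarity} (identified with syntactic identity via Theorem~\ref{thm_bisimilarity}). We perform case analysis on the outermost constructor of $s_1$. Throughout we rely on the convention stated in Example~\ref{ex_epsilon_lambda_terms} that bound variables in any term are distinct from free variables occurring in the context, so in the lambda case the bound variable can be assumed distinct from $x$, $y$ and all free variables of $s_2$ and $t$.

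Before beginning, I would establish a small auxiliary lemma by coinduction: if $z \notin \FV(u)$ then $u[v/z] \equiv u$ for any $v$. This is a routine guarded coinduction following the defining equations of $\subst_z$; the interesting step is the variable case, which uses the hypothesis $z \notin \FV(u)$ directly, and the lambda case, which uses the bound-variable convention.

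The main proof then splits as follows.
\begin{itemize}
\item If $s_1 \equiv z$ with $z \ne x$ and $z \ne y$, both sides reduce to $z$.
\item If $s_1 \equiv y$, then (since $y \ne x$ by convention) the LHS is $s_2[t/x]$, while the RHS is $y[t/x][s_2[t/x]/y] \equiv y[s_2[t/x]/y] \equiv s_2[t/x]$.
\item If $s_1 \equiv x$, the LHS is $t$. The RHS is $t[s_2[t/x]/y]$, which equals $t$ by the auxiliary lemma since $y \notin \FV(t)$.
\item If $s_1 \equiv s_1' s_1''$, then applying the defining equations of $\subst_x$ and $\subst_y$ on both sides gives two applications whose corresponding arguments are equal by the coinductive hypothesis; the bisimilarity rule for application concludes.
\item If $s_1 \equiv \lambda z . s_1'$, by the convention $z$ is distinct from $x, y$ and from the free variables of $s_2, t$, so substitution commutes with going under the binder on both sides. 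The coinductive hypothesis applied to $s_1'$ and the bisimilarity rule for $\lambda$ conclude.
\item If $s_1 \equiv \epsilon(s_1')$, analogous to the lambda case using the rule for $\epsilon$.
\end{itemize}

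The coinductive step is guarded in every case where the CH is invoked (the call to CH is immediately placed under a constructor of the bisimilarity rules), so no production-function analysis is required beyond noting this. The main obstacle is really the variable case $s_1 \equiv x$, which genuinely needs the auxiliary lemma; once that lemma is in place, the remaining structural cases are entirely mechanical.
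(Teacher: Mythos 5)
Your proof is correct and takes essentially the same approach as the paper's: coinduction on the structure of $s_1$, using the bisimilarity characterization of $\equiv$ implicitly, with the only non-mechanical case being $s_1 \equiv x$, which rests on the fact that $y \notin \FV(t)$ implies $t[s_2[t/x]/y] \equiv t$. The sole difference is that you isolate that fact as an explicit auxiliary lemma, whereas the paper invokes it implicitly with the remark ``because $y \notin \FV(t)$''.
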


\begin{proof}
  By coinduction, implicitly using Theorem~\ref{thm_bisimilarity}. If
  $s_1 \equiv y$ with $x \ne y$, then $s_1[s_2/y][t/x] \equiv s_2[t/x]
  \equiv s_1[t/x][s_2[t/x]/y]$, because $s_1[t/x] \equiv y[t/x] \equiv
  y$. If $s_1 \equiv x$ then $s_1[s_2/y][t/x] \equiv x[t/x] \equiv t
  \equiv s_1[t/x] \equiv s_1[t/x][s_2[t/x]/y]$, because $y \notin
  \FV(t)$. If $s_1 \equiv u_1u_2$ then $u_i[s_2/y][t/x] \equiv
  u_i[t/x][s_2[t/x]/y]$ by the~CH. Hence
  \[
  \begin{array}{rcl}
    s_1[s_2/y][t/x] &\equiv& (u_1[s_2/y][t/x])(u_2[s_2/y][t/x]) \\ &\equiv&
    (u_1[t/x][s_2[t/x]/y])(u_2[t/x][s_2[t/x]/y]) \\ &\equiv&
    s_1[t/x][s_2[t/x]/y].
  \end{array}
  \]
  If $s_1 \equiv \lambda z . s_1'$ with\footnote{Recall that we assume
    bound variables to be distinct from the free ones.} $z \ne x,y$
  then $s_1'[s_2/y][t/x] \equiv s_1'[t/x][s_2[t/x]/y]$ by the~CH. Thus
  \[
  s_1[s_2/y][t/x] \equiv \lambda z . s_1'[t/x][s_2[t/x]/y] \equiv
  \lambda z . s_1'[t/x][s_2[t/x]/y] \equiv s_1[t/x][s_2[t/x]/y].
  \]
  If $s_1 \equiv \epsilon(s_1')$ then the proof is analogous.
\end{proof}

\begin{lemma}\label{lem_to_1_subst}
  If $s \to_1 s'$ at~$\alpha$ and $t \to_1 t'$ at~$\alpha$ then
  $s[t/x] \to_1 s'[t'/x]$ at~$\alpha$.
\end{lemma}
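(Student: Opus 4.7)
The plan is to argue by coinduction (that is, transfinite induction on the stage~$\alpha$) with case analysis on the last rule used to derive $s\to_1 s'$. Since the statement explicitly tracks stages on both the hypotheses and the conclusion, the reduction to transfinite induction as outlined in Section~\ref{sec_coinduction} is directly applicable: the base case~$\alpha=0$ is trivial because $\to_1^0 = \Lambda\times\Lambda$, and the limit case~$\alpha = \lambda$ follows from $\to_1^\lambda = \bigcap_{\beta<\lambda}\to_1^\beta$ by intersecting the conclusions obtained from the inductive hypothesis at each $\beta<\lambda$. So the only real work is the successor step: assuming the claim at stage $\alpha$, show it at stage $\alpha+1$.

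For the successor step, I would inspect why $s\to_1 s'$ holds at~$\alpha+1$. The rules~$(1)$, $(2)$, $(3)$, $(5)$ are routine: in rule~$(1)$, $s\equiv s'\equiv y$, and I split on $y\equiv x$ (using the assumption $t\to_1 t'$ at~$\alpha+1$ directly, and Lemma~\ref{lem_sequence} to drop the stage if needed) versus $y\not\equiv x$ (where both sides reduce to $y$ and we reapply rule~$(1)$); in rules~$(2)$, $(3)$, $(5)$ I apply the coinductive hypothesis to each subterm derivation (available at stage~$\alpha$) together with $t\to_1 t'$ at~$\alpha$ (obtained from the assumption at~$\alpha+1$ via Lemma~\ref{lem_sequence}), and then reapply the same rule to bump back up to~$\alpha+1$. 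The definition of substitution from Example~\ref{ex_epsilon_lambda_terms} commutes with the constructors in each of these cases, which is what makes the pattern work.

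The only genuinely interesting case is rule~$(4)$: $s\equiv (\lambda y.s_1)s_2$, $s'\equiv \epsilon(s_1'[s_2'/y])$, with $s_1\to_1 s_1'$ and $s_2\to_1 s_2'$ at~$\alpha$. Applying the coinductive hypothesis yields $s_1[t/x]\to_1 s_1'[t'/x]$ and $s_2[t/x]\to_1 s_2'[t'/x]$ at~$\alpha$, and then rule~$(4)$ gives
\[
s[t/x] \equiv (\lambda y.s_1[t/x])(s_2[t/x]) \to_1 \epsilon\bigl(s_1'[t'/x][s_2'[t'/x]/y]\bigr)
\]
at~$\alpha+1$. To match this with $s'[t'/x] \equiv \epsilon(s_1'[s_2'/y][t'/x])$, I invoke Lemma~\ref{lem_subst_equiv} with the standing convention that bound variables are disjoint from free ones (so $y\notin\FV(t')$), which gives exactly $s_1'[t'/x][s_2'[t'/x]/y] \equiv s_1'[s_2'/y][t'/x]$.

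The main obstacle is this~$\beta$-case: it is the only place where substitution does not simply distribute over a constructor, and the rearrangement of nested substitutions has to be handled by an independent lemma (Lemma~\ref{lem_subst_equiv}, whose own proof is already coinductive). Everything else is a mechanical traversal of the rules, and the stage accounting works out automatically because each case hands an $\alpha$-stage derivation to the coinductive hypothesis and finishes with a single application of a rule producing an $\alpha+1$-stage conclusion.
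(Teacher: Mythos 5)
Your proposal is correct and follows essentially the same route as the paper's proof: a stage-tracked coinduction with case analysis on the rule deriving $s \to_1 s'$, where the routine cases distribute substitution over constructors and the rule~$(4)$ case is resolved by Lemma~\ref{lem_subst_equiv} exactly as in the paper. The only cosmetic difference is that for $s \equiv s' \equiv y$ with $y \ne x$ you reapply rule~$(1)$ directly while the paper cites Lemma~\ref{lem_to_1_refl}; both are fine since $y$ is a variable.
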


\begin{proof}
  We proceed by coinduction. The coinductive hypothesis is: for all
  $s,s',t,t' \in \Lambda$, $x \in V$, if $s \to_1 s'$ at~$\alpha$ and
  $t \to_1 t'$ at~$\alpha$ then $s[t/x] \to_1 s'[t'/x]$
  at~$\alpha$. The statement that we need to show in the inductive
  step is: for all $s,s',t,t' \in \Lambda$, $x \in V$, if $s \to_1 s'$
  at~$\alpha+1$ and $t \to_1 t'$ at~$\alpha+1$ then $s[t/x] \to_1
  s'[t'/x]$ at~$\alpha+1$.

  So assume $s \to_1 s'$ at~$\alpha+1$ and $t \to_1 t'$
  at~$\alpha+1$. If $s \equiv s' \equiv x$ then $s[t/x] \equiv t \to_1
  t' \equiv s'[t'/x]$ at~$\alpha+1$. If $s \equiv s' \equiv y$ with $x
  \ne y$ then $s[t/x] \equiv y \equiv s'[t'/x]$, so $s[t/x] \to_1
  s'[t'/x]$ at~$\alpha+1$ by Lemma~\ref{lem_to_1_refl}. If $s \equiv
  s_1s_2$ and $s' \equiv s_1's_2'$ with $s_1 \to_1 s_1'$ at~$\alpha$
  and $s_2 \to_1 s_2'$ at~$\alpha$, then\footnote{Recall that $t \to_1
    t'$ at~$\alpha+1$ implies $t \to_1 t'$ at~$\alpha$, by
    Lemma~\ref{lem_sequence}.} $s_1[t/x] \to_1 s_1'[t'/x]$ at~$\alpha$
  and $s_2[t/x] \to_1 s_2'[t'/x]$ at~$\alpha$ by the~CH. Thus $s[t/x]
  \equiv (s_1[t/x])(s_2[t/x]) \to_1 (s_1'[t'/x])(s_2'[t'/x]) \equiv
  s'[t'/x]$ at~$\alpha+1$ by rule~$(2)$. If $s \equiv \lambda y
  . s_1$, $s' \equiv \lambda y . s_1'$ and $s_1 \to_1 s_1'$
  at~$\alpha$, then $s_1[t/x] \to_1 s_1'[t'/x]$ at~$\alpha$ by
  the~CH. Thus $s[t/x] \equiv \lambda y . s_1[t/x] \to_1 \lambda y
  . s_1'[t'/x] \equiv s'[t'/x]$ at~$\alpha+1$ by rule~$(3)$. If $s
  \equiv (\lambda y . s_1) s_2$ and $s' \equiv \epsilon(s_1'[s_2'/y])$
  with $s_1 \to_1 s_1'$ at~$\alpha$ and $s_2 \to_1 s_2'$ at~$\alpha$,
  then $s_1[t/x] \to_1 s_1'[t'/x]$ at~$\alpha$ and $s_2[t/x] \to_1
  s_2'[t'/x]$ at~$\alpha$ by the~CH. By Lemma~\ref{lem_subst_equiv} we
  have $s'[t'/x] \equiv \epsilon(s_1'[s_2'/y][t'/x]) \equiv
  \epsilon(s_1'[t'/x][s_2'[t'/x]/y])$. Thus $s[t/x] \equiv (\lambda y
  . s_1[t/x]) s_2[t/x] \to_1 \epsilon(s_1'[t'/x][s_2'[t'/x]/y]) \equiv
  s'[t'/x]$ at~$\alpha+1$ by rule~$(4)$. Finally, if $s \equiv
  \epsilon(s_1)$, $s' \equiv \epsilon(s_1')$ and $s_1 \to_1 s_1'$
  at~$\alpha$, then $s_1[t/x] \to_1 s_1'[t'/x]$ at~$\alpha$ by
  the~CH. Thus $s[t/x] \equiv \epsilon(s_1[t/x]) \to_1
  \epsilon(s_1'[t'/x]) \equiv s'[t'/x]$ at~$\alpha+1$ by rule~$(5)$.
\end{proof}

\begin{proposition}
  If $s \to_1 t$ and $s \to_1 t'$ then there exists~$s'$ with $t \to_1
  s'$ and $t' \to_1 s'$.
\end{proposition}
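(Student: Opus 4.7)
The plan is to follow the coinductive-existential style developed earlier in the paper: simultaneously define a Skolem function $f(s,t,t') = s'$ by corecursion and verify the Skolem normal form
\[
s \to_1 t \land s \to_1 t' \implies t \to_1 f(s,t,t') \land t' \to_1 f(s,t,t')
\]
by coinduction, proceeding by case analysis on the shape of $s$ and the rules by which $s \to_1 t$ and $s \to_1 t'$ were derived.

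The straightforward cases are $s \equiv y$ (then $t \equiv t' \equiv y$, take $f = y$), $s \equiv \lambda y.s_1$ (both reductions must use rule~$(3)$; componentwise apply the coinductive hypothesis and wrap the result in $\lambda y$), $s \equiv \epsilon(s_1)$ (both use rule~$(5)$), and $s \equiv s_1 s_2$ where $s_1$ is not an abstraction (both use rule~$(2)$). The interesting case is $s \equiv (\lambda y . s_1) s_2$, where for each of $s \to_1 t$ and $s \to_1 t'$ the derivation ends with either rule~$(2)$ (combined with rule~$(3)$ inside the lambda) or rule~$(4)$, giving four subcases. In the two "matching" subcases both reductions use the same outermost rule, and I would apply the coinductive hypothesis to obtain $u_1,u_2$ from $(s_1,t_1,t_1')$ and $(s_2,t_2,t_2')$, then set $f = (\lambda y. u_1) u_2$ or $f = \epsilon(u_1[u_2/y])$ respectively. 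For the latter one needs, from $t_1 \to_1 u_1$ and $t_2 \to_1 u_2$, that $t_1[t_2/y] \to_1 u_1[u_2/y]$, which is exactly Lemma~\ref{lem_to_1_subst}.

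The genuinely non-trivial subcase is when the two reductions disagree: say $t \equiv (\lambda y.t_1)t_2$ via rule~$(2)$ while $t' \equiv \epsilon(t_1'[t_2'/y])$ via rule~$(4)$, with $s_i \to_1 t_i$ and $s_i \to_1 t_i'$ for $i=1,2$. The coinductive hypothesis yields $u_1,u_2$ with $t_i \to_1 u_i$ and $t_i' \to_1 u_i$. I would take
\[
f(s,t,t') \;\equiv\; \epsilon(u_1[u_2/y]).
\]
Then $t \to_1 f(s,t,t')$ follows directly from rule~$(4)$ applied to $t_1 \to_1 u_1$ and $t_2 \to_1 u_2$, while $t' \to_1 f(s,t,t')$ reduces, via rule~$(5)$, to showing $t_1'[t_2'/y] \to_1 u_1[u_2/y]$, which is again Lemma~\ref{lem_to_1_subst} applied to $t_1' \to_1 u_1$ and $t_2' \to_1 u_2$. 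The remaining mixed subcase is symmetric.

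The main obstacle is of a bookkeeping nature: ensuring that the implicit Skolem function is well-defined by guarded corecursion. This holds because in every clause above the exhibited value $f(s,t,t')$ has an explicit constructor ($y$, $\lambda y$, application, or $\epsilon$) at its head, with the corecursive calls $u_i = f(s_i,t_i,t_i')$ appearing only underneath a constructor (or, in the mixed case, underneath both $\epsilon$ and $[\cdot/y]$, the latter of which is non-consuming by the production function for $\mathtt{subst}$ computed in Example~\ref{ex_epsilon_lambda_terms}). A subtle point is that in the mixed subcase the appeal to Lemma~\ref{lem_to_1_subst} is legitimate: the coinductive hypothesis delivers $t_i' \to_1 u_i$ only at stage $\alpha$, but Lemma~\ref{lem_to_1_subst} preserves the stage, so the resulting $t_1'[t_2'/y] \to_1 u_1[u_2/y]$ holds at stage $\alpha$, which is exactly what is needed to feed as a premise to rule~$(5)$ and conclude at stage $\alpha+1$.
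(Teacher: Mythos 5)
Your proposal is correct and follows essentially the same route as the paper's proof: the common reduct in the critical cases is $\epsilon(s_1'[s_2'/x])$, the key ingredient is Lemma~\ref{lem_to_1_subst} (with its stage-preserving formulation) to push the coinductively obtained reducts through substitution before applying rule~$(5)$, and the non-guarded Skolem function is justified by the production function $\min(n,m)+1>\min(n,m)$ coming from $\eta_\subst$. The only differences are cosmetic bookkeeping in how the application cases are split.
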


\begin{proof}
  By coinduction. If $s \equiv t \equiv t \equiv x$ then take $s'
  \equiv x$. If $s \equiv s_1s_2$, $t \equiv t_1t_2$ and $t' \equiv
  t_1't_2'$ with $s_i \to_1 t_i$ and $s_i \to_1 t_i'$, then by the~CH
  we obtain~$s_1'$ and~$s_2'$ with $t_i \to_1 s_i'$ and $t_i' \to_1
  s_i'$. Thus $t_1t_2 \to_1 s_1's_2'$ and $t_1't_2' \to_1 s_1's_2'$ by
  rule~$(2)$, and we may take $s' \equiv s_1's_2'$.

  If $s \equiv (\lambda x . s_1)s_2$, $t \equiv (\lambda x . t_1)t_2$
  and $t' \equiv \epsilon(t_1'[t_2'/x])$ with $s_i \to_1 t_i$ and $s_i
  \to_1 t_i'$, then by the~CH we obtain~$s_1'$ and~$s_2'$ with $t_i
  \to_1 s_i'$ at~$\alpha$ and $t_i' \to_1 s_i'$ at~$\alpha$. We have
  $t \equiv (\lambda x . t_1) t_2 \to_1 \epsilon(s_1'[s_2'/x])$
  at~$\alpha+1$ by rule~$(4)$. By Lemma~\ref{lem_to_1_subst} we have
  $t_1'[t_2'/x] \to_1 s_1'[s_2'/x]$ at~$\alpha$, so $t' \equiv
  \epsilon(t_1'[t_2'/x]) \to_1 \epsilon(s_1'[s_2'/x])$ at~$\alpha+1$
  by rule~$(5)$. Therefore take $s' \equiv \epsilon(s_1'[s_2'/x])$. It
  remains to justify the well-definedness of the implicit Skolem
  function in this case -- note that its definition is not guarded
  because we apply the substitution operation to results of
  corecursive invocations ($s_1'$,$s_2'$). However, a local prefix
  production function for this case is $\xi(n,m) = \eta_\subst(n,m) +
  1 = \min(n,m) + 1 > \min(n,m)$ and well-definednes follows.

  Assume $s \equiv (\lambda x . s_1)s_2$, $t \equiv
  \epsilon(t_1[t_2/x])$ and $t' \equiv \epsilon(t_1'[t_2'/x])$ with
  $s_i \to_1 t_i$ and $s_i \to_1 t_i'$. By the~CH we
  obtain~$s_1'$,$s_2'$ with $t_i \to_1 s_i'$ at~$\alpha$ and $t_i'
  \to_1 s_i'$ at~$\alpha$. By Lemma~\ref{lem_to_1_subst} we have
  $t_1[t_2/x] \to_1 s_1'[s_2'/x]$ at~$\alpha$ and $t_1'[t_2'/x] \to_1
  s_1'[s_2'/x]$ at~$\alpha$. Thus $t \equiv \epsilon(t_1[t_2/x]) \to_1
  \epsilon(s_1'[s_2'/x])$ at~$\alpha+1$ and $t' \equiv
  \epsilon(t_1'[t_2'/x]) \to_1 \epsilon(s_1'[s_2'/x])$ at~$\alpha+1$,
  by rule~$(5)$. Therefore take $s' \equiv \epsilon(s_1'[s_2'/x])$. A
  local prefix production function for this case is $\xi(n,m) =
  \eta_\subst(n,m) + 1 = \min(n,m) + 1 > \min(n,m)$, which implies
  well-definedness.

  Other cases are similar and left to the reader.
\end{proof}

Note that the two last cases considered in the proof above would not
go through if rule~$(4)$ was simply
\[
\infer={(\lambda x . s) t \to_1 s'[t'/x]}{s \to_1 s' & t \to_1 t'}
\]

\subsection{Nested induction and
  coinduction}\label{sec_nested_coinduction}

It is often useful to mix coinduction with induction, or to nest
coinductive definitions. For instance, the definition
from~\cite{EndrullisHansenHendriksPolonskySilva2018,EndrullisHansenHendriksPolonskySilva2015}
of infinitary reduction of arbitrary ordinal length in infinitary term
rewriting systems uses mixed induction-coinduction. Some other
examples may be found
in~\cite{AltenkirchDanielsson2009,NakataUustalu2010,BezemNakataUustalu2012}. In
this section we give a few example proofs and definitions which nest
induction and/or coinduction.

\begin{example}
  Define the set~$T$ coinductively:
  \[
  T \Coloneqq A T \parallel B T
  \]
  For $X \subseteq T$, we define the relation $R(X) \subseteq T$
  coinductively.
  \[
  \infer={A t \in R(X)}{t \in X}\quad\quad\infer={B t \in R(X)}{t \in R(X)}
  \]
  For $X \subseteq T$, the relation $S(X) \subseteq T$ is defined
  inductively.
  \[
  \infer{A t \in S(X)}{t \in S(X)}\quad\quad\infer{B t \in S(X)}{t \in
    X}
  \]
  Both~$R$ and~$S$ are monotone in~$X$, i.e., $X \subseteq Y$ implies
  $R(X) \subseteq R(Y)$ and $S(X) \subseteq S(Y)$. Hence, the
  following definitions of~$Q_1,Q_2 \subseteq T$ make sense.
  \[
  \begin{array}{c}
    \infer={A t \in Q_1}{t \in S(Q_1)}\quad\quad\infer={B t \in Q_1}{t \in
      Q_1}
    \\ \\
    \infer{A t \in Q_2}{t \in Q_2}\quad\quad\infer{B t \in Q_2}{t \in
      R(Q_2)}
  \end{array}
  \]
  Intuitively, $t \in Q_1$ means that~$t$ contains infinitely
  many~$B$s, and $t \in Q_2$ means that~$t$ contains only finitely
  many~$A$s.

  First, we show $Q_1 \subseteq S(Q_1)$. Let $t \in Q_1$. If $t \equiv
  A t'$ then $t' \in S(Q_1)$, so $A t' \in S(Q_1)$. If $t \equiv B t'$
  then $t' \in Q_1$, so $B t' \in S(Q_1)$.

  Now we show that if $t \in Q_2$ then $t \in Q_1$. The proof proceeds
  by induction on the length of derivation of $t \in Q_2$. Let $t \in
  Q_2$. If $t \equiv A t'$ then $t' \in Q_2$, so $t' \in Q_1$ by the
  inductive hypothesis. Since $Q_1 \subseteq S(Q_1)$ we have $t \equiv
  A t' \in Q_1$. If $t \equiv A t'$ then $t' \in R(Q_2')$ where~$Q_2'$
  is the set of~$s \in Q_2$ with shorter derivations than $t \in
  Q_2$. By nested coinduction we show that if $t' \in R(Q_2')$ then
  $t' \in Q_1$. This actually follows from the inductive hypothesis
  (which implies $Q_2' \subseteq Q_1$), the monotonicity of~$R$, and
  $R(Q_1) \subseteq Q_1$, but we give a direct proof. If $t' \equiv A
  t''$ then $t'' \in Q_2'$. So $t'' \in Q_1$ by the inductive
  hypothesis. Thus $t'' \in S(Q_1)$ and $t' \equiv A t'' \in Q_1$. If
  $t' \equiv B t''$ then $t'' \in R(Q_2')$. By the coinductive
  hypothesis $t'' \in Q_1$. Hence $t' \equiv B t'' \in Q_1$.\hfill$\Box$
\end{example}

\begin{example}
  Let $Q_1$ and~$T$ be as in the previous example. Consider the
  following corecursive definition of a function~$e : Q_1 \to T$ which
  erases all~$A$s:
  \[
  \begin{array}{rcl}
    e(A t) &=& e(t) \\
    e(B t) &=& B (e(t))
  \end{array}
  \]
  Formally, to make the definition of~$e$ consistent with our theory
  we should also specify~$e(t)$ for $t \in T \setminus Q_1$, but in
  this case we may simply take~$e(t)$ to be an arbitrary element
  of~$T$.

  One shows by induction that a function $e : Q_1 \to T$ satisfies the
  above equations if and only if it satisfies
  \[
  e(A \ldots A B t) = B (e(t))
  \]
  where $A$ occurs a finite number of times (possibly~$0$). But this
  definition of~$e$ is guarded, so we conclude that there exists a
  unique function $e : Q_1 \to T$ satisfying the original
  equations.\hfill$\Box$
\end{example}

\begin{example}\label{ex_mutual_coinduction}
  Define the set~$T$ of coterms coinductively:
  \[
  T \Coloneqq A(T) \parallel B(T) \parallel C(T) \parallel
  D(T) \parallel E(T)
  \]
  We define the relations~$\to_1$ and~$\to_2$ by mutual coinduction.
  \[
  \begin{array}{c}
    \infer={t \to_1 t}{} \quad\quad
    \infer={A(t) \to_1 C(s)}{t \to_2 s} \quad\quad
    \infer={B(t) \to_1 D(s)}{t \to_1 s} \\ \\
    \infer={C(t) \to_1 C(s)}{t \to_2 s} \quad\quad
    \infer={D(t) \to_1 D(s)}{t \to_1 s} \quad\quad
    \infer={E(t) \to_1 E(s)}{t \to_1 s} \\ \\
    \infer={t \to_2 t}{} \quad\quad
    \infer={A(t) \to_2 C(s)}{t \to_1 s} \quad\quad
    \infer={B(t) \to_2 E(s)}{t \to_2 s} \\ \\
    \infer={C(t) \to_2 C(s)}{t \to_1 s} \quad\quad
    \infer={D(t) \to_2 D(s)}{t \to_2 s} \quad\quad
    \infer={E(t) \to_2 E(s)}{t \to_2 s}
  \end{array}
  \]
  Intuitively, the reduction~$\to_1$ changes~$A$ to~$C$, and~$B$
  either to~$D$ or~$E$, starting with~$D$ and switching when
  encountering~$A$ or~$C$. For instance
  \[
  B(B(A(B(C(B(B(t))))))) \to_1 D(D(C(E(C(D(D(t))))))).
  \]

  Formally, the above rules define in an obvious way a
  monotone\footnote{We use the product ordering, i.e., pairs of sets
    are compared with~$\subseteq$ componentwise.} function
  \[
  F : \Pow{T \times T} \times \Pow{T \times T} \to \Pow{T\times T}
  \times \Pow{T \times T}
  \]
  such that $\la {\to_1}, {\to_2}\ra$ is the greatest fixpoint
  of~$F$. Setting $F(X,Y) = \la F_1(X,Y), F_2(X,Y)\ra$, by the
  Beki\v{c} principle (see
  e.g.~\cite[Lemma~1.4.2]{ArnoldNiwinski2001}) we have\footnote{For
    monotone~$f$ we use the notation $\nu x . f(x)$ to denote the
    greatest fixpoint of~$f$.}
  \[
  \begin{array}{rcl}
    {\to_1} &=& \nu X . F_1(X, \nu Y . F_2(X, Y)) \\
    {\to_2} &=& \nu Y . F_2(\nu X . F_1(X, Y), Y).
  \end{array}
  \]
  In other words, one may also think of~$\to_1$ as the greatest
  fixpoint of the monotone function $G : \Pow{T\times T} \to \Pow{T
    \times T}$ defined by $G(X) = F_1(X,H(X))$ where $H(X) = \nu Y
  . F_2(X, Y)$, i.e., $\nu G$ is defined by the coinductive rules
  for~$\to_1$ but instead of the premises $t \to_2 s$ we use $\la t, s
  \ra \in H(\to_1)$, and~$H(X)$ is defined by the coinductive rules
  for~$\to_2$ but with the premises $t \to_1 s$ replaced by $\la
  t,s\ra \in X$. Analogous considerations apply to the definition
  of~$\to_2$.

  We shall now give an example by showing by coinduction that if $t
  \to_i t_1$ and $t \to_i t_2$ then there is~$s$ with $t_1 \to_i s$
  and $t_2 \to_i s$, for $i=1,2$. The proof is rather
  straightforward. If $t \equiv t_1$ then we may take $s \equiv
  t_2$. If $t \equiv A(t')$, $t_1 \equiv C(t_1')$ and $t \to_1 t_1$,
  then also $t_2 \equiv C(t_2')$, $t' \to_2 t_1'$ and $t' \to_2
  t_2'$. By the coinductive hypothesis we obtain~$s'$ such that $t_1'
  \to_2 s'$ and $t_2' \to_2 s'$. Thus $t_1 \equiv C(t_1') \to_1 C(s')$
  and $t_2 \equiv C(t_2') \to_1 C(s')$, so we may take $s \equiv
  C(s')$. Other cases are similar.

  Formally, in the above proof we show the statement:
  \[
  \begin{array}{rl}
    \forall t,t_1,t_2 \in T .\, \exists s_1,s_2 \in T . & ((t \to_1 t_1 \land
    t \to_1 t_2) \To (t_1 \to_1 s_1 \land t_2 \to_1 s_1)) \land \\ & ((t \to_2
    t_1 \land t \to_2 t_2) \To (t_1 \to_2 s_2 \land t_2 \to_2 s_2))
  \end{array}
  \]
  So, after skolemizing, we actually show
  \[
  \begin{array}{rl}
    \forall t,t_1,t_2 \in T . & ((t \to_1 t_1 \land t \to_1 t_2) \To (t_1
    \to_1 f_1(t_1,t_2) \land t_2 \to_1 f_1(t_1,t_2))) \land \\ & ((t \to_2 t_1
    \land t \to_2 t_2) \To (t_1 \to_2 f_2(t_1,t_2) \land t_2 \to_2
    f_2(t_1,t_2)))
  \end{array}
  \]
  for appropriate $f_1, f_2 : T \times T \to T$. The mutually
  corecursive definitions of~$f_1$ and~$f_2$ follow from the
  proof. Formally, we define a corecursive function $f : T \times T
  \to T \times T$ such that $f(t,s) = \la f_1(t,s), f_2(t,s)\ra$ for
  $t,s\in T$. The cartesian product $T \times T$ may be treated as a
  set of coterms~$\Tc_p$ of a special sort~$p$. Then the
  projections~$\pi_1$ and~$\pi_2$ are destructors with a production
  function $\eta_d(n) = \max(0,n - 1)$. The pair-forming operator $\pi
  : T \times T \to \Tc_p$, defined by $\pi(t,s) = \la t,s\ra$, is then
  a constructor with a production function $\eta_c(n,m) = \min(n,m) +
  1$. Thus formally we have for instance $f(C(t),C(s)) = \la
  C(\pi_2(f(t,s))), C(\pi_1(f(t,s))) \ra$. Hence, strictly speaking,
  the definition of~$f$ is not guarded, but it is easily seen to be
  correct nonetheless. Indeed, each clause of the definition of~$f$
  has the form $f(c_1(t),c_2(s)) = \la c_3(\pi_i(f(t,s))),
  c_4(\pi_j(f(t,s))) \ra$, where~$c_1,c_2,c_3,c_4$ are constructors
  and $i,j \in \{1,2\}$, so the prefix production function is
  \[
  \eta(n,m) = \min(n-1+1,m-1+1) + 1 > \min(n,m)
  \]
\end{example}

The above example of mutually corecursive functions is generalized in
the following.

\begin{definition}\label{def_mutual_corecursion}
  We say that functions $f_1,\ldots,f_n : S \to Q$ are defined by
  \emph{mutual corecursion} from $h_j : S \times Q^{m_j} \to Q$ and
  $g_i^j : S \to S$, and $k_{i}^j \in \{1,\ldots,n\}$, $j=1,\ldots,n$,
  $i=1,\ldots,m_j$, if for a function~$f : S \to Q^n$ defined by
  corecursion from
  \[
  \lambda x \vec{y_1} \ldots \vec{y_n} . \la h_1(x,
  \pi_{k_{1}^1}(y_1^1), \ldots, \pi_{k_{1}^{m_1}}(y_1^{m_1})), \ldots,
  h_n(x, \pi_{k_{n}^{1}}(y_n^1), \ldots, \pi_{k_{n}^{m_n}}(y_n^{m_n})) \ra
  \]
  and~$g_i^j$ we have
  \[
  f(x) = \la f_1(x),\ldots,f_n(x) \ra
  \]
  for $x \in S$. We say that a definition by mutual corecursion is
  \emph{guarded} if each~$h_j$ is defined by cases from some
  constructor-guarded functions.
\end{definition}

It follows from our theory that a guarded mutually corecursive
definition uniquely determines the functions~$f_1,\ldots,f_n$. In
coinductive proofs, if the Skolem functions are defined by guarded
mutual corecursion then their well-definedness justifications may be
left implicit.

\clearpage
\phantomsection
\addcontentsline{toc}{section}{References}
\bibliography{biblio}{}
\bibliographystyle{plain}
\clearpage

\appendix

\section{Extending final coalgebras to sized
  CPOs}\label{sec_coalgebra}

In this section we relate our method from
Section~\ref{sec_corecursion} for defining corecursive functions to
the well-established method of finding unique morphisms into the final
coalgebra of a functor. We show a theorem which says that for every
final coalgebra in the category of sets there exists a ``canonical''
sized~CPO. The proof of this theorem is an adaptation of the
construction in~\cite[Theorem~4]{AdamekKoubek1995}. First, we need
some background on the coalgebraic approach to coinduction.

\newcommand{\Set}{\mathrm{Set}}

\subsection{Coalgebraic foundations of coinduction}

In this section we provide a brief overview of coalgebraic foundations
of coinduction. Familiarity with basic category theory is assumed, in
particular with the notions of functor, final object, cone and
limit. We consider only functors in the category of sets. For an
introduction to category theory see e.g.~\cite{Awodey2010}. For more
background on the coalgebraic approach to coinduction see
e.g.~\cite{JacobsRutten2011,Rutten2000}.

\begin{definition}
  A \emph{coalgebra} of an endofunctor $F : \Set \to \Set$, or
  \emph{$F$-coalgebra}, is a pair
  \[
  \la A, f : A \to F A \ra
  \]
  where~$A$ is the carrier set of the coalgebra. A \emph{homomorphism}
  of $F$-coalgebras $\la A, f\ra$ and $\la B, g\ra$ is a morphism $h :
  A \to B$ such that $F h \circ f = g \circ h$, i.e., the following
  diagram commutes:

  \centerline{
    \xymatrix{
      A \ar[r]_{h} \ar[d]_{f} & B \ar[d]_{g} \\
      F A \ar[r]_{F h} & F B
    }
  }

\medskip

  A \emph{final $F$-coalgebra} is a final object in the category of
  $F$-coalgebras and $F$-homomorphisms.

  The \emph{final sequence} of an endofunctor $F : \Set \to \Set$ is
  an ordinal-indexed sequence of sets $\la A_\alpha \ra_\alpha$ with
  morphisms $(w_\gamma^\beta : A_\beta \to A_\gamma)_{\gamma\le\beta}$
  uniquely defined by the conditions:
  \begin{itemize}
  \item $A_{\beta+1} = F(A_{\beta})$,
  \item $w_{\gamma+1}^{\beta+1} = F(w_\gamma^\beta)$,
  \item $w_\beta^\beta = \id$,
  \item $w_\delta^\beta = w_\delta^\gamma \circ w_\gamma^\beta$ for
    $\delta \le \gamma \le \beta$,
  \item if $\beta$ is a limit ordinal then the cone $(w_\gamma^\beta :
    A_\beta \to A_\gamma)_{\gamma<\beta}$ is the limit of the cochain
    $\la A_\gamma\ra_{\gamma<\beta}$, i.e., of the diagram $\la
    \{A_\gamma\}_{\gamma<\beta}, (w_\gamma^\delta : A_\delta \to
    A_\gamma)_{\gamma\le\delta<\beta} \ra$.
  \end{itemize}
\end{definition}

It follows by transfinite induction that the final sequence is indeed
well-defined by the given conditions. See e.g.~\cite{Worrell2005} for
the (easy) proof.

The following two theorems were shown by Adamek and Koubek
in~\cite{AdamekKoubek1995}.

\begin{theorem}\label{thm_final_sequence_coalgebra}
  Suppose the final sequence~$\la A_\alpha\ra_\alpha$ of~$F$
  stabilizes at~$\zeta$, i.e., $w_\zeta^{\zeta+1}$ is an isomorphism.
  Then $\la A_\zeta, (w_\zeta^{\zeta+1})^{-1} \ra$ is a final
  $F$-coalgebra.
\end{theorem}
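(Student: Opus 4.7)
The plan is to construct, for every $F$-coalgebra $\langle B, g\rangle$, a unique homomorphism $h : B \to A_\zeta$ into the putative final coalgebra $\langle A_\zeta, (w_\zeta^{\zeta+1})^{-1}\rangle$. The candidate $h$ will be $h_\zeta$ from a canonical cone $(h_\alpha : B \to A_\alpha)_\alpha$ defined by transfinite recursion on the final sequence. Note first that $(w_\zeta^{\zeta+1})^{-1} : A_\zeta \to A_{\zeta+1} = F(A_\zeta)$ is a well-defined coalgebra structure, since $w_\zeta^{\zeta+1}$ is by hypothesis an isomorphism.

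Given $\langle B, g : B \to F B\rangle$, I would define $h_\alpha : B \to A_\alpha$ by transfinite recursion: $h_0$ is the unique map to the terminal $A_0 = 1$; $h_{\alpha+1} := F(h_\alpha) \circ g$; and for a limit $\lambda$, the cone $(h_\alpha)_{\alpha < \lambda}$ into the cochain $(A_\alpha)_{\alpha < \lambda}$ (which I verify simultaneously is a cone) factors uniquely through the limit $A_\lambda$, defining $h_\lambda$. Simultaneously one verifies by transfinite induction the cone property
\[
w_\gamma^\beta \circ h_\beta = h_\gamma \quad (\gamma \le \beta).
\]
The successor case uses naturality of $w$ and the defining equations; the limit case is immediate from the universal property of the limit and the inductive hypothesis at strictly smaller stages.

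Set $h := h_\zeta$. To verify $h$ is a coalgebra homomorphism, observe that by definition $h_{\zeta+1} = F(h) \circ g$, and by the cone property $w_\zeta^{\zeta+1} \circ h_{\zeta+1} = h_\zeta = h$. Since $w_\zeta^{\zeta+1}$ is invertible, these combine to
\[
F(h) \circ g = (w_\zeta^{\zeta+1})^{-1} \circ h,
\]
which is precisely the homomorphism condition.

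For uniqueness, suppose $h' : B \to A_\zeta$ is any coalgebra homomorphism. Define $h'_\alpha := w_\alpha^\zeta \circ h'$ for $\alpha \le \zeta$, and $h'_{\zeta+1} := F(h') \circ g$. I would then show by transfinite induction that $h'_\alpha = h_\alpha$ for all $\alpha \le \zeta$, which gives $h' = h_\zeta = h$. The base and limit cases follow from the uniqueness of the mediating morphism into terminal objects and limits, respectively. The main obstacle — and the only place the homomorphism assumption on $h'$ enters — is the successor case: one must argue that $h'_{\alpha+1} = F(h'_\alpha) \circ g$, using that $h'_\alpha = w_\alpha^\zeta \circ h'$ and $(w_\zeta^{\zeta+1})^{-1} \circ h' = F(h') \circ g$, together with functoriality and the naturality identities $F(w_\alpha^\zeta) = w_{\alpha+1}^{\zeta+1}$ and $w_\alpha^{\alpha+1} \circ w_{\alpha+1}^{\zeta+1} = w_\alpha^\zeta \circ w_\zeta^{\zeta+1}$ that define the final sequence. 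This is the most delicate bookkeeping step, but it is a direct diagram chase once the cone/final-sequence identities are unpacked.
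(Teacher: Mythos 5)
Your argument is correct, and it is the standard proof of this result; the paper itself does not prove the theorem but only cites Ad\'amek--Koubek, so there is no in-paper proof to diverge from. All the essential points are present: the transfinite construction of the cone $(h_\alpha)$ with $h_{\alpha+1} = F(h_\alpha)\circ g$ and mediating maps at limits, the derivation of the homomorphism equation from $w_\zeta^{\zeta+1}\circ h_{\zeta+1}=h_\zeta$ together with invertibility of $w_\zeta^{\zeta+1}$, and the uniqueness argument via $h'_\alpha := w_\alpha^\zeta\circ h'$, where you correctly isolate the successor step as the one place the homomorphism hypothesis on $h'$ is used (via $h' = w_\zeta^{\zeta+1}\circ F(h')\circ g$ and $F(w_\alpha^\zeta)=w_{\alpha+1}^{\zeta+1}$). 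The remaining bookkeeping you defer (e.g.\ checking $w_\gamma^{\beta+1}\circ h_{\beta+1}=h_\gamma$ when $\gamma$ is a limit, which reduces to $w_\beta^{\beta+1}\circ h_{\beta+1}=h_\beta$ composed with the inductive hypothesis) is indeed routine.
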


\begin{theorem}\label{thm_final_sequence_stabilizes}
  If a set-functor has a final coalgebra, then its final sequence
  stabilizes.
\end{theorem}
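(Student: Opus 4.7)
The plan is to exploit the finality of the given coalgebra in order to build a canonical family of maps into the final sequence, and then force stabilization by a pigeonhole argument on the induced equivalence relations on the (fixed-sized) carrier of the final coalgebra.

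First, let $\la Z, \zeta : Z \to FZ\ra$ be the final $F$-coalgebra. I would construct by transfinite recursion a family of maps $v_\alpha : Z \to A_\alpha$: take $v_0$ to be the unique map into $A_0$ (the singleton); at successors, set $v_{\alpha+1} := F(v_\alpha) \circ \zeta$; and at limit ordinals $\lambda$, take $v_\lambda$ to be the unique mediating map arising from the universal property of $A_\lambda$ as the limit of $\la A_\gamma\ra_{\gamma<\lambda}$, together with the cone $(v_\gamma)_{\gamma<\lambda}$. A routine transfinite induction then establishes the cone equations $w_\gamma^\beta \circ v_\beta = v_\gamma$ for $\gamma \le \beta$.

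Second, consider the kernel equivalence relations $\sim_\alpha \subseteq Z\times Z$ of $v_\alpha$. The cone equations force $\sim_\beta \subseteq \sim_\alpha$ whenever $\alpha \le \beta$, i.e., the chain $(\sim_\alpha)_\alpha$ monotonically refines. Since the collection of equivalence relations on the fixed set $Z$ is itself a set, bounded in cardinality by $2^{|Z|^2}$, this chain must stabilize: there is an ordinal $\zeta$ with $\sim_\alpha = \sim_\zeta$ for all $\alpha \ge \zeta$.

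The main obstacle is to promote this stabilization of kernels to stabilization of the final sequence itself, i.e., to show $w_\zeta^{\zeta+1} : A_{\zeta+1} \to A_\zeta$ is actually an isomorphism; a priori each $v_\alpha$ could land in only a small part of $A_\alpha$, so kernel stabilization alone is not enough. My plan for this step is to invoke Lambek's lemma, which forces $\zeta$ to be a bijection and thereby supplies sections running back up the final sequence. Iterating $\zeta^{-1}$, one shows that each $v_\alpha$ factors as a surjection onto $A_\alpha$ composed with a quotient of $Z$ by $\sim_\alpha$; once the kernels and images both stabilize at $\zeta$, the maps $v_\zeta$ and $v_{\zeta+1}$ present $A_\zeta$ and $A_{\zeta+1}$ as the same quotient of $Z$. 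The compatibility $w_\zeta^{\zeta+1} \circ v_{\zeta+1} = v_\zeta$ then immediately upgrades $w_\zeta^{\zeta+1}$ to a bijection, completing the proof and, via Theorem~\ref{thm_final_sequence_coalgebra}, recovering $Z$ as $A_\zeta$.
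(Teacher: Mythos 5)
The paper does not give a proof of this theorem at all: it is imported verbatim from Ad\'amek and Koubek \cite{AdamekKoubek1995}, so there is no in-paper argument to compare yours against. Judged on its own, your first two steps are sound and are indeed how the standard argument begins: the canonical cone $v_\alpha : Z \to A_\alpha$ exists, satisfies the cone equations, and its kernels $\sim_\alpha$ form a decreasing ordinal-indexed chain of equivalence relations on the fixed set $Z$, which must therefore stabilize.

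The gap is in your third step, which is exactly the point you flag as the main obstacle. Kernel stabilization only tells you that $w_\zeta^{\zeta+1}$ restricts to a bijection from $\mathrm{im}(v_{\zeta+1})$ onto $\mathrm{im}(v_\zeta)$; it says nothing about the complements of these images, and the images need not exhaust $A_{\zeta+1}$ and $A_\zeta$. Your proposed repair --- that Lambek's lemma and ``iterating'' the inverse of the structure map make each $v_\alpha$ surjective --- does not work: Lambek's lemma concerns the structure map $Z \to FZ$ and yields no maps from $A_\alpha$ back to $Z$, and surjectivity of $v_\alpha$ is simply false in general. The finite powerset functor $\mathcal{P}_{\mathrm{f}}$ is a concrete counterexample: it has a final coalgebra $Z$ (the finitely branching strongly extensional trees), the kernels of the $v_n$ stabilize already at $\omega$ (indeed $v_\omega$ is injective, by the Hennessy--Milner property of image-finite systems), yet $w_\omega^{\omega+1} : A_{\omega+1} \to A_\omega$ is injective but \emph{not} surjective --- $A_\omega$ contains ``unrealizable'' compatible sequences, e.g.\ the one approximating the infinitely branching set $\{0,1,2,\ldots\}$ --- and the final sequence only stabilizes at $\omega+\omega$. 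Your argument would wrongly conclude stabilization at $\omega$ in that case. Controlling the part of $A_\alpha$ lying outside the image of $v_\alpha$ is precisely the hard core of Ad\'amek and Koubek's proof and requires genuinely different, set-functor-specific machinery, so the proposal as it stands does not establish the theorem.
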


\subsection{The theorem}

The following theorem shows that for every final coalgebra in the
category of sets there exists a ``canonical'' sized CPO. Moreover, it
is always, in principle, possible to define any morphism into the
final coalgebra as a unique fixpoint of an appropriate monotone
function. This shows that the method of defining corecursive functions
as fixpoints of monotone functions, using an underlying sized~CPO, is
fairly general. The construction in
Theorem~\ref{thm_final_coalgebra_cpo} is an adaptation of the
construction in~\cite[Theorem~4]{AdamekKoubek1995}.

\begin{theorem}\label{thm_final_coalgebra_cpo}
  Let $\la A, t \ra$ be the final coalgebra for a
  set-functor~$T$. There exists a sized~CPO $\la \Ab, \zeta, s, \tcut
  \ra$ with $\Max(\Ab) = A$, such that for any set~$S$ and any
  function $f : S \to T S$, the unique morphism $u : S \to A$ from~$f$
  into the final coalgebra~$\la A, t\ra$ is the unique fixpoint of
  some monotone function $F : \Ab^S \to \Ab^S$ satisfying
  \begin{equation}\label{eq_min}
    \min_{x\in S}s(F(g)(x)) > \min_{x \in S} s(g(x))
  \end{equation}
  for non-maximal $g \in \Ab^S$.
\end{theorem}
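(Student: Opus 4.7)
The plan is to build the sized CPO from the final sequence of $T$, adapting the construction used in the proof of Theorem~4 of~\cite{AdamekKoubek1995}. By Theorem~\ref{thm_final_sequence_stabilizes} the final sequence $\la A_\alpha\ra_\alpha$ of~$T$ stabilises at some ordinal~$\zeta$, and by Theorem~\ref{thm_final_sequence_coalgebra} we may identify $A$ with $A_\zeta$ and $t$ with $(w_\zeta^{\zeta+1})^{-1}$. I would set
\[
\Ab = \{ (\alpha, a) \mid \alpha \le \zeta,\ a \in A_\alpha \},
\]
ordered by $(\alpha, a) \qle (\beta, b)$ iff $\alpha \le \beta$ and $w_\alpha^\beta(b) = a$, with bottom $(0, *)$ (the unique element of $A_0 = 1$). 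Directed joins exist because, for a directed $D \subseteq \Ab$ with $\beta = \sup_{d\in D} s(d)$ not attained in $D$, the compatible projections of elements of $D$ to each $A_\gamma$ with $\gamma < \beta$ form a cone over the cochain $(A_\gamma)_{\gamma<\beta}$, and by the limit property of $A_\beta$ they determine a unique element of $A_\beta$. Setting $s(\alpha, a) = \alpha$ and $\tcut(\alpha,(\beta,b)) = (\alpha, w_\alpha^\beta(b))$ when $\alpha \le \beta$ (else $(\beta,b)$), the sized CPO axioms all follow routinely from compositionality of the $w$-maps; in particular $\Max(\Ab) = A$ because $w_\zeta^{\zeta+1}$ is an isomorphism.

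Given $f : S \to TS$ and $g \in \Ab^S$, let $\alpha = \min_{x \in S} s(g(x))$ and let $\bar g : S \to A_\alpha$ be the componentwise projection of $g$ to level~$\alpha$, i.e., $\bar g(y) = w_\alpha^{\alpha_{g,y}}(a_y)$ when $g(y) = (\alpha_{g,y}, a_y)$. I define
\[
F(g)(x) = \begin{cases} (\alpha + 1,\ T(\bar g)(f(x))) & \text{if } \alpha < \zeta, \\
                         (\zeta,\ w_\zeta^{\zeta+1}(T(\bar g)(f(x)))) & \text{if } \alpha = \zeta.
\end{cases}
\]
Inequality~\eqref{eq_min} is immediate, since a non-maximal $g$ has $\alpha < \zeta$ and then $\min_x s(F(g)(x)) = \alpha + 1$. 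Uniqueness of the fixpoint is clean: by~\eqref{eq_min} any fixpoint must be maximal, hence lie in $A^S$, and then the fixpoint equation $g(x) = w_\zeta^{\zeta+1}(T(g)(f(x)))$ unfolds to $t \circ g = T(g) \circ f$, which by finality of~$\la A, t\ra$ forces $g = u$.

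The main obstacle will be monotonicity of~$F$. The key computation, for $g \qle h$ in $\Ab^S$ with respective minima $\alpha_g \le \alpha_h$, is to prove $w_{\alpha_g}^{\alpha_h} \circ \bar h = \bar g$; this unwinds to the compatibility identities $a_y = w_{\alpha_{g,y}}^{\alpha_{h,y}}(b_y)$ hidden in each $g(y) \qle h(y)$. Combined with $w_{\alpha_g + 1}^{\alpha_h + 1} = T(w_{\alpha_g}^{\alpha_h})$ and functoriality of~$T$, this yields $F(g)(x) \qle F(h)(x)$ when both minima are below~$\zeta$. The boundary case $\alpha_g < \alpha_h = \zeta$ then requires the extra identity $w_{\alpha_g + 1}^\zeta \circ w_\zeta^{\zeta+1} = w_{\alpha_g + 1}^{\zeta+1} = T(w_{\alpha_g}^\zeta)$ to absorb the $w_\zeta^{\zeta+1}$ factor appearing in $F(h)$. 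A secondary technicality is to ensure that every $A_\alpha$ with $\alpha \le \zeta$ is non-empty so that $s$ is surjective; this holds automatically when $T$-iterates preserve non-emptiness, and otherwise can be arranged by restricting the size domain to the attained levels.
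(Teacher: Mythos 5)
Your proposal is correct and follows essentially the same route as the paper: the same coproduct-of-levels CPO built from the final sequence with the projection order, the same size and cut functions, the same two-case definition of $F$ via $T$ applied to the projection of $g$ to its minimal level, and the same monotonicity and finality arguments. The only cosmetic difference is that you obtain directed suprema from a cone with a one-point vertex, where the paper uses the slightly more elaborate vertex $A_\alpha \cup \{a\}$; both rest on the same limit property of the final sequence.
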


\begin{proof}
  Let $\la A_\alpha \ra_\alpha$ with $(w_\beta^\alpha : A_\alpha \to
  A_\beta)_{\beta\le\alpha}$ be the final sequence of~$T$. Since~$T$
  has a final coalgebra, by
  Theorem~\ref{thm_final_sequence_stabilizes} the final sequence
  stabilizes at some ordinal~$\zeta$. By
  Theorem~\ref{thm_final_sequence_coalgebra} we may assume without
  loss of generality that $\la A, t \ra = \la A_\zeta,
  (w_\zeta^{\zeta+1})^{-1} \ra$ (otherwise we just need to compose
  some morphisms below with the isomorphism between~$A$
  and~$A_\zeta$). Without loss of generality we may identify~$A$ with
  $\{\zeta\} \times A$ (otherwise the definition of~$\Ab$ below just
  needs to be complicated slightly by taking the carrier set to be
  e.g.~$A \cup (\{A\} \times \amalg_{\alpha<\zeta}A_\alpha)$ and
  adjusting the definition of~$\qle$ accordingly). If~$p$ is a pair,
  then by~$p_1$ we denote the first and by~$p_2$ the second component
  of~$p$. Take $\Ab = \la \amalg_{\alpha\le\zeta}A_\alpha, \qle \ra$
  with $p \qle q$ iff $p_1 \le q_1$ and $w_{p_1}^{q_1}(q_2) = p_2$. It
  follows from the definition of the final sequence of an endofunctor
  that~$\qle$ is a partial order.

  We show that~$\Ab$ is a~CPO. The bottom of~$\Ab$ is $\la 0, \bot
  \ra$ where~$\bot$ is the sole element of~$A_0$. Let $D \subseteq
  \Ab$ be a directed set. First, we show that~$D$ is in fact a
  chain. Let $p,q \in D$ with $p_1 \le q_1$. Because~$D$ is directed
  there is $r \in D$ with $p,q \qle r$, i.e., $p_1 \le q_1 \le r_1$,
  $w_{q_1}^{r_1}(r_2) = q_2$ and $w_{p_1}^{r_1}(r_2) = p_2$. Because
  $w_{p_1}^{r_1} = w_{p_1}^{q_1} \circ w_{q_1}^{r_1}$ we have
  $w_{p_1}^{q_1}(q_2) = w_{p_1}^{q_1}(w_{q_1}^{r_1}(r_2)) =
  w_{p_1}^{r_1}(r_2) = p_2$. Hence $p \qle q$.

  Let~$\alpha$ be the least upper bound of $D_1 = \{p_1 \mid p \in
  D\}$. If there is $p \in D$ with $p_1 = \alpha$, i.e., $\alpha \in
  D_1$ is the largest element of~$D_1$, then~$p$ is the largest
  element of~$D$, and thus the supremum. Indeed, let $q \in
  D$. Since~$D$ is a chain, $q \qle p$ or $p \qle q$. If $p \qle q$
  then $q_1 = \alpha$, because~$p_1=\alpha$ is the largest element
  of~$D_1$. But this implies $q = p$, because $w_\alpha^\alpha = \id$.

  So assume $\alpha \notin D_1$. Then~$\alpha$ must be a limit
  ordinal. So the cone $C = (w_\beta^\alpha : A_\alpha \to
  A_\beta)_{\beta<\alpha}$ is the limit of the cochain $\la
  A_\beta\ra_{\beta<\alpha}$. Let $A_\alpha' = A_\alpha \cup \{a\}$
  where $a \notin A_\alpha$. We define functions $f_\beta^\alpha :
  A_\alpha' \to A_\beta$ for $\beta<\alpha$ as follows: $f_\beta(x) =
  w_\beta^\alpha(x)$ if $x \ne a$, and $f_\beta(a) =
  w_\beta^\gamma(z_2^\beta)$ for the element~$z^\beta \in D$ such that
  $z_1^\beta = \gamma \ge \beta$ is smallest in $\{ \gamma \in D_1
  \mid \gamma \ge \beta \}$. The element~$z^\beta$ is uniquely
  defined, because distinct elements of~$\Ab$ with the same first
  components are pairwise incomparable, and~$D$ is a chain with
  elements with first components arbitrarily close to~$\alpha$, and
  $\beta < \alpha$. We show that $(f_\beta : A_\alpha' \to A_\beta)$
  is a cone over the cochain $\la A_\beta\ra_{\beta<\alpha}$, i.e.,
  over the diagram \mbox{$\la \{A_\beta\}_{\beta<\alpha},
    (w_\beta^\gamma : A_\gamma \to A_\beta)_{\beta\le\gamma<\alpha}
    \ra$}. Let $\gamma \ge \beta$. We have
  $w_\beta^\gamma(f_\gamma(a)) = w_\beta^{\gamma_1}(z_2^{\gamma})$
  where $\gamma_1 \ge \gamma$ and~$z_2^{\gamma}$ are such that
  $f_\gamma(a) = w_\gamma^{\gamma_1}(z_2^{\gamma})$. Let $\beta_1 \ge
  \beta$ be such that $f_\beta(a) =
  w_\beta^{\beta_1}(z_2^{\beta})$. Then $\beta_1 \le \gamma_1$, so
  $z_2^{\beta} \qle z_2^{\gamma}$, because~$D$ is a chain. Thus
  $w_{\beta_1}^{\gamma_1}(z_2^{\gamma}) = z_2^{\beta}$, so
  $w_{\beta}^{\gamma_1}(z_2^{\gamma}) =
  w_{\beta}^{\beta_1}(z_2^{\beta})$. Hence
  $w_\beta^\gamma(f_\gamma(a)) = w_\beta^{\gamma_1}(z_2^{\gamma}) =
  w_\beta^{\beta_1}(z_2^{\beta}) = f_\beta(a)$. For $x \in A_\alpha$
  the condition $f_\beta(x) = w_\beta^\gamma(f_\gamma(x))$ follows
  directly from definitions. Therefore \mbox{$(f_\beta : A_\alpha' \to
    A_\beta)_{\beta<\alpha}$} is a cone, and since~$C$ is the limit,
  there exists a unique $u : A_\alpha' \to A_\alpha$ such that
  $f_\beta = w_\beta^\alpha \circ u$ for $\beta < \alpha$. We show
  that $\bar{a} = \la \alpha, u(a) \ra$ is the supremum of~$D$. To
  prove that~$\bar{a}$ is an upper bound, it suffices to show that if
  $d \in D$ then $w_{d_1}^\alpha(u(a)) = d_2$. But this holds because
  $w_{d_1}^\alpha(u(a)) = f_{d_1}(a) = d_2$. So suppose~$\bar{b}$ is
  also an upper bound. Then so is $\la \alpha,
  w_\alpha^{\bar{b}_1}(\bar{b}_2) \ra$, hence we may assume $\bar{b}_1
  = \alpha$. Define $u' : A_\alpha' \to A_\alpha$ by: $u'(x) = u(x)$
  if $x \ne a$, and $u'(a) = \bar{b}_2$. Since $w_{d_1}^\alpha(u'(a))
  = d_2$ for $d \in D$, we have $f_\beta(a) =
  w_\beta^{\gamma}(z_2^\beta) =
  w_\beta^{\gamma}(w_{\gamma}^\alpha(u'(a))) = w_\beta^\alpha(u'(a))$
  for $\beta < \alpha$, where $\gamma = z_1^\beta$. This implies
  $f_\beta = w_\beta^\alpha \circ u'$ for $\beta < \alpha$. Thus $u' =
  u$, because $u : A_\alpha' \to A_\alpha$ is unique such that
  $f_\beta = w_\beta^\alpha \circ u$ for $\beta < \alpha$. Hence
  $\bar{b} = \bar{a}$. So~$\bar{a}$ is the supremum of~$D$. Therefore,
  $\Ab$ is a~CPO.

  It is clear that $\Max(\Ab) = A (= \{\zeta\} \times A)$. The size
  function $s : \Ab \to \On(\zeta)$ is defined by $s(x) = x_1$ for $x
  \in \Ab$. It is obviously surjective. That~$s$ is continuous follows
  from the construction of supremums we have given in the previous
  paragraph. Of course, $s(x) = \zeta$ iff $x \in \Ab$ is maximal. The
  cut-function $\tcut : \On(\zeta) \times \Ab \to \Ab$ is defined by:
  \begin{itemize}
  \item $\tcut(\alpha,x) = \la \alpha, w_\alpha^{x_1}(x_2) \ra$ if
    $x_1 \ge \alpha$,
  \item $\tcut(\alpha,x) = x$ otherwise.
  \end{itemize}
  It follows from definitions that~$\tcut$ is monotone in both
  arguments. Therefore, $\la \Ab, \zeta, s, \tcut\ra$ is a sized~CPO
  with $\Max(\Ab) = A$. To save on notation, from now on we confuse $x
  \in \Ab$ with~$x_2$, using~$s(x)$ to denote the first component.

  Let~$S$ be a set and let $f : S \to T S$. Suppose $u : S \to A$ is
  the unique morphism from~$f$ into the final coalgebra $\la A,
  t\ra$. For $g : S \to \Ab$ define $m(g) = \min_{x \in S} s(g(x))$,
  and define $g^* : S \to \Ab$ by $g^*(x) = w_{m(g)}^{s(g(x))}(g(x))$
  for $x \in \Ab$. Note that $g^* : S \to A_{m(g)}$, so $T g^* : T S
  \to A_{m(g)+1}$, and if $m(g) = \zeta$ then $g^* = g$. Let $F :
  \Ab^S \to \Ab^S$ be defined by
  \[
  F(g) =
  \left\{
    \begin{array}{cl}
      T g^* \circ f & \text{ if } m(g) < \zeta \\
      t^{-1} \circ T g \circ f & \text{ otherwise }
    \end{array}
  \right.
  \]
  for $g \in \Ab^S$. For non-maximal $g \in \Ab^S$ we have $m(g) <
  \zeta$, and thus
  \[
  \min_{x \in S}F(g)(x) = \min_{x\in S} T g^*(f(x)) = m(g) + 1 > m(g)
  = \min_{x \in S}(s(g(x)))
  \]
  so~\eqref{eq_min} is satisfied. We show that~$F$ is monotone. So let
  $g,h \in \Ab^S$ with $g \qle h$, i.e., $g(x) \qle h(x)$ for all $x
  \in S$. Then $m(g) \le m(h)$. We may assume $m(g) < \zeta$, because
  if $m(g) = m(h) = \zeta$ then $g = h$. We have $g^*(x) \qle h^*(x)$
  for all $x \in S$. Indeed, for $x \in S$ we have $g(x) =
  w_{s(g(x))}^{s(h(x))}(h(x))$ and thus
  \[
  \begin{array}{rcl}
    g^*(x) &=& w_{m(g)}^{s(g(x))}(g(x)) \\
    &=& w_{m(g)}^{s(g(x))}(w_{s(g(x))}^{s(h(x))}(h(x))) \\
    &=& w_{m(g)}^{s(h(x))}(h(x)) \\
    &=& w_{m(g)}^{m(h)}(w_{m(h)}^{s(g(x))}(h(x))) \\
    &=& w_{m(g)}^{m(h)}(h^*(x)).
  \end{array}
  \]
  So $g^* = w_{m(g)}^{m(h)} \circ h^*$, and hence $T g^* = T
  w_{m(g)}^{m(h)} \circ T h^*$. We have $T w_{m(g)}^{m(h)} =
  w_{m(g)+1}^{m(h)+1}$, so
  \[
  F(g) = T g^* \circ f = T w_{m(g)}^{m(h)} \circ T h^* \circ f
  = w_{m(g)+1}^{m(h)+1} \circ T h^* \circ f.
  \]
  If $m(h) < \zeta$ then this implies $F(g) = w_{m(g)+1}^{m(h)+1}
  \circ F(h)$, so $F(g) \qle F(h)$. If $m(h) = \zeta$ then $F(g) =
  w_{m(g)+1}^{\zeta+1} \circ T h \circ f = w_{m(g)+1}^\zeta \circ
  w_\zeta^{\zeta+1} \circ T h \circ f = w_{m(g)+1}^\zeta \circ F(h)$
  because $t^{-1} = w_\zeta^{\zeta+1}$. So then also $F(g) \qle
  F(h)$. Therefore~$F$ is monotone.

  It remains to show that~$u$ is the unique fixpoint of~$F$. Let~$v$
  be a fixpoint of~$F$. By~\eqref{eq_min} we must have $v \in
  A^S$. Then $F(v) = t^{-1} \circ T v \circ f$, so $t^{-1} \circ T v
  \circ f = v$. This implies $T v \circ f = t \circ v$, so~$v$ is a
  morphism from the coalgebra $\la S, f \ra$ into the final coalgebra
  $\la A, t \ra$. Therefore $v = u$.
\end{proof}

\end{document}